\title{Model Checking as Program Verification by Abstract Interpretation}
\author{Paolo Baldan}
{Department of Mathematics, University of Padua, Italy}
{baldan@math.unipd.it}
{https://orcid.org/0000-0001-9357-5599}{}
 \author{Roberto Bruni}
 {Department of Computer Science, University of Pisa, Italy}
 {bruni@di.unipi.it}
 {https://orcid.org/0000-0002-7771-4154}{}
 \author{Francesco Ranzato}
 {Department of Mathematics, University of Padua, Italy}
 {ranzato@math.unipd.it}
 {https://orcid.org/0000-0003-0159-0068}{}
 \author{Diletta Rigo}
 {Department of Mathematics, University of Padua, Italy}
 {diletta.rigo@math.unipd.it}
 {https://orcid.org/0009-0008-4473-3202}{}
\authorrunning{P. Baldan, R. Bruni, F. Ranzato and D. Rigo} %
\keywords{ACTL, $\mu$-calculus, model checking, abstract interpretation,
  program analysis, local completeness, abstract interpretation repair,
domain refinement, Kleene algebra with tests} %
\renewcommand{\frame}[1]{ \langle #1 \rangle}
\newcommand{\concat}{ \!\mathop{::}\! }
\newcommand{\set}[1]{\left \{ #1 \right \}}
\newcommand{\N}{\mathbb{N}}
\newcommand{\ACTL}{{\rm \text{ACTL }}}
\newcommand{\class}[2]{[#1]_{#2}}
\newcommand{\sem}[2][\eta]{\llbracket {#2} \rrbracket_{#1}}
\newcommand{\bsem}[1]{\left ( \!| #1|\! \right )}
\newcommand{\cstrut}{\rule[0.3\baselineskip]{0pt}{2pt}}
\newcommand{\PtoR}[1]{\PtoRmap{\overline{#1\cstrut}\,}}
\newcommand{\PtoRmap}[1]{\lfloor #1 \rceil}
\newcommand{\PtoE}[1]{\PtoRmap{#1\cstrut\,}}
\newcommand{\PtoRn}[1]{\PtoRmapn{\overline{#1\cstrut}\,}} %
\newcommand{\PtoRmapn}[1]{\llfloor #1 \rrceil}
\newcommand{\AX}{\mathsf{AX}~}
\newcommand{\AF}{\mathsf{AF}~}
\newcommand{\AG}{\mathsf{AG}~}
\newcommand{\AU}{~\mathsf{AU}~}
\newcommand{\trel}{\mathop{\rightarrowtriangle}}
\newcommand{\MOKA}{\mathsf{MOKA}}
\newcommand{\kaf}{\mathsf{KAF}}
\newcommand{\kat}{\mathsf{KAT}}
\newcommand{\e}{\mathsf{e}}
\newcommand{\one}{\mathsf{1}}
\newcommand{\zero}{\mathsf{0}}
\newcommand{\X}{\mathsf{X}}
\newcommand{\Y}{\mathsf{Y}}
\newcommand{\Z}{\mathsf{Z}}
\renewcommand{\r}{\mathsf{r}}
\newcommand{\s}{\mathsf{s}}
\renewcommand{\a}{a}
\renewcommand{\b}{b}
\newcommand{\p}{\mathsf{p?}}
\newcommand{\ps}{\mathsf{p}}
\newcommand{\np}{\mathsf{\neg p?}}
\newcommand{\loops}{\mathsf{loop?}}
\newcommand{\nloops}{\mathsf{\neg loop?}}
\newcommand{\nexte}{\mathsf{next}}
\newcommand{\add}{\mathsf{add}}
\newcommand{\push}{\mathsf{push}}
\newcommand{\pop}{\mathsf{pop}}
\newcommand{\reset}{\mathsf{reset}}
\newcommand{\Igs}{\mathbb{Z}}
\newcommand{\ev}{\Igs_k}
\newcommand{\ssim}{{\scalebox{0.5}{$\sim$}}}
\newcommand{\ssimplus}{{\scalebox{0.5}{$\sim_+$}}}
\newcommand{\aapprox}{{\scalebox{0.5}{$\approx$}}}
\newcommand{\As}{A^s_{\ssim}}
\newcommand{\Aa}{A^s_{\aapprox}}
\newcommand{\subst}[3]{\ensuremath{{#1}[{#3}/{#2}]}}
\def\defemb#1#2{\expandafter\def\csname #1\endcsname
                              {\relax\ifmmode #2\else\hbox{$#2$}\fi}}
\def\2c-math#1#2{{\par\medskip\noindent ${#1}$
                      \par\smallskip
                        \noindent\hspace*{\fill} ${#2}$}
                           \\[10pt]}
\def\part{P}
\newcommand{\nat}{\mathbb{N}}
\newcommand{\cel}{c}
\newcommand{\ela}{a}
\DeclareMathOperator{\true}{\textbf{tt}}
\DeclareMathOperator{\false}{\textbf{ff}}
\newcommand{\Var}{\ensuremath{\mathit{Var}}}
\newcommand{\dirIm}[1]{\ensuremath{#1^{>}}}
\newcommand{\invIm}[1]{\ensuremath{#1^{<}}}
\newcommand{\cone}[1]{\ensuremath{\downarrow{#1}}}
\newcommand{\pow}[1]{\ensuremath{\mathcal{P}({#1})}}
\newcommand{\poweq}[2]{\ensuremath{\pow{{#1}}_{#2}}}
\newcommand{\RVar}{\ensuremath{\mathsf{Var}}}
\newcommand{\LVar}{\ensuremath{\operatorname{Var}}}
\newcommand{\TVar}{\ensuremath{\mathit{Var}}}
\newcommand{\supp}[1]{\ensuremath{\mathit{supp}({#1})}}
\newcommand{\str}[1]{\ensuremath{#1^+}}
\newcommand{\bvar}[1]{\ensuremath{\mathit{bv}({#1})}}
\definecolor{darkgreen}{rgb}{0.0, 0.5, 0.0} %
\begin{document}

\maketitle

\begin{abstract}
  Abstract interpretation offers a powerful toolset for static
  analysis, tackling precision, complexity and state-explosion issues.  
  In the literature, state partitioning abstractions
  based on (bi)simulation and property-preserving state relations have
  been successfully applied to abstract model checking.  Here, we
  pursue a different track in which model checking is seen as an
  instance of program verification.  To this purpose, we introduce a
  suitable language---called $\MOKA$ (for $\mathsf{MO}$del
checking as abstract interpretation of $\mathsf{K}$leene $\mathsf{A}$lgebras)---which is used to encode temporal formulae as programs.  In
  particular, we show that (universal fragments of) temporal logics, such as ACTL or, more generally,
  universal $\mu$-calculus %
  can be
  transformed into $\MOKA$ programs.  
  Such programs return
  all and only the initial states which violate the formula.
  By applying abstract interpretation to $\MOKA$
  programs, we pave the way for reusing more general
  abstractions than partitions as
  well as for tuning the precision of the abstraction to
  remove or avoid false alarms.
  We show how to perform
  model checking via a program logic
  that combines under-approximation and abstract interpretation
  analysis to avoid false
  alarms.  The notion of locally complete abstraction is used to
  dynamically improve the  analysis precision via
  counterexample-guided domain refinement.
\end{abstract}

\section{Introduction}
Abstraction is a fundamental craft for mastering complexity.
In model checking, abstraction-guided space reduction allows to
mitigate the well-known state explosion problem~\cite{DBLP:conf/laser/ClarkeKNZ11}.
Abstract interpretation~\cite{cousot21,DBLP:conf/popl/CousotC77} is
the de facto standard framework for designing sound analyses.
The idea of applying abstract interpretation to model checking has
been widely investigated, e.g.,
\cite{BallKY05,DBLP:reference/mc/DamsG18,DBLP:conf/tacas/ShohamG04,DBLP:reference/mc/JhalaPR18,DBLP:conf/cav/GuptaS05,DBLP:conf/cav/GrafS97,DBLP:journals/toplas/ClarkeGL94,DBLP:journals/tocl/BustanG03,DBLP:conf/tacas/BallPR01,DBLP:journals/logcom/RanzatoT07,DBLP:conf/sas/CousotGR07,DBLP:conf/cav/ClarkeGJLV00,DBLP:journals/jacm/ClarkeGJLV03,DBLP:conf/sas/SchmidtS98,DBLP:conf/sas/GiacobazziQ01,DBLP:journals/iandc/GiacobazziR06,DBLP:conf/esop/Ranzato01,DBLP:journals/toplas/DamsGG97,DBLP:journals/ase/CousotC99,DBLP:conf/sas/Masse02,DBLP:conf/sas/RanzatoT02,DBLP:conf/popl/CousotC00,DBLP:journals/iandc/GrumbergLLS07,DBLP:conf/concur/Baldan0P20,DBLP:journals/fmsd/LoiseauxGSBB95,schmidt,DBLP:conf/lics/LarsenT88,DBLP:conf/lpar/BandaG10,NielsonN10,ZhangNN12}
(\S~\ref{sec:rw} accounts for more closely related work).
These approaches often rely on over-approximations
that preserve some logical properties, like
state partitioning abstractions or simulation-preserving
relations.

\subparagraph{Contribution.} 
In this work, we pursue a different approach, which consists in
recasting model checking of temporal formulae directly in terms of
program verification, for which the whole tool-suite of abstract
interpretation is readily available.
To this aim, we exploit an instance of a Kleene algebra
with tests and fixpoints ($\kaf$)~\cite{DBLP:conf/csl/Leiss91}, whose primitives allow to map each temporal formula to a
program that can single out all and only the counterexamples to the formula.
This program can then be analysed through a sound abstract interpretation
that over-approximates the concrete
behaviour, so that all possible
counterexamples
are exposed. However, this
over-approximation might not faithfully model the program behaviour
in the abstract domain, thus possibly mixing true and false alarms.
This lack of precision cannot happen when the abstract
interpretation is
complete~\cite{DBLP:conf/popl/CousotC79}.
However, completeness happens quite seldom, and even if, in principle, it can be
achieved by refining the abstract domain~\cite{DBLP:journals/jacm/GiacobazziRS00}, the most abstract domain refinement
ensuring completeness is often way too concrete (it might well coincide with the concrete domain itself).

To remove false alarms, the validity of temporal formulae can be
analysed by deriving certain judgements for the corresponding
program in a suitable program logic.
This enables the use of techniques borrowed from locally complete abstract
interpretation~\cite{DBLP:journals/jacm/BruniGGR23,BruniGGR21}, where the
over-approximation provided by the abstract domain is paired with
an under-approximating specification, in the style of O'Hearn's
incorrectness logic~\cite{DBLP:journals/pacmpl/OHearn20}.
This way, any alarm raised by an incorrectness logic proof
corresponds to a true counterexample and local completeness
guarantees that derivable judgements either exposes some true
counterexamples (if any) or proves that there are none.
Moreover, the failure of a proof obligation during inference can point out
how to dynamically refine the underlying abstraction to enhance the
precision and expressiveness of the analysis, a technique called abstract interpretation 
repair~\cite{DBLP:conf/pldi/BruniGGR22}.

\subparagraph{Methodology (and a toy example).}
Our main insight is to design a meta-programming language, called
$\MOKA$ ($\mathsf{MO}$del
checking as abstract interpretation of $\mathsf{K}$leene $\mathsf{A}$lgebras), where suitable temporal formulae can be mapped to.
We show how this can be done for ACTL or, more generally, for the
single variable $\mu$-calculus without the existential
diamond modality.
$\MOKA$ is a language of regular commands, based on Kleene
algebra with tests ($\kat$)~\cite{DBLP:journals/toplas/Kozen97} augmented with
fixpoint operators ($\kaf$)~\cite{DBLP:conf/csl/Leiss91}. $\MOKA$ leverages a small
set of primitives to extend  and filter 
computation paths. They can be combined with the usual
$\kaf$ operators of sequential composition, join (i.e., nondeterministic
choice), Kleene iteration, and least fixpoint computation.
The corresponding programs are then analysed by abstract interpretation.

The language $\MOKA$ operates on computation paths $\frame{\sigma,
\Delta}$, where $\sigma \in \Sigma$ represents the current system
state, and $\Delta \in \pow{\Sigma}$ represents the set of traversed
states. In the following, we informally overload
the symbol $\sigma$ to denote $\frame{\sigma, \varnothing}$. Several
computation paths can be stacked and unstacked through the $\MOKA$ primitives
$\push$ and $\pop$ for dealing with nested temporal formulae.
A generic stack is denoted $\frame{\sigma,\Delta}\concat S$ and we
call $\sigma$ its \emph{current} state.  Each temporal formula
$\varphi$ is mapped to a $\MOKA$ program $\PtoR{\varphi}$ that
intuitively computes counterexamples to $\varphi$, whence the bar over
$\varphi$ in the application of the encoding $\PtoRmap{\cdot}$.
Letting $\sem[]{\cdot}$ denote the usual (collecting) denotational
semantics, a key result is, therefore, that a state $\sigma$ satisfies
the formula $\varphi$ iff the execution of $\PtoR{\varphi}$ on
$\sigma$ returns the empty set, written $\sigma\models \varphi$ iff
$\sem[]{\PtoR{\varphi}}\sigma =\varnothing$.  Of course, the results
can be generalised additively to sets of stacks, for which
$\PtoR{\varphi}$ filters out exactly those stacks whose current state
satisfies $\varphi$.
For example, for an atomic proposition $p$, we let
$\PtoR{p} \triangleq \mathsf{\neg}\p$, where $\mathsf{\neg}\p$ is
a $\MOKA$ primitive that filters out those states where $p$ is valid.
As a further example, the $\MOKA$ program for the formula $\AG
\varphi$ (stating that $\varphi$ along every possible path always
holds) is $\PtoR{\AG \varphi} =
\textcolor{lipicsLineGray}{\push;}\,
\nexte^*;\PtoR{\varphi}\textcolor{lipicsLineGray}{; \pop}$.
The intuition is quite simple: $\nexte^*$ generates all the successors by
iterating the $\nexte$ operation, and they are filtered by
$\PtoR{\varphi}$, to expose those which fail to satisfy $\varphi$.
The operations $\push$ and $\pop$, in faded font, merely manage the stack structure which plays a role only for nested formulae, hence here they can be safely ignored.

To get a flavour of the proposed approach, we sketch an easy
example, which is a variation of~\cite[Examples 3.4,
3.7]{DBLP:journals/jacm/ClarkeGJLV03}.
Consider the transition system in Figure~\ref{fig:semaphore-car},
which models the behaviour of cars at a US traffic light. State names
consist of two letters, denoting, respectively, the traffic light status,
i.e.\ \textsf{r}ed, \textsf{y}ellow, \textsf{g}reen, and the car
behaviour, i.e.\ \textsf{s}top, \textsf{d}rive.
We model check the validity in the initial state $\mathsf{rs}$  of
the safety property $\varphi = \AG(\mathsf{\neg rd})$, 
 stating that it
will never happen that the light is red and the car is driving. This
is obviously true for the system, since $\mathsf{rd}$ is an unreachable state.
We therefore consider the $\MOKA$ program $\PtoR{\varphi} =
\textcolor{lipicsLineGray}{\push;}\,
\nexte^*;\mathsf{rd}?\textcolor{lipicsLineGray}{; \pop}$, and compute
its semantics $\sem[]{\PtoR{\varphi}}\set{\mathsf{rs}}$, which turns
out to be the empty set, thus allowing us to conclude that
$\mathsf{rs} \models \varphi$. In fact, after a few concrete steps of
iteration, $\sem[]{\mathsf{rd}?}\sem[]{\nexte^*}\set{\mathsf{rs}} =
\sem[]{\mathsf{rd}?}\set{\mathsf{rs},\mathsf{gs},\mathsf{gd},\mathsf{yd},\mathsf{ys}}
= \varnothing$.

\begin{figure}[t]
  \centering
  {\footnotesize
    \begin{subfigure}[b]{0.37\textwidth}
      \centering
      \begin{tikzpicture}[xscale=1.2,yscale=1.2]
        \node (rs) at (0,1) {$\mathsf{rs}$};
        \node (gs) at (1,0.6) {$\mathsf{gs}$};
        \node (ys) at (2,1) {$\mathsf{ys}$};
        \node (rd) at (0,0) {$\mathsf{rd}$};
        \node (gd) at (1,0) {$\mathsf{gd}$};
        \node (yd) at (2,0) {$\mathsf{yd}$};

        \draw[->,brown,line width=0.7pt] (-0.5,1) -- (rs);
        \draw[->] (rs) to (gs);
        \draw[->] (gs) to (yd);
        \draw[->] (yd) to (ys);
        \draw[->] (rd) -- (rs);
        \draw[->] (rd) -- (gs);
        \draw[->] (gs) -- (gd);
        \draw[->] (ys) to[out=160,in=20] (rs);
        \draw[->] (gd) -- (yd);
        \draw[->] (gd) to[out=240, in=300, loop] ();
        \draw[->] (ys) to[out=30, in=90, loop] ();
        \draw[->] (rs) to[out=90, in=150, loop] ();
      \end{tikzpicture}
      \caption{Behaviour of cars at US traffic light.}
      \label{fig:semaphore-car}
  \end{subfigure}}
  \quad
  {\footnotesize
    \begin{subfigure}[b]{0.32\textwidth}
      \centering
      \begin{tikzpicture}[xscale=1.2,yscale=1.2]
        \node (rs) at (0,1) {$\mathsf{rs}$};
        \node (gs) at (1,0.6) {$\mathsf{gs}$};
        \node (ys) at (2,1) {$\mathsf{ys}$};
        \node (rd) at (0,0) {$\mathsf{rd}$};
        \node (gd) at (1,0) {$\mathsf{gd}$};
        \node (yd) at (2,0) {$\mathsf{yd}$};

        \draw[->,brown,line width=0.7pt] (-0.5,1) -- (rs);
        \draw[->] (rs) to (gs);
        \draw[->] (gs) to (yd);
        \draw[->] (yd) to (ys);
        \draw[->] (rd) -- (rs);
        \draw[->] (rd) -- (gs);
        \draw[->] (gs) -- (gd);
        \draw[->] (ys) to[out=160,in=20] (rs);
        \draw[->] (gd) -- (yd);
        \draw[->] (gd) to[out=240, in=300, loop] ();
        \draw[->] (ys) to[out=30, in=90, loop] ();
        \draw[->] (rs) to[out=90, in=150, loop] ();

        \node[draw=red, thick, densely dotted, rounded corners,
          fit=(rs) (ys), inner
        sep=6pt, xscale=0.9, yscale=0.5] (boxA) {};
        \node[below left=0pt and 0pt of boxA]  {\textcolor{red}{$a$}};

        \node[draw=blue, thick, densely dotted, rounded corners,
          fit=(gs) (ys) (gd) (yd),
        inner sep=5pt, xscale=0.95, yscale=1.05] (boxC) {};
        \node[above=0pt of boxC]  {\textcolor{blue}{$c$}};

        \node[draw=darkgreen, thick, densely dotted, rounded corners,
          fit=(rd) (gd) (yd), inner
        sep=6pt, xscale=0.9, yscale=0.5] (boxB) {};
        \node[below left=0pt and 0pt of boxB]  {\textcolor{darkgreen}{$b$}};

      \end{tikzpicture}
      \caption{Abstract system.}
      \label{fig:composed-abstract}
  \end{subfigure}}
  \quad
  {
    \begin{subfigure}[b]{0.2\textwidth}
      \centering
      \begin{tikzpicture}[xscale=0.7,yscale=0.6, every
        node/.style={font=\footnotesize}]
        \node (bot) at (2,0) {$\bot$};
        \node (ys) at (1,1) {$a\wedge c$};
        \node (ydgd) at (3,1) {$b\wedge c$};
        \node (rsys) at (0,2) {$\textcolor{red}{a}$};
        \node (ysgsydgd) at (2,2) {$\textcolor{blue}{c}$};
        \node (rdydgd) at (4,2) {$\textcolor{darkgreen}{b}$};
        \node (rsysgsydgd) at (1,3) {$a\vee c$};
        \node (rdydgdysgs) at (3,3) {$b\vee c$};
        \node (top) at (2,4) {$\top$};

        \draw[-] (bot) to (ys);
        \draw[-] (bot) to (ydgd);
        \draw[-] (ys) to (rsys);
        \draw[-] (ys) to (ysgsydgd);
        \draw[-] (ydgd) to (ysgsydgd);
        \draw[-] (ydgd) to (rdydgd);
        \draw[-] (rsys) to (rsysgsydgd);
        \draw[-] (ysgsydgd) to (rsysgsydgd);
        \draw[-] (ysgsydgd) to (rdydgdysgs);
        \draw[-] (rdydgd) to (rdydgdysgs);
        \draw[-] (rsysgsydgd) to (top);
        \draw[-] (rdydgdysgs) to (top);
      \end{tikzpicture}
      \caption{Abstract domain.}
      \label{fig:semaphore-hasse}
  \end{subfigure}}
  \caption{Variations on the traffic light example
  from~\cite{DBLP:journals/jacm/ClarkeGJLV03}.}
  \label{fig:abs-semaphore}
\end{figure}
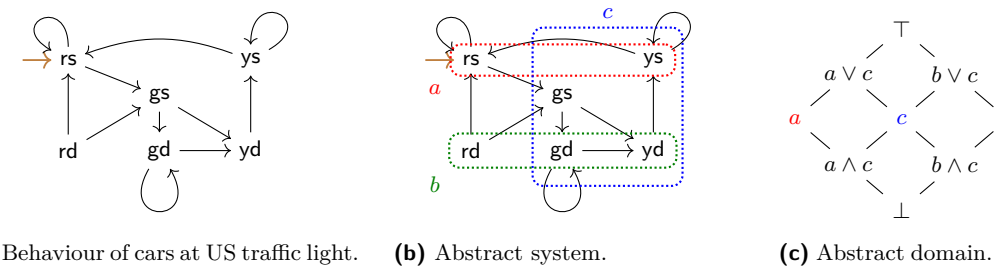

\subparagraph{How the abstraction works.}
We show how any sound abstraction of state properties in $\wp(\Sigma)$ of the
original system can be lifted to a range of sound
abstractions on stacks with varying degrees of precision, in a way
that $\MOKA$ programs can be analysed by an abstract interpreter,
denoted by $\sem[]{\cdot}^{\sharp}$.
The analysis output is always sound, meaning that the set of counterexamples
is over-approximated.
In particular, letting $\alpha$ denote the abstraction map to an
abstract lattice,
and $\bot$ the bottom element of the corresponding abstract computational domain,
if $\sem[]{\PtoR{\varphi}}^{\sharp} \alpha(\set{\sigma})=\bot$,
then $\sigma \models \varphi$ holds.
Vice versa, if $\sem[]{\PtoR{\varphi}}^{\sharp}\alpha(\set{\sigma})\neq\bot$,
then the abstract analysis might raise a false alarm because, in general,
abstractions are not complete.

Back to the previous example, let us consider the
(non-partitioning) abstract domain
$A\triangleq\set{
  \bot_A,
  a\wedge c,
  b\wedge c,
  \textcolor{red}{a},
  \textcolor{darkgreen}{b},
  \textcolor{blue}{c},
  a\vee c,
  b\vee c,
\top_A}$
in Figure~\ref{fig:semaphore-hasse}, induced by the abstract properties
$\textcolor{red}{a}$,
$\textcolor{darkgreen}{b}$ and
$\textcolor{blue}{c}$, whose concretizations are represented as
dotted boxes in Figure~\ref{fig:composed-abstract}.
The
abstract interpreter computes
$\sem[]{\PtoR{\AG(\mathsf{\neg rd})}}^{\sharp}\alpha(\set{\mathsf{rs}})=
\sem[]{\textcolor{lipicsLineGray}{\push;}\,
\nexte^*;\mathsf{rd}?\textcolor{lipicsLineGray}{; \pop}}^{\sharp} \textcolor{red}{a}= \bot$, thus
allowing us to conclude that $\mathsf{rs} \models \AG(\mathsf{\neg rd})$.
Since $\mathsf{rs}$ has a transition to $\mathsf{gs}$, we have
$\sem[]{\mathsf{rd}?}^{\sharp}\sem[]{\nexte^*}^{\sharp}\textcolor{red}{a}
=
\sem[]{\mathsf{rd}?}^{\sharp}(a\vee c) = \bot$.
It is worth remarking that in the abstract domain $A$, computing
$\sem[]{\nexte^*}^{\sharp}\textcolor{red}{a}$ requires
two iterations instead of
four, as it happens in the concrete computation. Consider now the formula $\psi = \AG (\mathsf{g} \to \AX
\mathsf{d})$, namely, ``whenever the
semaphore is green, then at the next state the car is driving''.
Spelling out the $\MOKA$ encoding, we have:
$\PtoR{\psi} = \textcolor{lipicsLineGray}{\push;}\, \nexte^*;
\mathsf{g?}; \textcolor{lipicsLineGray}{\push;}\, \nexte;
\mathsf{\neg d?}; \textcolor{lipicsLineGray}{\pop; \pop}$.
Again, $\psi$ holds for the concrete system in
Figure~\ref{fig:semaphore-car}, but here the abstract interpretation is
imprecise, because 
$\sem[]{\textcolor{lipicsLineGray}{\push; }\, \nexte^*;
\mathsf{g?}}^{\sharp}\alpha(\set{\mathsf{rs}})=
\textcolor{blue}{c}\concat \textcolor{red}{a}$
and
$\sem[]{\textcolor{lipicsLineGray}{\push;}\, \nexte;
\mathsf{\neg d?}}^{\sharp}(\textcolor{blue}{c}\concat
\textcolor{red}{a})= (a\vee c)\concat
\textcolor{blue}{c}\concat \textcolor{red}{a}$, so that
$\sem[]{\PtoR{\psi}}^{\sharp}\alpha(\set{\mathsf{rs}})
=\textcolor{red}{a} \neq \bot$, thus
raising a false alarm.
Note that for nested formulae the stack structure of domain elements, managed by $\push$/$\pop$ operators, allows us to recover  the (abstract) states where the computation started. These are the states that should be returned as counterexamples when the property fails. For instance, in the example above, the stack $(a\vee c)\concat
\textcolor{blue}{c}\concat \textcolor{red}{a}$ is produced, representing an abstract trace where the property fails, and, from this, two $\pop$ operations extract the starting abstract state $\textcolor{red}{a}$.

To eliminate false alarms, we leverage the concept of local completeness in abstract interpretation.
Roughly, the idea consists in focusing on the (abstract) computation
path produced by some input of interest, and then refining the
abstraction only when needed to make it complete \emph{locally} to
such computation. More precisely, we apply a variation of local
completeness logic (LCL)~\cite{DBLP:journals/jacm/BruniGGR23,BruniGGR21}, which is here
extended to deal with fixpoint operators.
The LCL proof system, parametrised by a generic state abstraction $A$,
works with O'Hearn-like judgements $\vdash_A [P]\ \r \ [Q]$, where
$\r$ is a program and $P, Q$ denote state properties or,
equivalently, the underlying sets of states satisfying those properties.
The triple $\vdash_A [P]\ \r \ [Q]$ is valid when $Q$ is an under-approximation of the states
reachable by $\r$ from $P$ while the abstraction of $Q$ over-approximates such reachable states,
and the abstract computation of $\r$ on the precondition $P$ is locally complete.
Roughly, this can be expressed as $Q \subseteq \sem[]{\r} P \subseteq \gamma\circ\alpha(Q)$.
In our setting, where the program  $\PtoR{\varphi}$ associated with a
formula $\varphi$ ``returns'' all the counterexamples to the validity of
$\varphi$, an inference of $\vdash_A [P]\ \PtoR{\varphi} \ [Q]$ shows
that $Q\subseteq \set{\sigma\in P \mid \sigma\not\models \varphi}\subseteq \gamma\circ\alpha(Q)$, thus bounding the set of counterexamples to
$\varphi$ in $P$.
Hence, if $Q\neq \varnothing$ then $\varphi$ does not hold for each
$\sigma\in Q\subseteq P$, while $\varphi$  holds for all states $\sigma\in P\setminus (\gamma\circ\alpha(Q))$.
If, instead, $Q = \varnothing$, then $\alpha(Q)=\bot$, and 
$\varphi$ holds in the whole $P$.
As LCL derivations cannot succeed with locally incomplete
abstractions, if some proof obligation fails,
the abstract domain needs to be ``fixed'' to
achieve local completeness. This can be accomplished, e.g., 
by applying the abstraction repair techniques defined in~\cite{DBLP:conf/pldi/BruniGGR22}.

For the toy traffic light example, we can derive in LCL the program
triple $\vdash_A [\set{\mathsf{rs}}]\  \PtoR{\varphi}\  [\varnothing]$,
that confirms the validity of the formula $\varphi$.
Vice versa, an attempt to derive the triple
$\vdash_A [\set{\mathsf{rs}}]\  \PtoR{\psi}\  [\varnothing]$
fails because of local incompleteness.
Roughly, we can successfully derive
$\vdash_A [\set{\mathsf{rs}}] \
\nexte^*;\mathsf{g?}; \
[\set{\mathsf{gs},\mathsf{gd}}]$,
but then the execution of $\nexte$ on $\set{\mathsf{gs},\mathsf{gd}}$
is not locally complete, because
$\alpha(\sem[]{\nexte}\set{\mathsf{gs},\mathsf{gd}}) =
\alpha(\set{\mathsf{gd},\mathsf{yd}}) = b\wedge c$,
while
$\sem[]{\nexte}^{\sharp}\alpha(\set{\mathsf{gs},\mathsf{gd}}) =
\sem[]{\nexte}^{\sharp} \textcolor{blue}{c} = a\vee c$.
The abstraction repair procedure of~\cite{DBLP:conf/pldi/BruniGGR22} would
then lead to refine the abstract domain by adding a new abstract
element $c_1$ to represent the concrete set
$\gamma(c_1)=\set{\mathsf{gs},\mathsf{gd}}$.
Since abstract domains must be closed under meets, the addition of
$c_1$ will also entail the addition of $c_1\wedge
b$
such that $\gamma(c_1\wedge b)=\set{\mathsf{gd}}$.
Letting $A_1 \triangleq A\cup \set{c_1\wedge b,c_1}$, we can
still derive
${\vdash_{A_1} [\set{\mathsf{rs}}]
\ \nexte^*;\mathsf{g?} \
[\set{\mathsf{gs},\mathsf{gd}}]}$, then
$\vdash_{A_1} [\set{\mathsf{gs},\mathsf{gd}}]
\ \nexte\
[\set{\mathsf{gd},\mathsf{yd}}]$, and finally
$\vdash_{A_1} [\set{\mathsf{gd},\mathsf{yd}}]
\ \neg\mathsf{d?}\
[\varnothing]$, which can be composed together to conclude
$\vdash_{A_1} [\set{\mathsf{rs}}]\  \PtoR{\psi}\  [\varnothing]$.
In fact, in $A_1$ we just have
$\sem[]{\PtoR{\psi}}^{\sharp}\alpha_1(\set{\mathsf{rs}}) = \bot$.

In summary, our main contribution is a theoretical
  framework for systematically reducing model checking of temporal
  logics to program verification, that enables to re-use, directly or
  with little effort, abstract interpretation techniques and verifiers.
  This framework consists of:
  \begin{itemize}
  \item a meta-programming language $\MOKA$, where a number of temporal
    logics can be encoded in such a way that the $\MOKA$ encoding of a 
    formula $\varphi$ computes all and only the counterexamples to $\varphi$;
  
  \item a systematic technique for lifting abstractions on state properties
    to abstractions suited for analysing $\MOKA$ programs;
    
  \item an extension of the LCL program logic for dealing with least
    fixpoints, which enables a ``false alarm-guided'' abstraction refinement loop
     in the analysis of $\MOKA$ programs.
  \end{itemize}

\subparagraph{Synopsis.}
In~\S~\ref{sec:back} we provide some basics about
abstract interpretation and
the (fragments) of temporal logics considered
in the paper.
In~\S~\ref{sec:mocha} we introduce the language
$\MOKA$, and then prove in~\S~\ref{sec:s4} the key results that relate formula
satisfaction with program execution.
In~\S~\ref{sec:stackabs} we define a general technique for deriving
abstract domains for the static analysis of $\MOKA$ programs.
In~\S~\ref{sec:lcl} we showcase how local completeness logic reasoning can be
exploited in our framework.
In~\S~\ref{sec:rw} we discuss related work.
Finally, in~\S~\ref{sec:conc} we draw some conclusions and
sketch future avenues of research.

\section{Background}
\label{sec:back}

A complete lattice is a poset $(L, \leq_L)$ where 
every subset $X \subseteq L$ has both least upper bound (lub) and greatest
lower bound (glb), denoted by $\bigvee_L X$ and $\bigwedge_L X$,
respectively, with $\bot_L\triangleq \bigvee_L \varnothing$ and
$\top_L\triangleq \bigwedge_L \varnothing$.
When no ambiguities can arise, a lattice will be denoted as
$L$ and subscripts will be omitted.

Given two complete lattices $L_1$ and $L_2$, a
function $f: L_1 \to L_2$, is monotone if
$x \leq_1 y$ implies $f(x) \leq_2 f(y)$, and 
additive
(resp., co-additive)
if it preserves arbitrary lub
(resp., glb).
Any monotone function $f:
L \to L$ on a complete lattice has both a least and a greatest
fixpoint, denoted by $\operatorname{lfp}(f)$ and
$\operatorname{gfp}(f)$, respectively.
The set of functions $f : S \to L$ from a set $S$ to a complete
lattice $L$, denoted
$L^S$, forms a complete lattice when endowed with the pointwise order
s.t.\ $f \leq g$ if for all $s \in S$, $f(s) \leq_L g(s)$. If $L_1, L_2$
are complete lattices, then $L_1 \times L_2$ is their product
lattice endowed with the componentwise order s.t.\
$(x_1,x_2) \leq (y_1,y_2)$ if $x_1 \leq_1 y_1$ and $x_2 \leq_2 y_2$.
We write $L^n$ for the $n$-ary product of $L$ with itself and
$\str{L} \triangleq \set{ \bot, \top} \cup \bigcup_{n \geq 1}
L^n$ for
the complete lattice of non-empty finite sequences in $L$, ordered by
$x_1 \ldots x_n\leq y_1 \ldots y_m$ if $n=m$ and $x_i \leq_L y_i$ for
all $i \in [1,n]$, with top $\top$ and bottom~$\bot$.

\subsection{Abstract Interpretation}

Let us recall the basics of abstract
interpretation~\cite{DBLP:conf/popl/CousotC77} (see~\cite{cousot21} for a thorough account).
Given two complete lattices $C$ and $A$, called the
\emph{concrete} and the \emph{abstract}
domain, respectively, a \emph{Galois connection (GC)}
$\langle \alpha,\gamma \rangle: C \rightleftarrows A$ is a pair of
functions $\alpha : C \to A$ and $\gamma : A \to C$ s.t.
$\alpha(c) \leq_A a$
$\Leftrightarrow$ $c \leq_C \gamma(a)$ for any $c\in C$ and $a\in A$.

The function $\alpha$ is referred to as abstraction map and turns out to be
additive, while $\gamma$ is the concretization map which is always co-additive.
Intuitively, any abstract element $a\in A$ such that $c\leq
\gamma(a)$ is a sound over-approximation for the concrete value $c$, while the
abstraction $\alpha(c)$ is the most precise over-approximation of
$c$ in the abstract domain $A$, i.e.,
$\alpha(c) = \bigwedge_C \set{a \mid c \leq_C \gamma(a)}$ holds.
The notation $A_{\alpha,\gamma}$ denotes an
abstract domain endowed with its underlying
GC, and we will omit subscripts when $\alpha$ and $\gamma$ are
clear from the context.

\begin{example}[Image adjunction]
  \label{ex:image}
  Given any function $f : X \to Y$, 
  let $\dirIm{f}$ and $\invIm{f}$ denote the direct and inverse image of $f$,
  respectively.
  The pair
  $\langle \dirIm{f}, \invIm{f} \rangle : \pow{X} \to \pow{Y}$
  is a GC that we refer to as the
  \emph{image adjunction} (this is an instance of~\cite[Exercise
  7.18]{dp:lattices-order}).
  \qed
\end{example}

The class
of abstract domains
on $C$, denoted by
$\operatorname{Abs}(C) \triangleq \set{A_{\alpha, \gamma} \mid \langle
\alpha,\gamma \rangle: C \rightleftarrows A}$, can be preordered by
the domain refinement relation:
$A' \sqsubseteq
A$ when $\gamma_{A}(A)\subseteq \gamma_{A'}(A')$.

\begin{example}[Control flow graphs and predicate abstraction]\label{ex:cfg}
  Any program can be represented by its control flow
  graph (CFG). Let $\TVar$ be a set of variables valued in
  $\mathbb{V}$ and denote by
  $\mathit{Env} = \mathbb{V}^\TVar$ the set of environments.
  A CFG is a graph $(N,E,s,e)$, where $N$ is a
  finite set of nodes, representing program points,
  $s, e \in N$ are the start and end nodes, respectively,
  and $E \subseteq N \times F \times N$ is a set of edges, labelled
  over a set of transfer (additive) functions
  $F \subseteq {\pow{\mathit{Env}}}^{\pow{\mathit{Env}}}$.
  For example, the program {\tt{c}} in
  Figure~\ref{fig:prg}
  is
  decorated with program points $n \in N = \set{s, 1, 2, 3, e}$, has variables
  $\mathit{Var} = \set{x, y, z, w}$, and values  ranging in the finite domain
  $\mathbb{V} = \ev$ of integers modulo a given $k>0$. The
  CFG of {\tt{c}} is depicted in Figure~\ref{fig:cfg}. 

  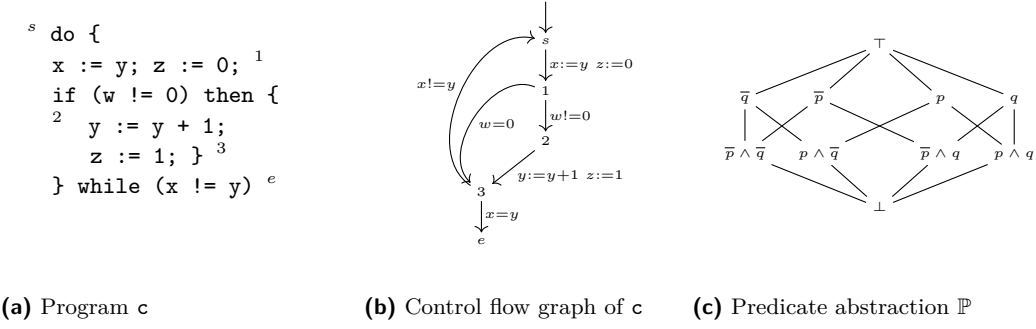
\begin{figure}[t]
    \centering
    \begin{subfigure}[b]{0.25\textwidth}
      \centering
      \small
      \begin{alltt}{
 \( \sp{s} \)do \{
    x := y; z := 0; \(\sp{1} \)
    if (w != 0) then \{
    \(\sp{2} \) y := y + 1;
       z := 1; \} \(\sp{3} \)
    \} while (x != y) \(\sp{e} \)
        }
      \end{alltt}
      \caption{Program {\tt{c}}}
      \label{fig:prg}
    \end{subfigure}
    \hfill
    \begin{subfigure}[b]{0.3\textwidth}
      \centering
      \tiny
      \begin{tikzcd}
        & \ar[d] \\
        & s \ar[d, "x := y \ z := 0"] \\
        & 1 \ar[ddl, bend right = 80, " w = 0"] \ar[d, "w != 0"]\\
        & 2 \ar[dl, "y := y + 1 \ z := 1"] \\
        3 \ar[d, "x = y"] \ar[uuur, bend left = 80, "x != y"]& \\
        e %
        &  \\
      \end{tikzcd}
      \caption{Control flow graph of {\tt{c}}}
      \label{fig:cfg}
    \end{subfigure}
    \begin{subfigure}[b]{0.35\textwidth}
      \centering
      \tiny
      \begin{tikzcd}[column sep=small]
        & & \top \ar[dll, dash] \ar[dl, dash] \ar[dr, dash]
        \ar[drr, dash]& & \\
        \overline{q} \ar[d, dash] \ar[dr, dash] & \overline{p}
        \ar[dl, dash] \ar[drr, dash]& & p \ar[dll, dash] \ar[dr,
        dash]& q \ar[dl, dash] \ar[d, dash]\\
        \overline{p} \land \overline{q} \ar[drr, dash]& p \land
        \overline{q} \ar[dr, dash]& & \overline{p} \land q \ar[dl,
        dash] & p \land q \ar[dll, dash]\\
        && \bot && \\ \\
      \end{tikzcd}
      \caption{Predicate abstraction $\mathbb{P}$}
      \label{fig:predicate}
    \end{subfigure}
    \caption{Program from~\cite[Figure
      1]{DBLP:conf/tacas/BallPR01}, with control flow graph and
    predicate abstraction domain.}
    \label{fig:example}
  \end{figure}

  Predicate abstraction allows to approximate program
  invariants~\cite{DBLP:journals/csur/JhalaM09}.  Given a set of
  predicates $\mathit{Pred} \subseteq \pow{\mathit{Env}}$ (where any
  $p\in \mathit{Pred}$ has a representation
  $p = \set{\rho \in \mathit{Env} \mid \rho \models p}$), the
  predicate abstraction domain $\mathbb{P}$ is defined by adding to
  $\mathit{Pred}$ the complement predicates
  $\overline{\mathit{Pred}}\triangleq \set{\overline{p}\mid p\in
    \mathit{Pred}}$, and then by closing
  $\mathit{Pred}\cup \overline{\mathit{Pred}}$ under logical
  conjunction.  We define a function
  $\pi: \pow{\mathit{Env}} \to \mathbb{P}$ that associates to each set
  of environments the strongest predicate it satisfies.  For example,
  given the two predicates $p \triangleq (z = 0)$ and
  $q \triangleq (x = y)$, the predicate abstraction domain
  $\mathbb{P}$ induced by the set $\mathit{Pred} =\set{ p, q}$ is
  depicted in Figure~\ref{fig:predicate}. Correspondingly, the product
  abstraction $\mathbb{P}^N$ allows us to represent the abstract state
  of the program as a function that associates to each program point
  $n$ the strongest predicate in $\mathbb{P}$ that holds at
  $n$. Hence, for instance, the set of all possible initial states of
  {\tt{c}}, which is $\set{(s, xyzw) \mid x, y, z, w \in \mathbb{V}}$,
  is represented by the function
  $(s \mapsto \top, 1 \mapsto \bot, 2 \mapsto \bot, 3 \mapsto \bot, e
  \mapsto \bot)$, often written $(s \mapsto \top)$, omitting the
  program points which are mapped to $\bot$.  \qed
\end{example}

An abstract interpreter computes in the underlying abstract domain through
correct (and effective) abstract approximations of concrete functions.
Given
$A_{\alpha, \gamma} \in \operatorname{Abs}(C)$ and a function $f: C
\to C$, an abstract function $f^{\sharp}: A \to A$ is a
correct approximation of $f$ if $\alpha \circ f \leq
f^{\sharp}\circ \alpha$, and it is
a \emph{complete} approximation of $f$ when $\alpha
\circ f = f^{\sharp}\circ \alpha$.
The \emph{best correct approximation}
(bca) of $f$ in $A$,
is defined as $f^{A}
\triangleq \alpha \circ f \circ \gamma: A \to A$,
and turns out to be the most precise correct abstraction, i.e., $f^{A}
\leq f^{\sharp}$ for any other correct approximation $f^{\sharp}$ of $f$.
When $f^{\sharp}$ is complete, then
$f^{\sharp} = f^A$,
thus making completeness an abstract domain property defined by the equation
$\alpha \circ f=\alpha \circ f \circ \gamma \circ \alpha$.
In program
analysis, abstract domains are commonly endowed with correct but
incomplete abstract transfer functions. When
completeness holds, the abstract interpreter is
as precise as possible for the given abstract domain and cannot raise
false alarms when verifying properties
that are expressible in the abstract domain
\cite{DBLP:journals/jacm/GiacobazziRS00}.

\subsection{Transition Systems and Logics}

Temporal logics are typically interpreted
over unlabeled, finite, directed
graphs, whose nodes and edges model states and
transitions between them (i.e., state changes), respectively.

A \emph{transition system} $T$ is a
tuple $(\Sigma, \operatorname{I}, \mathbf{P}, \trel, \vdash)$, where
$\Sigma$ is a finite set of states ranged over by $\sigma$,
$\operatorname{I} \subseteq \Sigma$ is the set of initial
states, $\trel \subseteq \Sigma \times \Sigma$ is the
transition relation, $\mathbf{P}$ is a (finite) set of atomic
propositions, ranged over by $p$, and $\mathbin{\vdash} \subseteq \Sigma
\times \mathbf{P}$ is the
satisfaction relation.
We assume that for any $p \in
\mathbf{P}$ also its negation $\neg p$ is in $\mathbf{P}$,
such that for $\sigma \in \Sigma$
either $\sigma \vdash p$ or $\sigma
\vdash \neg p$ holds (also written $\sigma \centernot \vdash p$), and that
$\mathsf{tt}\in\mathbf{P}$ such that $\sigma \vdash \mathsf{tt}$ for
all $\sigma\in\Sigma$.
We write $\sigma \trel \sigma'$ instead of $(\sigma, \sigma') \in
\trel$ and let $\operatorname{post}(\sigma) = \set{\sigma' \mid \sigma \trel
\sigma'}$ denote
the set of direct successors of $\sigma \in
\Sigma$.
As common in model checking~\cite{CGKPV:MC-book}, we consider systems whose transition
relation is total, %
i.e., for all $\sigma \in \Sigma$ there exists $\sigma' \in \Sigma$
such that $\sigma \trel \sigma'$.

A path is an infinite denumerable state sequence
$(\sigma_i)_{i \in \N}$ such that $\sigma_i \trel \sigma_{i+1}$
for all $i \in \N$.

\begin{example}[Control flow graphs as transition systems]
  \label{ex:lts}
  A CFG $(N,E,s,e)$ can be viewed as a transition system with states
  $\Sigma = N \times \mathit{Env}$ and transition relation defined by
  $(n,\rho) \trel (n', \rho')$ if there is an edge $(n, f, n') \in E$
  such that $\rho' \in f(\set{\rho})$. Additionally, in order to make the
    transition relation total we add self-loops to all the states
    $(e,\rho)$ involving the end node $e$.
  For instance, by using the
  shorthand $(n,xyzw)$ for the states in
  Example~\ref{ex:cfg}, the transition $(s,0111) \trel (1,1101)$ is
  induced by the edge $(s, x:= y  \ z := 0, 1)$ in
  Figure~\ref{fig:cfg}, while $(3,1111) \trel
  (e,1111)$ by $(3, x = y, e)$ in Figure~\ref{fig:cfg}.

  Let us point out that the predicate abstraction  in
  Example~\ref{ex:cfg} naturally lifts to the powerset of states with
  codomain $A= \mathbb{P}^N$, with the abstraction map
  $\alpha(X)(n) \triangleq \pi(\set{\rho \mid (n, \rho) \in X})$ for
  $X \in \pow{\Sigma} = \pow{N \times \mathit{Env}}$.
  Given $\sigma^{\sharp} \in A$, we define
  $\supp{\sigma^{\sharp}} \triangleq \set{n \in N \mid \sigma^{\sharp}(n) \neq \bot}$. 
  \qed
\end{example}

\subparagraph{ACTL.}
ACTL is the fragment of CTL whose temporal formulae are universally
quantified over all
paths leaving the current state. Thus,
given a set of atomic propositions $p\in \mathbf{P}$: 
\begin{center}
  $\text{ACTL}\ni \varphi \ ::=  \ p \mid
  \neg p \mid
  \varphi_1 \lor \varphi_2 \mid
  \varphi_1 \land \varphi_2 \mid
  \AX \varphi_1 \mid
  \AF \varphi_1 \mid
  \AG \varphi_1 \mid
  \varphi_1 \AU \varphi_2$
\end{center}

\begin{definition}[ACTL semantics]\label{ACTL-semantics}
  Given a transition system
  $T = (\Sigma, \operatorname{I}, \mathbf{P}, \trel, \vdash)$, the
  semantics %
  $\sem[]{\varphi} \subseteq \Sigma$ of
  {\rm \text{ACTL}} formulae over $T$ is as follows:
  \begin{center}
    \begin{tabular}{r@{\hspace{2pt}}c@{\hspace{2pt}}l@{\hspace{-130pt}}r@{\hspace{2pt}}c@{\hspace{2pt}}l}
      $\sem[]{p}$ & $\triangleq$ & $\set{\sigma \in \Sigma \mid
      \sigma \vdash p}$ & 
      $\sem[]{\neg p}$ & $\triangleq$ &$\set{\sigma \in \Sigma \mid
      \sigma \vdash \neg p}$ \\
      $\sem[]{\varphi_1 \lor \varphi_2}$  & $\triangleq$ &
      $\sem[]{\varphi_1} \cup \sem[]{\varphi_2}$ &
      $\sem[]{\varphi_1 \land \varphi_2}$ & $\triangleq$ &
      $\sem[]{\varphi_1} \cap \sem[]{\varphi_2}$ \\
      $\sem[]{\AX \varphi_1}$ & $\triangleq$ & $\set{\sigma \in \Sigma
        \mid \forall \sigma' . \sigma \trel
      \sigma' \Rightarrow \sigma' \in \sem[]{\varphi_1}}$ \\
      $\sem[]{\AF \varphi_1}$ & $\triangleq$ & $\set{\sigma_0 \in \Sigma \mid
        \textrm{for all path}\ (\sigma_i)_{i \in \N}~\exists
      k\in\mathbb{N}.~ \sigma_k\in \sem[]{\varphi_1}}$ \\
      $\sem[]{\AG \varphi_1}$ & $\triangleq$ & $\set{\sigma_0 \in \Sigma \mid
        \textrm{for all path}\ (\sigma_i)_{i \in \N}~\forall
      j\in\mathbb{N}.~ \sigma_j\in \sem[]{\varphi_1}}$ \\
      $\sem[]{\varphi_1 \AU \varphi_2}$ & $\triangleq$ &
      $\set{\sigma_0 \in \Sigma \mid
        \textrm{for all path}\ (\sigma_i)_{i \in \N}~ \exists k \in
        \mathbb{N} . ~ (\sigma_k \in \sem[]{\varphi_2} \land \forall j <
      k . ~ \sigma_j \in \sem[]{\varphi_1})}$ \\
    \end{tabular}
  \end{center}
\end{definition}

\subparagraph{Universal fragment of single variable
$\mu$-calculus.}
The modal $\mu$-calculus is a well known extension of propositional modal logic
with least and greatest fixed point operators.  We will focus on its universal
fragment only allowing the $\Box$ modal operator that
quantifies over all transitions. Moreover, for the sake of
simplicity, we restrict to
the single variable fragment where, roughly speaking,
nested fixpoints cannot have mutual dependencies.

Given a set of atomic propositions $p\in \mathbf{P}$, the
$\mu_\Box$-calculus is defined as follows: 
\begin{center}
  $\mu_\Box \ni \varphi \ ::= \ p \mid \neg p \mid \varphi_1 \lor
  \varphi_2 \mid \varphi_1
  \land \varphi_2
  \mid \Box \varphi_1  \mid x \mid \mu x . \varphi_x \mid \nu x. \varphi_x$
\end{center}

\begin{definition}[$\mu_\Box$-calculus semantics]
  \label{mu-semantics}
  Given
  $T = (\Sigma, \operatorname{I}, \mathbf{P}, \trel, \vdash)$,
  and a valuation $\mathcal{V}: \LVar \to \pow{\Sigma}$,
  the semantics
  $\sem[\mathcal{V}]{\varphi}
  \subseteq \Sigma$ of
  $\mu_\Box$-calculus formulae over $T$ 
  is as follows:
  \begin{center}
    \begin{tabular}{r@{\hspace{2pt}}c@{\hspace{2pt}}l
      r@{\hspace{2pt}}c@{\hspace{2pt}}l}
      $\sem[\mathcal{V}]{p}$ & $\triangleq$ & $\set{\sigma \in \Sigma
      \mid \sigma \vdash p}$  &  $\sem[\mathcal{V}]{\neg p}$ & $\triangleq$ & $\set{\sigma \in \Sigma
      \mid \sigma \vdash \neg p}$  \\
      $\sem[\mathcal{V}]{\varphi_1 \lor \varphi_2}$ & $\triangleq$ & 
      $\sem[\mathcal{V}]{\varphi_1} \cup
      \sem[\mathcal{V}]{\varphi_2}$ &
      $\sem[\mathcal{V}]{\varphi_1
      \land \varphi_2}$ & $\triangleq$ &
      $\sem[\mathcal{V}]{\varphi_1} \cap \sem[\mathcal{V}]{\varphi_2}$  \\
      $\sem[\mathcal{V}]{\Box \varphi_1}$ & $\triangleq$ & $\set{\sigma
        \in \Sigma \mid \forall \sigma' . \sigma \trel \sigma'
      \Rightarrow \sigma' \in \sem[\mathcal{V}]{\varphi_1}}$ &
      $\sem[\mathcal{V}]{x}$ & $\triangleq$ & $\mathcal{V}(x)$ \\
      $\sem[\mathcal{V}]{\mu x . \varphi_x}$ & $\triangleq$ &
      $\operatorname{lfp}(\lambda S. \llbracket \varphi_x
      \rrbracket_{\mathcal{V}[x \mapsto S]})$ &
      $\sem[\mathcal{V}]{\nu x . \varphi_x}$ & $\triangleq$ &
      $\operatorname{gfp}(\lambda S. \llbracket \varphi_x
      \rrbracket_{\mathcal{V}[x \mapsto S]})$ \\
    \end{tabular}
  \end{center}
  \noindent
  where $\mathcal{V}[x \mapsto S]$ is the usual notation for function update.
\end{definition}

\noindent
  We write $\sem[]{\varphi}$ instead of
  $\sem[\mathcal{V}]{\varphi}$ when the valuation is inessential,
and $\sigma\models\varphi$ when $\sigma\in\sem[]{\varphi}$. 

\section{The Language {$\MOKA$}}
\label{sec:mocha}

We define a meta-language, called $\MOKA$ (for $\mathsf{MO}$del
checking as abstract interpretation of $\mathsf{K}$leene $\mathsf{A}$lgebras), as a
(generalised) Kleene Algebra with a
set
of basic expressions suited for identifying counterexamples to the validity of
temporal formulae.

\subparagraph{$\kaf$.}
We rely on Kozen's Kleene Algebra with
tests~\cite{DBLP:journals/toplas/Kozen97}
with
a Fixpoint operator~\cite{DBLP:conf/csl/Leiss91},
$\kaf$ for short.
Given a set $\mathsf{Exp}$
of basic expressions $\mathsf{e}$, $\kaf$ is defined below:
\begin{center}
  $\kaf \ni \r  :: = \one \mid \zero \mid \mathsf{e} \mid
  \r_1;\r_2 \mid
  \r_1 \oplus \r_2 \mid \r_1^* \mid \X \mid \upmu \X. \r_1$
\end{center}
The term $\one$ represents the identity, i.e.\ no
action,  $\zero$ represents divergence,
$\r_1; \r_2$
represents sequential composition,
$\r_1 \oplus \r_2$ represents non-deterministic choice,
$\r_1^{*}$ represents the Kleene
iteration of $\r_1$, i.e.\ $\r_1$
performed zero or any finite number of times,
$\X$ is a variable ranging in a set $\RVar$, and
$\upmu \X. \r_1$
represents the least fixpoint
operator with respect to variable $\X$.

Commands are interpreted as functions over a complete
lattice $C$.  Given a semantics
$\bsem{\cdot}: \mathsf{Exp} \to C \to C$ for basic
expressions,
the semantics of regular
expressions $\sem{\cdot}: \kaf \to C \to
C$ is
inductively defined as follows, where $\eta: \RVar \to
C \to C$ is an
environment:
\begin{center}
  \begin{tabular}{r c l c c r c l}
    $\sem{\one}$ & $\triangleq$ & $\lambda x.\, x$ & & &
    $\sem{\zero}$ & $\triangleq$ &
    $\lambda x.\, \bot$ \\
    $\sem{\e}$ & $\triangleq$ & $\bsem{\e}$ & & &
    $\sem{\mathsf{r_1; r_2}}$ & $\triangleq$ &
    $\sem{\mathsf{r_2}}\circ \sem{\mathsf{r_1}}$ \\
    $\sem{\mathsf{r_1 \oplus r_2}}$ & $\triangleq$ & $\sem{\mathsf{r_1}}
    \lor \sem{\mathsf{r_2}}$ & & &  $\sem{\mathsf{r_1^*}}$ &
    $\triangleq$ & $ \bigvee
    \set{\sem{\mathsf{r_1}}^k \mid k \in \N}$ \\
    $\sem{\X}$ & $\triangleq$ & $\eta(\X)$ & & &  $\sem{\upmu \X.
    \mathsf{r_1}}$ &
    $\triangleq$ & $\operatorname{lfp}(\lambda
    f: C \to C. \sem[{\eta[\X \mapsto f]}]{\r_1})$ \\
  \end{tabular}
\end{center}

It can be seen that the Kleene star $\r^*$ can be encoded as a
fixpoint  $\upmu \X . (\one \oplus \r; \X)$, 
and, when the semantics of basic expressions
$\bsem{\cdot}$ is additive, also as $\r^* = \upmu \X . (\one
\oplus \X; \r)$
(Lemma~\ref{le:encoding-fix2})
while this fails in
general for non-additive semantics (Example~\ref{ex:non-additive-fix}).
Even if redundant, we include the Kleene star in our language as it allows to simplify some encodings  whenever the full expressiveness of least fixpoint calculation is not required (see the remark at the end of Section~\ref{sec:s4}). The same applies to the proof logic: as discussed in Section~\ref{sec:lcl}, basic LCL suffices for the fragment of $\kaf$ without least fixpoint operator, while the extension $\upmu$LCL is needed for dealing with the full language.
For closed $\kaf$ terms the environment is inessential
and we write just $\sem[]{\mathsf{r}}$ instead of $\sem{\mathsf{r}}$.
\begin{toappendix}
  \begin{lemma}
    \label{le:additivity-regular-expressions}
    If the semantics of the basic expressions
    is additive and $\eta$ maps each variable to an
    additive function, then $\sem{\r}$ is additive for any $\kaf$ term $\r$.
  \end{lemma}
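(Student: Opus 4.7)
The plan is to proceed by structural induction on the $\kaf$ term $\r$, relying on two closure properties: (i) additive functions $C\to C$ are closed under functional composition, and (ii) the set of additive endofunctions on $C$ forms a complete sublattice of $C^{C}$ under the pointwise order, so pointwise joins (even infinite ones) of additive functions are additive.

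For the base cases, $\sem{\one}=\lambda x.x$ is trivially additive; $\sem{\zero}=\lambda x.\bot$ is additive because $\bigvee_{i\in I} \bot = \bot$ for any family (including the empty one, since $\bigvee\varnothing=\bot$); $\sem{\e}=\bsem{\e}$ is additive by hypothesis on basic expressions; and $\sem{\X}=\eta(\X)$ is additive by the hypothesis on $\eta$. For the inductive steps, $\sem{\r_1;\r_2}=\sem{\r_2}\circ\sem{\r_1}$ is additive by (i); $\sem{\r_1\oplus\r_2}=\sem{\r_1}\vee\sem{\r_2}$ and $\sem{\r_1^*}=\bigvee_{k\in\N}\sem{\r_1}^k$ are additive by (ii), noting that each iterate $\sem{\r_1}^k$ is additive by (i).

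The main obstacle, as expected, is the least fixpoint case $\upmu\X.\r_1$. Here the semantics is $\lfp(\Phi)$ where $\Phi(f)\triangleq \sem[\eta[\X\mapsto f]]{\r_1}$ is a monotone operator on $C\to C$. The idea is to observe that if $f$ is additive, then, by the induction hypothesis applied to $\r_1$ with the updated environment $\eta[\X\mapsto f]$, $\Phi(f)$ is additive as well; hence $\Phi$ preserves additivity. By Kleene/Tarski, $\lfp(\Phi)$ is reached as the (possibly transfinite) iteration $\Phi^\alpha(\bot_{C\to C})$ starting from the everywhere-$\bot$ function (which is additive), with limit steps obtained by pointwise joins of the ascending chain of iterates. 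Since the base is additive, each successor step preserves additivity, and limit steps preserve it by (ii), an easy transfinite induction yields that every iterate is additive, and hence so is the limit $\lfp(\Phi)=\sem{\upmu\X.\r_1}$. This closes the induction.
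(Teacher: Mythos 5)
Your proof is correct and follows essentially the same route as the paper's: structural induction on $\r$, using closure of additive functions under composition and (pointwise, possibly infinite) joins, and handling $\upmu\X.\r_1$ by transfinite induction on the Kleene iterates of the associated operator on $C^C$, with the structural induction hypothesis (applied to $\r_1$ under the updated environment) guaranteeing that each successor iterate is additive. Your phrasing of the successor step --- ``$\Phi$ preserves additivity'' --- makes explicit what the paper's computation leaves implicit, but the argument is the same.
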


  \begin{proof}
   
    We want to prove that, for any $\kaf$ term $\r$ and any $\a, \b \in C$, 
    it holds $\sem{\r}(\a \lor \b) = \sem{\r}\a \lor \sem{\r}\b$.
    The proof is by structural induction on the term $\r$.
    Let $\a, \b \in C$.
    
    \subparagraph*{($\r = \e$)} $\sem{\e} = \bsem{\e}$ is additive by
    hypothesis.

    \subparagraph*{($\r = \r_1; \r_2$)} By inductive hypothesis $\sem{\r_1},
    \sem{\r_2}$ are additive. Since the composition of additive functions is additive, so is 
    $\sem{\r_1; \r_2}=\sem{\r_2}\circ \sem{\r_1}$.

    \subparagraph*{($\r = \r_1 \oplus \r_2$)} By inductive hypothesis $\sem{\r_1},
    \sem{\r_2}$ are additive. Since the join of additive functions is additive, so is 
    $\sem{\r_1\oplus \r_2}=\sem{\r_1}\vee \sem{\r_2}$.

    \subparagraph*{($\r = \r_1^*$)} 
    By definition $\sem{\r_1^*} = \bigvee_k \set{\sem{\r_1}^k \mid k
    \in \N}$, where $\sem{\r_1}$ is additive 
    by inductive hypothesis.
    First, we observe that each $\sem{\r_1}^k$ is additive is additive because it is the composition of $k$ additive functions.
    And then we can conclude because the join of additive
    functions is additive.

    \subparagraph*{($\r = \X$)} $\sem{\X} = \eta(\X)$
    is additive by the hypothesis on $\eta$.  

    \subparagraph*{($\r = \upmu \X. \r_1$)} Let $G : C^C \to C^C$ be defined by
    $G(f) = \sem[{\eta[\X \mapsto f]}]{\upmu \X. \r_1}$. Then by Kleene
    fixpoint theorem we have $\sem{\r} = \operatorname{lfp}(G) =
    G^\alpha(\bot_\to)$ for some
    ordinal $\alpha$, where $\bot_\to = \lambda c.\, \bot_{C}$. We prove
    the statement by transfinite induction.

    \begin{itemize}
      \item If $\alpha = 0$ we have $G^0(\bot_\to) = \bot_\to$
        which is additive.
      \item If $\alpha = n + 1$ (non-limit ordinal) we have
        \begin{align*}
          G^{n+1}(\bot_\to)(\a \lor \b)
          & = G(G^n(\bot_\to)(\a \lor \b))
          & [\mbox{definition of $G$}] \\
          & = G ( G^n(\bot_\to)(\a) \lor
          G^n(\bot_\to)(\b) ) & [\mbox{inductive hypothesis}] \\
          & = G ( G^n(\bot_\to)(\a) ) \lor
          G  ( G^n(\bot_\to)(\b)  )
          & [\mbox{inductive hypothesis}] \\
          & =  G^{n+1}(\bot_\to)(\a) \lor
          G^{n+1}(\bot_\to)(\b)
          & [\mbox{definition of  $G^{n+1}$}] \\
        \end{align*}
      \item If $\alpha$ is a limit ordinal we have
        \begin{equation*}
          G^{\alpha}(\bot_\to) = \bigvee_{\beta \leq \alpha}
          G^{\beta}(\bot_\to)
        \end{equation*}
        which is additive since it is the join of additive functions
        (by inductive hypothesis).
    \end{itemize}
  \end{proof}

  \begin{lemma}
    \label{le:encoding-fix}
    The encoding of $\r^*$ as $\upmu \X. (\one \oplus \r; \X)$ is correct.
  \end{lemma}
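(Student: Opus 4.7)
The plan is to show that $\sem[]{\r^*}$ coincides with the least fixpoint of the semantic functional $G:(C\to C)\to (C\to C)$ induced by the body $\one\oplus \r;\X$, so that by the definition of $\sem[]{\upmu\X.\,\cdot}$ one obtains $\sem[]{\upmu\X.(\one\oplus \r;\X)}=\operatorname{lfp}(G)=\sem[]{\r^*}$. First I would unfold the denotational clauses to get an explicit description of $G$: using $\sem[]{\one}=\lambda x.\,x$, $\sem[]{\r_1\oplus \r_2}=\sem[]{\r_1}\vee\sem[]{\r_2}$, and $\sem[]{\r_1;\r_2}=\sem[]{\r_2}\circ\sem[]{\r_1}$, and the clause for variables with environment $\eta[\X\mapsto f]$, one obtains $G(f)=\lambda x.\,x\vee f(\sem[]{\r}(x))$.

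Second, I would verify that $\sem[]{\r^*}$ is a fixpoint of $G$. Unfolding $\sem[]{\r^*}(x)=\bigvee_{k\in\mathbb{N}}\sem[]{\r}^k(x)$ and using $\sem[]{\r}^0=\lambda x.\,x$ together with $\sem[]{\r}^k(\sem[]{\r}(x))=\sem[]{\r}^{k+1}(x)$, a direct computation gives
\[
  G(\sem[]{\r^*})(x) \;=\; x \vee \sem[]{\r^*}(\sem[]{\r}(x)) \;=\; \sem[]{\r}^0(x) \vee \bigvee_{k\geq 1}\sem[]{\r}^{k}(x) \;=\; \sem[]{\r^*}(x).
\]

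Third, for minimality, I would take any pre-fixpoint $h$ of $G$ (so $x\vee h(\sem[]{\r}(x))\leq h(x)$ for every $x\in C$) and prove by induction on $k\in\mathbb{N}$ the pointwise bound $h(x)\geq \sem[]{\r}^k(x)$. The base case is $h(x)\geq x = \sem[]{\r}^0(x)$, and for the step $h(x)\geq h(\sem[]{\r}(x))\geq \sem[]{\r}^k(\sem[]{\r}(x)) = \sem[]{\r}^{k+1}(x)$, the second inequality being the inductive hypothesis instantiated at $\sem[]{\r}(x)$. Taking the join over $k$ yields $h\geq \sem[]{\r^*}$, so by Knaster--Tarski $\sem[]{\r^*}=\operatorname{lfp}(G)$.

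I do not anticipate a serious obstacle. The only subtle point worth emphasising is that the argument does \emph{not} require additivity of $\sem[]{\r}$, in contrast with the companion Lemma~\ref{le:encoding-fix2} for the encoding $\r^*=\upmu\X.(\one\oplus \X;\r)$, where additivity of $\sem[]{\r}$ (ensured by Lemma~\ref{le:additivity-regular-expressions} when $\bsem{\cdot}$ is additive) is needed in order to push $\sem[]{\r}$ across the join of iterates when checking the fixpoint equation. Here $\sem[]{\r}$ is instead applied to the single argument $x$ before the recursive invocation, which is precisely what makes the induction go through unconditionally.
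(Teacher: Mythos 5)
Your proposal is correct, and every step checks out: the functional induced by the body is indeed $G(f)=\lambda x.\,x\vee f(\sem[]{\r}(x))$, your verification that $\sem[]{\r^*}$ is a fixpoint only uses $\sem[]{\r}^k\circ\sem[]{\r}=\sem[]{\r}^{k+1}$, and the pre-fixpoint induction for minimality is sound (the inductive hypothesis is universally quantified over the argument, so instantiating it at $\sem[]{\r}(x)$ is legitimate). Your closing remark about additivity being unnecessary here, in contrast with the encoding $\upmu\X.(\one\oplus\X;\r)$, matches the paper's own discussion. The route differs from the paper's in one respect: the paper computes the Kleene iterates explicitly, showing $G^n(\bot_\to)=\lambda c.\,\bigvee_{i\leq n-1}\sem[]{\r}^i(c)$ by induction and then identifying $\operatorname{lfp}(G)$ with $G^\omega(\bot_\to)$, whereas you characterise $\operatorname{lfp}(G)$ via Knaster--Tarski as the least pre-fixpoint. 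The paper's approach yields a concrete description of the approximants (which is reused in spirit for the additive variant and for the $\upmu$LCL rules on fixpoint approximants), but it implicitly relies on the iteration stabilising at $\omega$, i.e.\ on continuity of $G$ in $f$; your argument sidesteps that entirely and needs only monotonicity of $G$, at the price of not exhibiting the finite approximants. Both are complete proofs of the statement.
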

  \begin{proof}
    Observe that $\sem{\upmu \X. (\one \oplus \r; \X)} =
    \operatorname{lfp}(G)$ where $G: C^C \to C^C$  
    is defined by
    \begin{center}
      $G(f) = \sem[{\eta[\X \mapsto f]}]{\one \oplus \r; \X}
      = \lambda c.\, c \lor f(\sem{\r}c)$
    \end{center}
    Let $\bot_\to = \lambda c. \bot_{C}$  be the bottom element of
    $C^{C}$. One can show that $G^0(\bot_\to) = \bot_\to$ and, inductively,

    \[
      G^n(\bot_\to) = \lambda c.\, \bigvee_{i \leq n-1} \sem{\r}^i(c).
    \]
    Therefore
    $\sem{\upmu \X. (\one \oplus \r; \X)} = \operatorname{lfp}(G) =
    G^\omega(\bot_\to) = \lambda c.\, \bigvee_{i < \omega}
    \sem{\r}^i(c) = \sem{\r^*}$.
  \end{proof}

  \begin{lemma}
    \label{le:encoding-fix2}
    When the semantics is additive $\r^*$ can be also encoded as
    $\upmu \X. (\one \oplus \X;r)$
  \end{lemma}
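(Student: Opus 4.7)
The plan is to unfold Kleene's iterative construction of $\operatorname{lfp}(G')$ for the functional $G':C^C\to C^C$ associated with $\upmu \X.(\one\oplus\X;\r)$, and match the result with $\sem{\r^*}$. Using the clause $\sem{\X;\r}=\sem{\r}\circ\sem{\X}$ (note the reversed composition compared with the encoding $\upmu \X.(\one\oplus\r;\X)$ treated in Lemma~\ref{le:encoding-fix}), $G'$ simplifies to $G'(f)=\lambda c.\, c\vee \sem{\r}(f(c))$.

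Next I would invoke Lemma~\ref{le:additivity-regular-expressions} to derive additivity of $\sem{\r}$ from the assumed additivity of $\bsem{\cdot}$. Additivity of $\sem{\r}$ transfers pointwise to $G'$, which makes $G'$ itself additive on $C^C$ and hence $\omega$-continuous. Kleene's theorem then gives $\operatorname{lfp}(G')=\bigvee_{n\in\N} G'^n(\bot_\to)$, where $\bot_\to=\lambda c.\bot_C$.

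The core of the proof is then a routine induction on $n$ establishing
\[G'^n(\bot_\to)=\lambda c.\,\bigvee_{i<n}\sem{\r}^i(c).\]
The base case $n=0$ is the empty join. For the inductive step one computes
\[G'^{n+1}(\bot_\to)(c)=c\vee \sem{\r}\!\Bigl(\bigvee_{i<n}\sem{\r}^i(c)\Bigr)=c\vee \bigvee_{i<n}\sem{\r}^{i+1}(c)=\bigvee_{i<n+1}\sem{\r}^i(c),\]
where the middle equality is where additivity of $\sem{\r}$ is used. Taking the supremum over $n\in\N$ yields $\operatorname{lfp}(G')=\lambda c.\,\bigvee_{k\in\N}\sem{\r}^k(c)=\sem{\r^*}$, as required.

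The only delicate point, and the reason for the additivity hypothesis, is precisely this inductive step: pushing the iterated join past $\sem{\r}$ is essential, and this is exactly the asymmetry with Lemma~\ref{le:encoding-fix}, where the recursion variable sits on the outside of the composition and no such push is needed. Without additivity one would only obtain the inequality $\sem{\r}(\bigvee_i\sem{\r}^i c)\ge\bigvee_i\sem{\r}^{i+1}c$, consistent with the failure anticipated in Example~\ref{ex:non-additive-fix}.
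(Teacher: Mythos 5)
Your proof is correct and follows essentially the same route as the paper's: unfold the Kleene iteration of the functional $G(f)=\lambda c.\, c\vee\sem{\r}(f(c))$, establish the closed form of the iterates by induction, and use additivity of $\sem{\r}$ (via Lemma~\ref{le:additivity-regular-expressions}) exactly at the step where the join is pushed through $\sem{\r}$. The only differences are cosmetic (your indexing starts the iterates at the empty join where the paper starts at $\lambda c.\,c$, and you are slightly more explicit about continuity of the functional), and your closing remark correctly pinpoints the asymmetry with Lemma~\ref{le:encoding-fix} and the failure mode of Example~\ref{ex:non-additive-fix}.
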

  \begin{proof}
    Observe that $\sem{\upmu \X. (\one \oplus \X;\r)} =
    \operatorname{lfp}(G)$ where $G: C^C \to C^C$ is defined by
   \begin{center}
      $G(f) = \sem[{\eta[\X \mapsto f]}]{\one \oplus \X;\r}
      = \lambda c.\, c \lor \sem{\r}(f(c))$
    \end{center}
    Let $\bot_\to = \lambda c. \bot_C$  be the bottom element of
    $C^C$. One can show that $G^0(\bot_\to) = \lambda c.\, c$
    and, inductively,
    \[
      G^n(\bot_\to) = \lambda c.\, \bigvee_{i \leq n} \sem{\r}^i(c).
    \]
    In fact,
    $G^{n+1}(\bot_\to) = G(G^n(\bot_\to)) = \lambda c.\, c \lor
    \sem{\r}(\bigvee_{i \leq n} \sem{\r}^i(c)) = \lambda c.\, c \lor
    \bigvee_{i \leq n} \sem{\r}^{i+1}(c) = \lambda c.\, \bigvee_{i
    \leq n+1} \sem{\r}^i(c)$, where the second last passage is
    allowed by additivity of $\sem{\r}$.
    Hence we conclude as in Lemma~\ref{le:encoding-fix}.
  \end{proof}

  \begin{example}[Non-additive $f$ fixpoint encoding]
    \label{ex:non-additive-fix}
    The equation $\r^* = \upmu \X . (\one \oplus \r;\X)$ does not
    generally hold if the semantics is not additive. Consider for
    example $U=\set{a, b, c}$ and the function
    $u: \pow{U} \to \pow{U}$ such that
    $u(\set{a,b}) = u(U) = U$, $u(S) = \set{b}$
    for any other $S \in \pow{U}$ and let $\mathsf{u}$ be a basic
    expression such that $\bsem{\mathsf{u}}=u$. Then
    \begin{align*}
      \sem{\e}^*(\set{a})
      & = \set{a} \cup u^1(\set{a})
      \cup u^2(\set{a}) \cup \dots \\
      & = \set{a} \cup \set{b} \cup u(\set{b}) = \set{a, b}
    \end{align*}
    while instead
    $\sem{\upmu \X . (\one \oplus \X;\mathsf{u})}= \operatorname{lfp}(G)$
    where, for $f : \pow{\set{a,b,c}} \to  \pow{\set{a,b,c}}$ it holds
    \begin{center}
      $G(f) = \sem[{[X \mapsto \beta]}]{\one \oplus \X;\mathsf{u}} = \lambda
      S.\, (S \cup u(f(S)))$
    \end{center}
    If we denote by $\bot = \lambda S.\, \varnothing$ we have
    \begin{align*}
      G^1(\bot) & = \lambda S.\ S \cup \set{b} \\
      G^2(\bot) & = \lambda S.\ S \cup u(S \cup \set{b})
    \end{align*}
    One can see that $G^2(\bot)(S) = U$ if $a \in S$, 
      while $G^2(\bot)(S)=S \cup \set{b}$ otherwise. 
      Therefore,      
      $G^2(\bot) = \operatorname{lfp}(G)$ and
      thus $\sem{\upmu \X . (\one \oplus \X;\mathsf{u})}(\set{a}) =
      G^2(\bot)(\set{a}) = U \supsetneq \sem{\mathsf{u}}^*(\set{a})$.
      \qed
    \end{example}
  \end{toappendix}

  \begin{toappendix}
    We write $\subst{\r}{\X}{\s}$ for the (capture
    free) substitution of the free occurrences of $\X$ for $\s$ in $\r$.

    \begin{lemma}[Substitution lemma]
      \label{le:substitution-lemma}
      Let $\eta$ be an environment, $\r, \s$ be $\kaf$ terms,
      $\X \in \RVar$. Then we have
      $\llbracket \r
        \rrbracket_{\eta[\X\mapsto\sem{\s}]} =
        \sem{\subst{\r}{\X}{\s}}$.
       \end{lemma}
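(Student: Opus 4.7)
The plan is to proceed by structural induction on the term $\r$, using the assumption that substitution is capture-free so that the standard variable-hygiene issues for the binder case are handled by $\alpha$-renaming if needed.

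For the base cases $\r \in \{\one, \zero, \e\}$, both sides of the equation are independent of the environment's value at $\X$ and $\subst{\r}{\X}{\s} = \r$, so the two sides agree immediately from the definition of $\sem{\cdot}$. The variable case $\r = \Y$ splits in two: if $\Y = \X$, then $\subst{\X}{\X}{\s} = \s$ and the left-hand side evaluates to $(\eta[\X \mapsto \sem{\s}])(\X) = \sem{\s}$, matching the right-hand side; if $\Y \neq \X$, the substitution leaves $\Y$ untouched and both sides equal $\eta(\Y)$.

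The compositional cases $\r_1;\r_2$, $\r_1 \oplus \r_2$, and $\r_1^{*}$ are handled by unfolding the clauses of $\sem{\cdot}$, noting that substitution distributes over the $\kaf$ constructors (e.g.\ $\subst{(\r_1;\r_2)}{\X}{\s} = \subst{\r_1}{\X}{\s};\subst{\r_2}{\X}{\s}$), and then invoking the inductive hypothesis on the subterms. No subtleties arise because none of these constructors binds variables.

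The principal obstacle is the fixpoint case $\r = \upmu \Y. \r_1$. Here I would split on whether $\Y = \X$: if so, the occurrences of $\X$ in $\r_1$ are bound and $\subst{(\upmu \X. \r_1)}{\X}{\s} = \upmu \X. \r_1$, while on the semantic side the update to $\X$ in the outer environment is immediately overwritten inside the $\operatorname{lfp}$ of $\lambda f.\, \sem[{\eta[\X \mapsto \sem{\s}][\X \mapsto f]}]{\r_1} = \lambda f.\, \sem[{\eta[\X \mapsto f]}]{\r_1}$, so both sides coincide. If $\Y \neq \X$, capture-freeness ensures $\Y \notin \fv{\s}$, so the two environment updates commute, i.e.\ $\eta[\X \mapsto \sem{\s}][\Y \mapsto f] = \eta[\Y \mapsto f][\X \mapsto \sem{\s}]$, and I can apply the inductive hypothesis to $\r_1$ under the extended environment $\eta[\Y \mapsto f]$ to rewrite the functional whose $\operatorname{lfp}$ is being computed as $\lambda f.\, \sem[{\eta[\Y \mapsto f]}]{\subst{\r_1}{\X}{\s}}$; taking $\operatorname{lfp}$ on both sides then matches $\sem{\upmu \Y. \subst{\r_1}{\X}{\s}} = \sem{\subst{(\upmu \Y. \r_1)}{\X}{\s}}$. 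The only delicate point is justifying that the two functionals on $C \to C$ are literally the same function so that their least fixpoints coincide, which follows pointwise from the inductive hypothesis.
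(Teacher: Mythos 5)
Your proposal is correct and follows essentially the same route as the paper: structural induction on $\r$, with the variable and fixpoint cases split on whether the bound/occurring variable equals $\X$, using capture-freeness (via $\alpha$-renaming) to commute the environment updates in the $\upmu\Y.\r_1$ case with $\Y\neq\X$. The only cosmetic difference is that the paper justifies the fixpoint step by unfolding $\operatorname{lfp}$ as the meet of pre-fixpoints, whereas you observe directly that the two functionals coincide pointwise by the inductive hypothesis and hence have the same least fixpoint; both arguments are fine.
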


    \begin{proof}
      The proof goes by structural induction on $\r$.
      
      \subparagraph*{($\r = \e$)}
      In this case $\llbracket
      \e\rrbracket_{\eta[\X \mapsto \sem{\s}]} =
      \sem{\e} = \sem{\subst{\e}{\X}{\s}}$.

      \subparagraph*{($\r = \Y$)}
      If $\Y=\X$ then $\llbracket \X\rrbracket_{\eta[\X
      \mapsto \sem{\s}]} =
      \eta[\X \mapsto \sem{\s}](\X) = \sem{\s} =
      \sem{\subst{\X}{\X}{\s}}$. Otherwise
      $\llbracket \Y\rrbracket_{\eta[\X
      \mapsto \sem{\s}]} =
      \eta[\X \mapsto \sem{\s}](\Y) = \eta(\Y) =
      \sem{\subst{\Y}{\X}{\s}}$

      \subparagraph*{($\r = \ \r_1 \oplus \r_2$)} 
      Let $c \in C$. We have
      \begin{align*}
        \llbracket \r_1 \oplus \r_2 \rrbracket_{\eta[\X \mapsto
        \sem{\s}]} (c)
        & =
        \llbracket \r_1\rrbracket_{\eta[\X \mapsto \sem{\s}]} (c)
        \vee \llbracket \r_2\rrbracket_{\eta[\X \mapsto
        \sem{\s}]} (c) \\
        & \overset{\clubsuit}{=} \sem{\subst{\r_1}{\X}{\s}}c \vee
        \sem{\subst{\r_2}{\X}{\s}}c \\
        & = \sem{\subst{\r_1}{\X}{\s} \oplus
        \subst{\r_2}{\X}{\s}}c \\
        & = \sem{\subst{(\r_1 \oplus \r_2)}{\X}{\s}}c
      \end{align*}
      where at $(\clubsuit)$ we used the inductive hypothesis.
      \subparagraph*{($\r = \ \r_1; \r_2$)} 
      \begin{align*}
        \llbracket \r_1 ; \r_2 \rrbracket_{\eta[\X \mapsto \sem{\s}]}
        & = \llbracket \r_2 \rrbracket_{\eta[\X \mapsto \sem{\s}]}
        \circ \llbracket \r_1 \rrbracket_{\eta[\X \mapsto
        \sem{\s}]} \\
        & \overset{\clubsuit}{=} \sem{\subst{\r_2}{\X}{\s}} \circ
        \sem{\subst{\r_1}{\X}{\s}}  \\
        & = \sem{\subst{\r_1}{\X}{\s} ; \subst{\r_1}{\X}{\s}} \\
        & = \sem{\subst{(\r_1; \r_2)}{\X}{\s}}
      \end{align*}
      where at $(\clubsuit)$ we used the inductive hypothesis.
      \subparagraph*{($\r = \upmu \Y. \r_1$)} 
      If $\Y=\X$, i.e., $\r = \upmu \X. \r_1$ then we conclude trivially
      since $\subst{(\upmu \X . \r_1)}{\X}{\s} = \upmu \X .
      \r_1$, and thus:
      \begin{align*}
        \llbracket \upmu \X . \r_1 \rrbracket_{\eta[\X
        \mapsto \sem{\s}]}
        & = \sem{\upmu \X . \r_1} 
        = \sem{\subst{(\upmu \X . \r_1)}{\X}{\s}}
      \end{align*}
      as desired, where, in the first equality, we used the fact that if $\eta(\Y) =
      \eta'(\Y)$ for all
      $\Y \in \mathsf{fv}(\r)$ then $\sem{\r} = \sem[\eta']{\r}$.

      \noindent
      Let instead $\r = \upmu \Y. \r_1$ with $\Y \neq \X$. By alpha-renaming, we can assume without
      loss of generality $\Y \notin \mathsf{fv}(\s)$. Then, letting 
      $F(g) = \llbracket \r_1  \rrbracket_{\eta[\X \mapsto
        \sem{\s}][\Y \mapsto g]}$,
      by definition of least fixpoint we
      have
      \begin{align*}
        \llbracket\upmu \Y. \r_1  \rrbracket_{\eta[\X \mapsto \sem{\s}]}
        & = \bigwedge \set{g \mid F(g) \leq g} \\
        & = \bigwedge \set{g \mid \llbracket \r_1
          \rrbracket_{\eta[\X \mapsto \sem{\s}][\Y \mapsto g]}
        \leq g} \\
        & = \bigwedge \set{g \mid \llbracket \r_1
          \rrbracket_{\eta[\Y \mapsto g][\X \mapsto \sem{\s}]}
        \leq g} \\
        & \overset{\clubsuit}{=} \bigwedge \set{\llbracket \subst{\r_1}{\X}{\s}
        \rrbracket_{\eta[\Y \mapsto g]} \leq g} \\
        & = \sem{\upmu \Y . (\subst{\r_1}{\X}{\s})} \\
        & = \sem{\subst{(\upmu \Y . \r_1)}{\X}{\s}}
      \end{align*}
      where at $(\clubsuit)$ we used the inductive hypothesis.
      \hfill
    \end{proof}

    \begin{remark}
      Notice that
      $\sem{\upmu \X . \r} = \sem{\subst{\r}{\X}{\upmu \X.\r}} =
      \llbracket \r
      \rrbracket_{\eta[\X\mapsto\sem{\upmu \X . \r}]}$ where the first
      equality follows from the substitution lemma
      (Lemma~\ref{le:substitution-lemma}), while
      $ \sem{\upmu \X . \r} = \llbracket \r
      \rrbracket_{\eta[\X\mapsto\sem{\upmu \X . \r}]}$ since
      if we define
      $F_{\r} : C^C \to C^C$ 
      to be such that
      $F_{\r}(f) = \llbracket \r \rrbracket_{\eta[\X
      \mapsto f]}$ then
      $\sem{\upmu \X . \r} = \operatorname{lfp}(F_{\r})$ by definition,
      hence
      $\sem{\upmu \X . \r} = F_{\r} \Bigl ( \sem{\upmu \X . \r} \Bigr)=
      \llbracket \r \rrbracket_{\eta[\X\mapsto\sem{\upmu \X . \r}]}$.
    \end{remark}

  \end{toappendix}

  \subparagraph{The Language $\MOKA$.} Given a transition system
  $T = (\Sigma, \operatorname{I}, \mathbf{P}, \trel, \vdash)$, 
  $\MOKA$ is an instance of $\kaf$ interpreted
  over stacks of frames,
  each frame representing a computation path. %

  \begin{definition}[Frame, stack]
    A \emph{frame} is a pair
    $\frame{\sigma, \Delta} \in \Sigma \times \pow{\Sigma}$. We
    denote by $\operatorname{F}_{\Sigma}$ the set of frames. A
    \emph{stack} is a finite non-empty sequence of frames, i.e.
    an element of
    $\str{\operatorname{F}_{\Sigma}}$,
    denoted by $\frame{\sigma, \Delta}\concat S$, where $S$
    is a stack or the empty sequence $\varepsilon$.
  \end{definition}

  A frame $\frame{\sigma, \Delta}$ represents
  a computation in $T$ where $\sigma$ is the current state and $\Delta$
  is the set of
  traversed states, used for loop-detection.   The
  order of the traversed states and their possible multiple occurrences
  are abstracted
  away as they are irrelevant when checking the satisfaction of a
  formula. Frames are stacked to deal with formulae with
  nested operators.

  The language ${\MOKA}$ is defined as instance of $\kaf$ with basic expressions
  for extending and filtering
  frames, for (de)constructing
  stacks, and to keep
  track of fixed-point equations. \label{ref:mocha-b.e.}

  \begin{definition}[$\MOKA$ language]
    The language $\MOKA$ is an instance of
    $\kaf$ with the
    following basic expressions, where $\ps$ ranges over atomic
    propositions:
    \begin{center}
      $\e :: = \p \mid \np \mid \loops \mid \nexte \mid \add \mid \reset
      \mid  
      \push \mid \pop$
    \end{center}
  \end{definition}

  The (concrete) semantics of $\MOKA$ is given over the powerset of the
  set of stacks, $C = \pow{\str{\operatorname{F}_{\Sigma}}}$,
  ordered by subset inclusion.
  Although we could focus on stacks of uniform length, we use a larger domain to simplify the notation.

  The semantics of $\MOKA$ commands
  $\sem{\r} : \pow{\str{\operatorname{F}_{\Sigma}}} \to
  \pow{\str{\operatorname{F}_{\Sigma}}}$ follows from the general
  definition for $\kaf$, once we specify the
  semantics of its basic expressions.

  \begin{definition}[Basic expression semantics]
  Given 
  $T = (\Sigma, \operatorname{I}, \mathbf{P}, \trel, \vdash)$,
    the semantics of $\MOKA$ basic expressions 
     is
    the additive extension of the functions below, where
    $\frame{\sigma, \Delta}\concat S \in \str{\operatorname{F}_{\Sigma}}$:
    \begin{center}
      \begin{tabular}{r@{\hspace{2pt}}c@{\hspace{2pt}}l
        r@{\hspace{2pt}}c@{\hspace{2pt}}l}
        $\bsem{\p}\set{\frame{\sigma, \Delta} \concat S}$ & $\triangleq$
        & $\set{\frame{\sigma, \Delta}\concat S \mid \sigma \vdash p }$
        & $\bsem{\np}\set{\frame{\sigma, \Delta} \concat S}$ & $\triangleq$
        & $\set{\frame{\sigma, \Delta}\concat S \mid \sigma \vdash \neg p }$
        \\ 
        $\bsem{\loops}\set{\frame{\sigma, \Delta} \concat S}$ & $\triangleq$
        & $\set{\frame{\sigma, \Delta}\concat S \mid \sigma \in \Delta }$ &
        $\bsem{\nexte}\set{\frame{\sigma, \Delta} \concat S}$ & $\triangleq$
        & $\set{\frame{\sigma', \Delta}\concat S \mid \sigma
        \trel \sigma' }$ \\
        $\bsem{\add}\set{\frame{\sigma, \Delta} \concat S}$ & $\triangleq$
        & $\set{\frame{\sigma, \Delta \cup \set{\sigma}}\concat S }$
        & $\bsem{\reset}\set{\frame{\sigma, \Delta} \concat S}$ & $\triangleq$
        & $\set{\frame{\sigma, \varnothing}\concat S }$ \\
        $\bsem{\push}\set{\frame{\sigma, \Delta} \concat S}$ & $\triangleq$
        & $\set{ \frame{\sigma, \Delta}\concat\, \frame{\sigma,
        \Delta}\concat S }$
        & $\bsem{\pop}\set{\frame{\sigma, \Delta} \concat S}$ &
        $\triangleq$ & $\set{S \mid S \neq \varepsilon}$
      \end{tabular}
    \end{center}
  \end{definition}
  
  The basic expressions $\p$, $\np$, $\loops$, $\nexte$,
  $\add$, $\reset$  %
  operate on the top frame.
  The filter $\p$ (resp. $\np$) checks the validity of the
  proposition $p$ (resp. $\neg p$),  $\loops$
  checks if the
  current state loops
  back
  to some state in
  the trace, $\nexte$ extends the trace by
  one step in all possible ways, $\add$ adds
  the current state to the trace and  $\reset$ empties
  the trace.
  Instead,  $\push$ and
  $\pop$ change the stack length:
  $\push$ extends the  stack to start a new
  trace, while 
  $\pop$ restores the previous trace
  from the stack.
  Given a set of states $X \subseteq \Sigma$, we  write
  $\sem{\r}X$ as a shorthand for
  $\sem{\r}(\set{\frame{\sigma, \varnothing} \mid \sigma \in X})$.

  \section{Formulae as $\MOKA$ programs}
  \label{sec:s4}

  We show that ACTL and $\mu_\Box$-calculus formulae $\varphi$
  can be transformed into
  $\MOKA$ programs $\PtoR{\varphi}$
  so that for any transition system $T$, the semantics
  of $\PtoR{\varphi}$ over $T$ consists exactly of
  the counterexamples to the validity of $\varphi$, that is, the states
  which do not satisfy $\varphi$.

  \begin{theorem}[Model checking as program verification]
    \label{th:counterexamples-general}
    Given 
  $T = (\Sigma, \operatorname{I}, \mathbf{P}, \trel, \vdash)$,
    for any ACTL or $\mu_{\Box}$ formula 
    $\varphi$ 
    and set of stacks $P \subseteq \str{\operatorname{F}_{\Sigma}}$,
    it holds
    $\sem{\PtoR{\varphi}} P = \set{\frame{\sigma,\Delta}\concat S \in
      P \mid \sigma \not\models \varphi}.$
  \end{theorem}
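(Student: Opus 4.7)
The natural strategy is structural induction on $\varphi$, uniformly covering the $\ACTL$ and $\mu_\Box$ fragments. By additivity of $\sem[]{\cdot}$ (Lemma~\ref{le:additivity-regular-expressions}), it suffices to prove the theorem pointwise, i.e., for every single stack $\frame{\sigma,\Delta}\concat S$; the general statement then follows since $\PtoR{\varphi}$ acts as a filter that either keeps a stack unchanged or discards it. The inductive hypothesis states that, for each immediate sub-formula $\psi$, $\sem[]{\PtoR{\psi}} Q = \{\frame{\sigma,\Delta}\concat S \in Q \mid \sigma \not\models \psi\}$ for every $Q \subseteq \str{\operatorname{F}_\Sigma}$. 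The base cases $\varphi = p$ and $\varphi = \neg p$ follow immediately from the semantics of $\np$ and $\p$ since $\sigma \not\models p \Leftrightarrow \sigma \vdash \neg p$. For Boolean combinations, counterexamples to $\varphi_1 \lor \varphi_2$ must violate both disjuncts, matching the sequential composition $\PtoR{\varphi_1};\PtoR{\varphi_2}$, while counterexamples to $\varphi_1 \land \varphi_2$ violate at least one, matching the nondeterministic choice $\PtoR{\varphi_1} \oplus \PtoR{\varphi_2}$. For $\AX\varphi_1$ and $\Box\varphi_1$ the pattern $\push;\nexte;\PtoR{\varphi_1};\pop$ works: an auxiliary ``frame-invariance'' lemma shows that $\push;\r;\pop$ acts on the top frame as an overall filter, and on a pushed copy $\nexte;\PtoR{\varphi_1}$ selects exactly those stacks whose current state has some successor failing $\varphi_1$.

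The substantive cases are the path-quantified operators $\AG$, $\AF$, $\AU$ and the fixpoints $\mu x.\varphi_x$, $\nu x.\varphi_x$. For the reachability-style operators, notably $\AG\varphi_1$, a counterexample admits a finite witness (a path to some bad state), so the encoding $\push;\nexte^*;\PtoR{\varphi_1};\pop$ recovers the desired set as the join of finite-step approximations via Kleene iteration. For the safety-style operators $\AF\varphi_1$, $\varphi_1 \AU \varphi_2$ and $\nu x.\varphi_x$, a counterexample requires an \emph{infinite} bad path; since $\Sigma$ is finite, by a König-style argument this is equivalent to the existence of a \emph{lasso}, a reachable cycle of bad states. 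The primitives $\add$ (which accumulates the current state into $\Delta$) and $\loops$ (which filters stacks whose current state already lies in $\Delta$) are designed precisely to expose such lassos inside a $\upmu\X$ computation, so that the lfp over the finite state space converges in finitely many steps. For the genuine fixpoint cases, the equation $\sem[]{\upmu\X.\r} = \lfp(\lambda f.\sem[{\eta[\X\mapsto f]}]{\r})$ together with the substitution lemma (Lemma~\ref{le:substitution-lemma}) reduces the goal to a fixpoint equation on $\pow{\Sigma}$; duality in the universal fragment $\neg \nu x.\varphi_x \equiv \mu x.\neg\varphi_x$, combined with the lasso encoding for greatest fixpoints of validity, matches this with the corresponding lfp in $\mu_\Box$ semantics.

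The principal obstacle will be establishing the lasso-based correspondence between the $\MOKA$ computation and the (co)inductive semantics of the logic with sufficient formal precision. Concretely, one must exhibit an invariant ensuring that, throughout the iteration of $\upmu\X.(\nexte;\add;\cdots;\X)$-style programs starting from $\frame{\sigma,\varnothing}\concat S$, the frames produced faithfully enumerate the reachable acyclic prefixes of paths from $\sigma$ in $T$, and that $\loops$ triggers exactly when such a prefix would close into a cycle. Once this invariant is in place, the remaining arguments are routine: the Boolean and next-step cases fall out of additivity and the frame-invariance property of $\push;\cdot;\pop$ blocks, while the fixpoint cases reduce to uniqueness of lfps. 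The stack layering via $\push$/$\pop$ ensures that inductive hypotheses on sub-formulae apply to locally fresh frames without interference with the outer context $S$, so the overall computation on an arbitrary set of stacks $P$ factors pointwise through the filtering claim, yielding the theorem as stated.
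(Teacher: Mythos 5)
Your overall plan---reduce to single stacks by additivity, then induct on $\varphi$, treating $\PtoR{\varphi}$ as a filter and using a lasso argument for the operators that require infinite witnesses---matches the paper's proof for the ACTL cases (Theorem~\ref{th:actl-program}) and would go through there essentially as you describe. The gap is in the $\mu_\Box$ fixpoint cases, on two counts. First, plain structural induction is not well-founded: the clause $\PtoR{x}=\X$ ties the variable back (via the environment) to the body of the enclosing fixpoint, so unfolding $\mu x.\varphi_x$ returns you to a formula that is not structurally smaller. The paper resolves this with a lexicographic well-founded order whose crucial decreasing component is the visited set $\Delta$: each regeneration of the fixpoint variable passes through $\add$, which strictly enlarges $\Delta\subseteq\Sigma$, and $\Sigma$ is finite. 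Your proposal never identifies a terminating measure for the fixpoint unfoldings.

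Second, your claim that the fixpoint cases ``reduce to uniqueness of lfps'' via the duality $\neg\nu x.\varphi_x\equiv\mu x.\neg\varphi_x$ does not establish correctness of the $\loops$-based encoding of $\mu x.\varphi_x$. The program $\upmu\X.(\loops\oplus(\add;\PtoR{\varphi_x}))$ does not compute the least fixpoint of the dual operator on $\pow{\Sigma}$; it detects failure of $\mu x.\varphi_x$ by finding a regeneration of $x$ at an already-visited state. Proving that this detection is sound and complete is exactly the content of a local model-checking argument, and the paper supplies it by introducing a Stirling--Walker-style tableau system with judgements $\sigma,\Gamma\vdash\varphi$, showing the program's output coincides with the (non)existence of a successful tableau, and separately (Corollary~\ref{co:programs-mu}) showing the instrumented encoding with the extra $\nloops$ test agrees with the published one, while $\nu x.\varphi_x$ is handled by a finite-approximant argument. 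Your ``acyclic path prefixes'' invariant is also too weak here: for a general body $\varphi_x$ containing $\Box$, conjunctions and disjunctions interleaved with $x$, the states accumulated in $\Delta$ are regeneration points in an unfolding \emph{tree}, not states along a single transition-system path, so the invariant must be phrased over the tableau/unfolding structure rather than over paths of $T$. Only the $\nu$ case genuinely reduces to comparing approximants of two least fixpoints as you suggest.
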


\noindent
It follows that $\sem{\PtoR{\varphi}} P\subseteq P$ and that $\sem{\PtoR{\varphi}}\{\sigma\}  = \varnothing$ iff $\sigma \models \varphi$. 
It is also worth noting that
the semantics of $\MOKA$
programs encoding formulae is a lower closure on sets of states, being
monotone, reductive and idempotent (and it preserves arbitrary
unions)~\cite[Section~3.2.3]{mine17}.
In this respect, it behaves analogously to the collecting semantics of
Boolean tests which filter out memory states, keeping only those that
satisfy the condition. As a consequence, similarly to what happens
when abstracting Boolean tests, in any abstract domain $A$
approximating sets of states the identity $\lambda a\in A.\, a$, is a
correct (over-)approximation of $\sem{\PtoR{\varphi}}$.

  \subparagraph{ACTL as $\MOKA$.}
  To each ACTL formula $\varphi$ we assign the $\MOKA$ program
  $\PtoR{\varphi}$
  as follows:
  \begin{center}
    \begin{tabular}{r@{\hspace{2pt}}c@{\hspace{2pt}}l@{\hspace{-130pt}}r@{\hspace{2pt}}c@{\hspace{2pt}}l}
      $\PtoR{\,p}$ & $\triangleq$ & $\np$ & $\PtoR{\,\neg p}$ & $\triangleq$ & $\p$
      \\
      $\PtoR{\varphi_1\land \varphi_2}$ & $\triangleq$ &
      $\PtoR{\varphi_1} \oplus \PtoR{\varphi_2}$
      &
      $\PtoR{\varphi_1\lor \varphi_2}$ & $\triangleq$ &
      $\PtoR{\varphi_1} ; \PtoR{\varphi_2}$ \\
      $\PtoR{\AX \varphi}$ & $\triangleq$ & $\push; 
      \nexte;\PtoR{\varphi}; \pop$ \\
      $\PtoR{\AF \varphi}$ & $\triangleq$ & $\PtoR{\varphi};
      \push; \reset; \left (\add; \nexte;
      \PtoR{\varphi} \right )^*; \loops; \pop$ \\
      $\PtoR{\AG \varphi}$ & $\triangleq$ & $\push;
      \nexte^*;\PtoR{\varphi}; \pop$ \\
      $\PtoR{\varphi_1 \AU \varphi_2}$ & $\triangleq$ &
      $\PtoR{\varphi_2}; \push; \reset; \left
      (\add; \nexte;
      \PtoR{\varphi_2} \right )^*; \left ( \loops \oplus
      \PtoR{\varphi_1} \right ); \pop$ \\
    \end{tabular}
  \end{center}

  The proof of the correctness of this transform can be found in the appendix as Theorem~\ref{th:actl-program}.
  The intuition is that  $\MOKA$ programs $\PtoR{\varphi}$ act
  as (negative) filters: applied to a set of
  frames $\frame{\sigma, \Delta}\concat S$, each
  representing a computation that reached state $\sigma$, they
  filter out states where $\varphi$ is satisfied, keeping only candidate
  counterexamples to the validity of $\varphi$.
  In general, when the check requires exploring
  the future of the current state, a new frame 
  is started with $\push$;
  if the search is successful, a
  closing $\pop$ recovers the starting state violating the formula.
  We explain the main clauses defining 
  $\PtoR{\varphi}$.
  It turns out 
  that $\frame{\sigma, \Delta}\concat S$ is a counterexample for
  \begin{itemize}

    \item $\AX \varphi$ if $\varphi$ is violated in at least
      one successor state of $\sigma$ (as computed by
      $\nexte;\PtoR{\varphi}$). %

    \item $\AF \varphi$ if there is an infinite trace where
      $\varphi$ never holds. This is encoded by saying that
      $\varphi$ does not hold in the current state,
      i.e.\ $\PtoR{\varphi}$,
      and after that, there is an infinite trace traversing only states
      (collected through the command
      $(\add; \nexte;\PtoR{\varphi})^{*}$) that do not satisfy
      $\varphi$. %
      Since we deal with finite state
      systems, infinite
      traces can be identified with looping traces (whence the check
      $\loops$); for this to work, after the $\push$ we $\reset$ the past.

    \item $\AG\varphi$ if we can reach, by repeatedly applying $\nexte$,
      a counterexample to $\varphi$.

    \item $\varphi_1\AU\varphi_2$ if $\varphi_2$ does not hold in the
      current state, i.e.\ $\PtoR{\varphi_2}$, and progressing through
      states that do not satisfy $\varphi_2$ with
      $(\add; \nexte;\PtoR{\varphi_2})^{*}$, either we detect a
      maximal (looping) trace or a state where $\varphi_1$ does not
      hold.
  \end{itemize}

  \begin{example}[Control flow graphs and ACTL encoding]\label{ex:actl} 
    In our running example (see Examples~\ref{ex:cfg}
    and~\ref{ex:lts}), we check the property stating that if
    the program {\tt{c}} terminates then
    the variable $z$ is zero, which in the CFG means
    \emph{``when the exit node $e$ is reached, $z = 0$''} holds. This
    is expressed by
    the \ACTL formula  
    $\varphi = \AG(n = e \to z = 0)$, where we use $p \to \varphi'$ as syntactic sugar for $\neg p \lor \varphi'$.
    The corresponding $\MOKA$ program is:
    $\mathsf{m}_{\varphi} \triangleq \PtoR{\varphi} = \PtoR{\AG (n = e \to z = 0)} = \push; \nexte^*;
    \mathsf{n = e ?} ; \mathsf{z \neq 0?}; \pop$. \qed
  \end{example}

  \begin{toappendix}
    \subparagraph*{ACTL as $\MOKA$}
    \begin{theorem}[Formula satisfaction as program verification
      - ACTL version]
      \label{th:actl-program}
      Given 
      $T = (\Sigma, \operatorname{I}, \mathbf{P}, \trel, \vdash)$, for all frames $\frame{\sigma, \Delta}\concat S \in
      \str{\operatorname{F}_{\Sigma}}$,
      for any \emph{ACTL} formula $\varphi$  and any
      environment $\eta$ it holds: 
      \begin{equation*}
        \sem[]{\PtoR{\varphi}} \set{\frame{\sigma, \Delta}\concat S} =
        \set{\frame{\sigma, \Delta}\concat S \mid \sigma \notin
        \sem[]{\varphi}}
      \end{equation*}
    \end{theorem}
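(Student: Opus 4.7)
The plan is to proceed by structural induction on the ACTL formula $\varphi$, exploiting the additivity of the $\MOKA$ semantics so that it suffices to treat singleton inputs $\set{\frame{\sigma,\Delta}\concat S}$, for which the right-hand side is either the same singleton (when $\sigma\not\models\varphi$) or $\varnothing$ (when $\sigma\models\varphi$). The atomic cases $\varphi = p$ and $\varphi = \neg p$ follow directly from the definitions of $\bsem{\np}$, $\bsem{\p}$ and of the ACTL satisfaction relation. For the Boolean connectives, $\PtoR{\varphi_1 \land \varphi_2} = \PtoR{\varphi_1} \oplus \PtoR{\varphi_2}$ produces the union of counterexamples to the two conjuncts, whereas $\PtoR{\varphi_1 \lor \varphi_2} = \PtoR{\varphi_1}; \PtoR{\varphi_2}$ produces their intersection, since each $\PtoR{\varphi_i}$ acts as a negative filter, as remarked after the main theorem in the body.

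For the modalities $\AX$ and $\AG$, I would first establish an auxiliary lemma stating that, whenever $\r$ is a $\MOKA$ command built from basic expressions that do not alter the stack below the top frame, $\sem[]{\push; \r; \pop}\set{\frame{\sigma,\Delta}\concat S}$ equals $\set{\frame{\sigma,\Delta}\concat S}$ if $\sem[]{\r}\set{\frame{\sigma,\Delta}\concat\frame{\sigma,\Delta}\concat S}$ is non-empty, and $\varnothing$ otherwise. Combined with the observations that $\nexte$ and $\nexte^{*}$ compute, respectively, the direct successors and the reachable states of $\sigma$ on the duplicated top frame, and with the inductive hypothesis that $\PtoR{\varphi_1}$ retains only states violating $\varphi_1$, this immediately matches the ACTL semantics of $\AX \varphi_1$ and $\AG \varphi_1$.

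The hard cases are $\AF \varphi$ and $\varphi_1 \AU \varphi_2$, which rely on the loop-detection idiom $\reset; (\add; \nexte; \PtoR{\varphi})^{*}; \loops$. The main obstacle will be to show that, under the finite-state assumption, the existence of an infinite path from $\sigma$ along which $\varphi$ never holds is equivalent to the existence of a lasso-shaped witness whose every visited state violates $\varphi$ and whose endpoint coincides with some state already recorded in the trace. A standard K\"onig-type argument supplies the nontrivial direction, while the converse is immediate by unrolling the lasso. The initial $\PtoR{\varphi}$ (respectively $\PtoR{\varphi_2}$) discharges the case where $\sigma$ itself witnesses the failure at index zero, and the $\reset$ after $\push$ empties the trace of the fresh top frame so that loop detection is carried out against the newly traversed segment only. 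The terminating test $\loops$ suffices for $\AF \varphi$, whose only counterexamples are $\varphi$-free lassos, while in the $\AU$ encoding the disjunction $\loops \oplus \PtoR{\varphi_1}$ additionally captures a finite $\varphi_2$-free prefix ending at a state that violates $\varphi_1$. The concluding $\pop$ restores $\frame{\sigma,\Delta}\concat S$ to output the identified counterexample.
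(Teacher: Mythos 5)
Your proposal is correct and follows essentially the same route as the paper's proof: structural induction on $\varphi$, direct unfolding of the encodings for the modal cases (your auxiliary $\push;\r;\pop$ lemma is just a factored form of the computation the paper performs inline), and the finite-state lasso characterisation of infinite $\varphi$-avoiding paths for $\AF$ and $\AU$. The only quibble is that the infinite-path-to-lasso direction is a pigeonhole argument on the finite state space rather than a K\"onig-type argument, but this does not affect the correctness of the proof.
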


    \begin{proof}
      We proceed by structural induction on $\varphi$. 

      In the proof we use the shorthand $\sigma\trel^n \sigma_n$ in place of
      $\sigma = \sigma_0 \trel \sigma_1 \trel \dots \trel \sigma_n$ leaving implicit the states
      $\sigma_0,...\sigma_{n-1}$. Similarly, we write $\sigma\trel_\exists^n \sigma_n$ 
      for $\exists \sigma_0,\ldots,\sigma_{n-1}.~\sigma\trel^n \sigma_n$.

      \subparagraph*{($\varphi=p$) or ($\varphi = \neg p$)}
      Trivial, by definition.
      \subparagraph*{($\varphi= \varphi_1 \land \varphi_2$ or
      $\varphi= \varphi_1 \lor \varphi_2$)} 
      Just follows using De Morgan's law assuming the
      inductive hypothesis (for $i\in \set{1, 2}$):
      \begin{center}
        $\forall \sigma,\Delta,S.~\sem[]{\PtoR{\varphi_i}}
        \set{\frame{\sigma,
        \Delta}\concat S} = \set{\frame{\sigma, \Delta}\concat S \mid
        \sigma \notin \sem[]{\varphi_i}}$
      \end{center}
      \subparagraph*{($\varphi=\AX\varphi_1$)} 
      \begin{align*}
        \sem[]{\PtoR{\AX\varphi_1}}\set{\frame{\sigma, \Delta}\concat S}
        & =
        \sem[]{\push;\nexte;\PtoR{\varphi_1};\pop}\set{\frame{\sigma,
        \Delta}\concat S} \\
        & =
        \sem[]{\nexte;\PtoR{\varphi_1};\pop}\set{
        \frame{\sigma,\Delta}\concat\,
        \frame{\sigma,\Delta}\concat 
        S} \\
        & = \sem[]{\PtoR{\varphi_1};\pop}\set{
               \frame{\sigma',\Delta}\concat\,
               \frame{\sigma, \Delta}\concat 
               S
          \mid \sigma
        \trel \sigma'}\\
        & \overset{\clubsuit}{=} \sem[]{\pop}\set{
        \frame{\sigma', \Delta}\concat\,
          \frame{\sigma, \Delta}\concat 
          S \mid \sigma \trel
          \sigma', \ \sigma'
        \notin \sem[]{\varphi_1}}\\
        & = \set{\frame{\sigma, \Delta}\concat S \mid \exists \sigma' \notin \sem[]{\varphi_1}.~
          \sigma \trel
        \sigma'} \\
        & = \set{\frame{\sigma, \Delta}\concat S \mid \sigma\notin
        \sem[]{\AX \varphi_1}}
      \end{align*}
      where at $(\clubsuit)$ we used the inductive hypothesis about $\varphi_1$.

      \subparagraph*{($\varphi=\AG\varphi_1$)} 
      \begin{align*}
        \sem[]{\PtoR{\AG\varphi_1}} \set{\frame{\sigma,
        \Delta}\concat S} & =
        \sem[]{\push;\nexte^{*};\PtoR{\varphi_1};\pop}
        \set{\frame{\sigma, \Delta}\concat S}\\
        & = \sem[]{\nexte^{*};\PtoR{\varphi_1};\pop}\set{\frame{\sigma,
        \Delta}\concat\, \frame{\sigma, \Delta}\concat S}\\
        & = \sem[]{\PtoR{\varphi_1};\pop}\left (\sem[]{\nexte^{*}}
          \set{\frame{\sigma, \Delta}\concat\,
            \frame{\sigma, \Delta} \concat
        S} \right )\\
        & = \sem[]{\PtoR{\varphi_1};\pop} \left ( \bigcup_{n \in \N}
          (\sem[]{\nexte})^n\set{\frame{\sigma, \Delta}\concat\,
        \frame{\sigma, \Delta}\concat S} \right )\\
        & = \sem[]{\PtoR{\varphi_1};\pop}\left( \bigcup_{n \in \N}
          \set{\frame{\sigma_n, \Delta} \concat\,
            \frame{\sigma, \Delta}
            \concat S \mid \sigma {\textstyle \trel_\exists^n} \sigma_{n}
            } \right ) \\
        & \overset{\clubsuit}{=} \sem[]{\pop}\left ( \bigcup_{n \in \N}
          \set{\frame{\sigma_n, \Delta} \concat\,
            \frame{\sigma, \Delta}
            \concat S \mid \sigma {\textstyle \trel_\exists^n} \sigma_n,
        \ \sigma_n \notin \sem[]{\varphi_1}}\right ) \\
        & = \set{\frame{\sigma, \Delta}\concat S \mid 
        \exists n,\sigma_n.~\sigma {\textstyle \trel_\exists^n} \sigma_n,
        \ \sigma_n \notin
        \sem[]{\varphi_1}} \\
        & = \set{\frame{\sigma, \Delta}\concat S \mid \sigma \notin
        \sem[]{\AG \varphi_1}}
      \end{align*}
      where at $(\clubsuit)$ we used the inductive hypothesis about $\varphi_1$.
      
      \subparagraph*{($\varphi=\varphi_1\AU\varphi_2$)} 
      \begin{align*}
        & \sem[]{\PtoR{\varphi_1\AU\varphi_2}}
        \set{\frame{\sigma, \Delta}\concat S} \\
        & \qquad =
        \sem[]{\PtoR{\varphi_2};\push;\reset;(\add;\nexte;\PtoR{\varphi_2})^{*};(\PtoR{\varphi_1}
        \oplus \loops);\pop} \set{\frame{\sigma, \Delta}\concat S}\\
        & \qquad \overset{\clubsuit}{=}
        \sem[]{\push;\reset;(\add;\nexte;
          \PtoR{\varphi_2})^{*};(\PtoR{\varphi_1} \oplus
        \loops);\pop} \set{\frame{\sigma, \Delta}\concat S \mid \sigma
        \notin \sem[]{\varphi_2}}\\
        & \qquad =  \sem[]{(\add;
          \nexte;\PtoR{\varphi_2})^{*};(\PtoR{\varphi_1}
        \oplus \loops);\pop} \set{\frame{\sigma, \varnothing} \concat\,
          \frame{\sigma, \Delta}\concat S \mid \sigma \notin
        \sem[]{\varphi_2}}\\
        & \qquad =  \sem[]{(\PtoR{\varphi_1} \oplus \loops);\pop} \left (
          \sem[]{(\add; \nexte;\PtoR{\varphi_2})^{*}}\set{\frame{\sigma,
            \varnothing}\concat\, \frame{\sigma, \Delta}\concat
            S \mid \sigma
        \notin \sem[]{\varphi_2}} \right )\\
        & \qquad =  \sem[]{(\PtoR{\varphi_1} \oplus \loops);\pop} \left (
          \bigcup_{n \in \N}(\sem[]{(\add;\nexte;
            \PtoR{\varphi_2})})^{n}\set{\frame{\sigma,
              \varnothing}\concat\, \frame{\sigma, \Delta}\concat S \mid \sigma
        \notin \sem[]{\varphi_2}}  \right )\\
        & \qquad \overset{\clubsuit}{=} \sem[]{(\PtoR{\varphi_1} \oplus \loops);\pop} \Biggl(
          \bigcup_{n \in \N} \left \{\frame{\sigma_n, \set{\sigma_0,
            \dots, \sigma_{n-1}}}\concat\, \frame{\sigma, \Delta} \concat
            S \mid \sigma {\textstyle \trel^n} \sigma_{n}, \right . \\
            & \hspace{150pt} \left .  \forall i \in [0, n].~ \sigma_i
        \notin \sem[]{\varphi_2} \right \}  \Biggr) \\
        & \qquad \overset{\heartsuit}{=}  \sem[]{\pop} \Biggl( \bigcup_{n \in \N} \left \{
            \frame{\sigma_n, \set{\sigma_0, \dots,
            \sigma_{n-1}}} \concat\,
            \frame{\sigma, \Delta}\concat S \mid \sigma {\textstyle \trel^n} \sigma_n, \right . \\ 
            & \hspace{100pt} \left . \forall i \in [0, n].~ \sigma_i
            \notin \sem[]{\varphi_2}, \ (\sigma_n \notin \sem[]{\varphi_1}
            \ \lor \ \exists i \in [0, n-1] \ \sigma_n = \sigma_i)
        \right \} \Biggr) \\
        & \qquad =  \left \{\frame{\sigma, \Delta}\concat S \mid
        \exists n,\sigma_n.~\sigma {\textstyle \trel_\exists^n} \sigma_n,
          \ \forall i \in [0,
          n]. \ \sigma_i \notin \sem[]{\varphi_2}, 
          \right . \\
          & \hspace{150pt} \left . 
          (\sigma_n \notin \sem[]{\varphi_1}
            \lor \exists i \in [0, n-1] .~
        \sigma_n = \sigma_i) \right \} \\
        & \qquad =  \set{\frame{\sigma, \Delta}\concat S \mid \sigma
        \notin \sem[]{\varphi_1 \AU \varphi_2}}
      \end{align*}
      where at $(\clubsuit)$ we used the inductive hypothesis about $\varphi_2$ and at $(\heartsuit)$ about $\varphi_1$.
      For the last step observe that finiteness of the transition system
      plays an essential role. In fact, in order to falsify
      $\varphi_1\AU\varphi_2$ a state $\sigma$ has to admit a finite
      trace $\sigma=\sigma_0 \trel \cdots \trel \sigma_n$ where $\varphi_2$ is
      never satisfied leading to a state which does not satisfy
      $\varphi_1$, or it can admit an infinite trace where
      $\varphi_2$ is
      never satisfied: since the transition system is finite such an
      infinite trace must eventually be a cycle.

     \subparagraph*{($\varphi=\AF\varphi_1$)}
      Analogous to the previous case, as we have:
      \begin{align*}
        & \sem[]{\PtoR{\AF \varphi_1}}
        \set{\frame{\sigma, \Delta}\concat S} \\
        & \qquad =
        \sem[]{\PtoR{\varphi_1};\push;\reset;(\add;\nexte;\PtoR{\varphi_1})^{*}; \loops;\pop} \set{\frame{\sigma, \Delta}\concat S}\\
        & \qquad \overset{\clubsuit}{=}
        \sem[]{\push;\reset;(\add;\nexte;\PtoR{\varphi_1})^{*};
        \loops;\pop} \set{\frame{\sigma, \Delta}\concat S \mid \sigma
        \notin \sem[]{\varphi_1}}\\
        & \qquad =  \sem[]{(\add;
          \nexte;\PtoR{\varphi_1})^{*}; \loops;\pop} \set{\frame{\sigma, \varnothing} \concat\,
          \frame{\sigma, \Delta}\concat S \mid \sigma \notin
        \sem[]{\varphi_1}}\\
        & \qquad =  \sem[]{\loops;\pop} \left (
          \sem[]{(\add; \nexte;\PtoR{\varphi_1})^{*}}\set{\frame{\sigma,
            \varnothing}\concat\, \frame{\sigma, \Delta}\concat
            S \mid \sigma
        \notin \sem[]{\varphi_1}} \right )\\
        & \qquad =  \sem[]{\loops;\pop} \left (
          \bigcup_{n \in \N} (\sem[]{(\add; \nexte;\PtoR{\varphi_1})})^{n}\set{\frame{\sigma,
              \varnothing}\concat\, \frame{\sigma, \Delta}\concat S \mid \sigma
        \notin \sem[]{\varphi_1}}  \right )\\
        & \qquad \overset{\clubsuit}{=} \sem[]{\loops;\pop} \Biggl(
          \bigcup_{n \in \N} \left \{\frame{\sigma_n, \set{\sigma_0,
            \dots, \sigma_{n-1}}}\concat\, \frame{\sigma, \Delta} \concat
            S \mid 
            \sigma\trel^n \sigma_n,
            \right . \\
            & \hspace{150pt} \left .  \forall i \in [0, n].~ \sigma_i
        \notin \sem[]{\varphi_1} \right \}  \Biggr) \\
        & \qquad =  \sem[]{\pop} \Biggl( \bigcup_{n \in \N} \left \{
            \frame{\sigma_n, \set{\sigma_0, \dots,
            \sigma_{n-1}}} \concat\,
            \frame{\sigma, \Delta}\concat S \mid 
            \sigma\trel^n \sigma_n,
            \right . \\
            & \hspace{150pt} \left . \forall i \in [0, n].~ \sigma_i
            \notin \sem[]{\varphi_1}, \ \exists i \in [0, n-1].~ \sigma_n = \sigma_i
        \right \} \Biggr) \\
        & \qquad =  \left \{\frame{\sigma, \Delta}\concat S \mid 
        \exists n,\sigma_n.~\sigma {\textstyle \trel_\exists^n} \sigma_n,
          \ \forall i \in [0,
          n].~ \sigma_i \notin \sem[]{\varphi_1}, 
          \exists i \in [0, n-1].~
        \sigma_n = \sigma_i \right \} \\
        & \qquad =  \set{\frame{\sigma, \Delta}\concat S \mid \sigma
          \notin \sem[]{\AF \varphi_1}}
      \end{align*}
      where at $(\clubsuit)$ we used the inductive hypothesis about $\varphi_1$.  As in
      the previous case we use the fact that infinite traces are
      identified as (finite) loops.
    \end{proof}
  \end{toappendix}

  \subparagraph*{$\mu_\Box$-calculus as $\MOKA$}

  To each $\mu_\Box$-formula $\varphi$ 
  we assign the $\MOKA$ program $\PtoR{\varphi}$
  as follows:
  \begin{center}
    \begin{tabular}{r@{\hspace{2pt}}c@{\hspace{2pt}}lr@{\hspace{2pt}}c@{\hspace{2pt}}l}
      $\PtoR{\,p}$ & $\triangleq$ & $\np$
      & $\PtoR{\,\neg p}$ & $\triangleq$ & $\p$
      \\
      $\PtoR{\varphi_1\land \varphi_2}$ & $\triangleq$ &
      $\PtoR{\varphi_1} \oplus \PtoR{\varphi_2}$ 
      &
      $\PtoR{\varphi_1\lor \varphi_2}$ & $\triangleq$ &
      $\PtoR{\varphi_1} ; \PtoR{\varphi_2}$ \\
      $\PtoR{\Box \varphi}$ & $\triangleq$ &
      $\push; \nexte;\PtoR{\varphi}; \pop$ & %
      $\PtoR{x}$ & $\triangleq$ & $\X$  \\
      $\PtoR{\mu x.\varphi_x}$ & $\triangleq$ & $\push; \reset; \upmu
      \X. \Bigl ( \loops \oplus (\add; \PtoR{\varphi_x}) \Bigr ); \pop$  \qquad\qquad %
      & $\PtoR{\nu x.\varphi_x}$ & $\triangleq$ & $\upmu
      \X. 
      \PtoR{\varphi_x} $ \\
    \end{tabular}
  \end{center}

  The second component of a frame $\frame{\sigma, \Delta}$ is still
  used to identify looping computations. This intervenes in the encoding
  of least fixpoints $\mu x. \varphi_x$: when checking the
  formula, the current state first logged to the current frame
  ($\add; \PtoR{\varphi_x}$ branch); a counterexample is found when we
  try to verify the fixpoint property in a state where the check has
  already been tried (filtered by $\loops$).
  The encoding of greatest fixpoints $\nu x. \varphi_x$ instead is
  simpler: searching for counterexamples 
  naturally translates to a least fixpoint, which is offered
  natively by $\MOKA$.

  The correctness of this transform (see Theorem~\ref{th:programs-mu} and Corollary~\ref{co:programs-mu} in the appendix) leverages
  a small variation of the tableau construction
  in~\cite{DBLP:conf/tapsoft/StirlingW89},
  with judgements roughly of the form $\sigma, \Delta \vdash \varphi$,
   meaning that the formula $\varphi$ holds in a state $\sigma$
  assuming that all the states in $\Delta$ 
 have been visited
  while checking the
  current fixpoint subformula. Then, we show that given a formula
  $\varphi$ and a stack $\frame{\sigma, \Delta}\concat S$,
  if there is no
  successful tableau for $\sigma, \Delta \vdash \varphi$ then
  $\sem{\PtoR{\varphi}} \set{\frame{\sigma, \Delta}\concat S} =
  \set{\frame{\sigma, \Delta}\concat S}$ holds,  while
  $\sem{\PtoR{\varphi}} \set{\frame{\sigma, \Delta}\concat
  S} = \varnothing$
  otherwise.

  \begin{toappendix}
    \subparagraph{$\mu_\Box$-calculus as $\MOKA$.}
    In order to prove an analogous result in the case of the
    $\mu_\Box$-calculus we first introduce an intermediate encoding in
    an extension of $\MOKA$ including an additional basic expression
    $\nloops$, the negation of $\loops$, which checks if the current
    state does not loop back to one of the states in the current
    trace. Hence its semantics is
    \begin{center}
      $\bsem{\nloops}\set{\frame{\sigma, \Delta} \concat S}$ $\triangleq$
      $\set{\frame{\sigma, \Delta}\concat S \mid \sigma
        \notin \Delta }$
    \end{center}
  
    Then, in the encoding of fixpoint formulae we continue the check
    only for states which have not been already considered, i.e.,
    where $\nloops$ succeeds:
    
    \begin{center}
    \begin{tabular}{r@{\hspace{2pt}}c@{\hspace{2pt}}l}
      $\PtoRn{\,p}$ & $\triangleq$ & $\np$ \qquad $\PtoRn{\,\neg p}$ $\triangleq$ $\p$
      \qquad
      $\PtoRn{\varphi_1\land \varphi_2}$ $\triangleq$
      $\PtoRn{\varphi_1} \oplus \PtoRn{\varphi_2}$
      \qquad
      $\PtoRn{\varphi_1\lor \varphi_2}$ $\triangleq$
      $\PtoRn{\varphi_1} ; \PtoRn{\varphi_2}$
       \\
      $\PtoRn{x}$ & $\triangleq$ & $\X$ \hspace{110.3pt} $\PtoRn{\Box \varphi}$ $\triangleq$ 
      $\push; \nexte;\PtoRn{\varphi}; \pop$ \\
      $\PtoRn{\mu x.\varphi_x}$ & $\triangleq$ & $\push; \reset; \upmu
      \X. \Bigl ( (\nloops;
      \add; \PtoRn{\varphi_x}) \oplus \loops \Bigr ); \pop$ \\
      $\PtoRn{\nu x.\varphi_x}$ & $\triangleq$ & $\push; \reset; \upmu
      \X. \Bigl ( \nloops; \add;
      \PtoRn{\varphi_x} \Bigr ); \pop$ \\
    \end{tabular}
  \end{center}

  \begin{itemize}
    \item for $\mu x. \varphi_x$ a counterexample is found when we try
      to verify the fixpoint property in a state where the check has
      been already tried (filtered by $\loops$). Instead, when
      checking on new states (filtered by $\nloops$), these
      are first logged by $\add$ to the current frame and
      then the check continues.

    \item for $\nu x. \varphi_x$ the situation is dual: a
      second check on
      a state already visited witnesses that the formula is satisfied,
      hence the corresponding state must be discharged as it cannot
      provide a counterexample. Instead, when checking on new states
      (filtered by $\nloops$ ) these are logged by $\add$
      to the current frame
      and the check continues.
  \end{itemize}

    \begin{theorem}[Formula satisfaction as program verification
      - $\mu$-calculus version]
      \label{th:programs-mu}
      Given 
      $T = (\Sigma, \operatorname{I}, \mathbf{P}, \trel, \vdash)$, for all frames $\frame{\sigma, \Delta}\concat S \in
      \str{\operatorname{F}_{\Sigma}}$,
      for any $\mu_\Box$-calculus formula $\varphi$,
      environment $\eta$ and valuation $\mathcal{V}$ it holds:
      \begin{equation*}
        \sem{\PtoRn{\varphi}} \set{\frame{\sigma, \Delta}\concat S} =
        \set{\frame{\sigma, \Delta}\concat S \mid \sigma \notin
        \sem[]{\varphi}}
      \end{equation*}
    \end{theorem}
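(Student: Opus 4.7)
The plan is to proceed by structural induction on $\varphi$, closely following the skeleton of Theorem~\ref{th:actl-program} for the propositional, Boolean, and modal cases, and concentrating the real effort on the fixpoint constructors. The cases for $p$, $\neg p$, $\varphi_1\land\varphi_2$, and $\varphi_1\lor\varphi_2$ are essentially identical to those of the ACTL proof: each $\PtoRn{\cdot}$ implements a filter on the current state of the top frame, and De Morgan's laws align $\land$ with $\oplus$ (nondeterministic choice between two ways to fail) and $\lor$ with sequential composition (a state must fail both conjuncts). The case $\Box\varphi$ mirrors the $\AX$ case: $\push$ duplicates the top frame, $\nexte$ takes all successors, $\PtoRn{\varphi}$ filters the successors by the inductive hypothesis, and $\pop$ restores the original frame, exposing $\sigma$ as a counterexample iff some successor violates $\varphi$. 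The variable case $\PtoRn{x}=\X$ is handled by maintaining a correspondence between the $\mu$-calculus valuation $\mathcal{V}$ and the $\MOKA$ environment $\eta$, with $\eta(\X)$ acting as the filter that keeps exactly those stacks whose current state falls outside $\mathcal{V}(x)$; this correspondence will be extended when entering a fresh binder.

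For the two fixpoint constructors, I would first peel off the outer $\push;\reset;\cdot;\pop$ wrapper, reducing the task to analysing the semantics of the inner $\upmu\X.F(\X)$ applied to sets of the form $\set{\frame{\sigma,\varnothing}\concat\frame{\sigma,\Delta}\concat S}$, where $F(\X)=(\nloops;\add;\PtoRn{\varphi_x})\oplus\loops$ for $\mu x.\varphi_x$ and $F(\X)=\nloops;\add;\PtoRn{\varphi_x}$ for $\nu x.\varphi_x$. Using the substitution lemma (Lemma~\ref{le:substitution-lemma}) and additivity of the semantics (Lemma~\ref{le:additivity-regular-expressions}), the $\MOKA$ Kleene approximants can be rewritten as finite unfoldings of $\PtoRn{\varphi_x}$ in which $\X$ is interpreted by the previous approximant. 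The plan is then to show, by ordinal induction, that the $n$-th $\MOKA$ approximant applied to $\set{\frame{\sigma,\varnothing}\concat S'}$ produces a nonempty result iff $\sigma$ does not belong to the $n$-th Kleene iterate of $\lambda S.\sem[\mathcal{V}[x\mapsto S]]{\varphi_x}$. For $\mu$, the $\loops$ branch converts looping computations into counterexamples, matching the fact that a $\mu$-unfolding that never closes falsifies the formula; for $\nu$, the absence of $\loops$ discards looping computations, matching the fact that a completed cycle witnesses satisfaction of the greatest fixpoint.

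The main obstacle is aligning the per-frame $\Delta$ bookkeeping of $\MOKA$ with the state-level Kleene iteration in the logic, especially across nested fixpoints where the inner $\reset$ is essential to keep histories local. As suggested in the excerpt, the cleanest route is to interpose a Stirling--Walker style tableau calculus over judgements $\sigma,\Delta\vdash\varphi$, and to prove two claims in tandem by structural induction on $\varphi$: (a) a purely semantic one, that $\sigma,\varnothing\vdash\varphi$ has a successful tableau iff $\sigma\in\sem[]{\varphi}$, and (b) an operational one, that $\sem{\PtoRn{\varphi}}\set{\frame{\sigma,\Delta}\concat S}$ equals $\set{\frame{\sigma,\Delta}\concat S}$ when there is no successful tableau for $\sigma,\Delta\vdash\varphi$, and equals $\varnothing$ otherwise. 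Finiteness of $\Sigma$ is crucial to both claims, since it bounds the depth of unfolding by $|\Sigma|$ and ensures that loop detection in $\Delta$ captures every attempt to regress into an already-visited state. The theorem of the excerpt follows by combining (a) and (b) and then extending additively to arbitrary sets of stacks.
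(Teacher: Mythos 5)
Your final plan---interposing a Stirling--Walker style tableau over judgements $\sigma,\Delta\vdash\varphi$ and proving that $\sem{\PtoRn{\varphi}}\set{\frame{\sigma,\Delta}\concat S}$ returns the stack exactly when no successful tableau exists---is the route the paper actually takes; note that the paper only proves your claim (b), while your claim (a) (adequacy of the tableau for the denotational semantics) it imports from Stirling and Walker rather than reproving. However, the intermediate claim in your second paragraph is false as stated: the $n$-th approximant of the $\MOKA$ fixpoint does \emph{not} produce a nonempty result iff $\sigma$ lies outside the $n$-th Kleene iterate of $\lambda S.\,\sem[{\mathcal{V}[x\mapsto S]}]{\varphi_x}$. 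Already at $n=0$ the logical iterate is $\varnothing$ (so every state is outside it) while $\upmu^0\X.\r$ denotes $\lambda c.\,\bot$ and returns nothing; more fundamentally, the counterexample set of $\mu x.\varphi_x$ is the \emph{complement} of a least fixpoint, i.e.\ a greatest fixpoint of the dualised operator, so it is computed as a decreasing intersection and cannot be tracked in lockstep by the ascending approximants of the $\MOKA$ least fixpoint. The entire burden is carried by the $\Delta$/$\loops$ mechanism (the $\loops$ branch seeds the computation with already-visited states, and $\Delta$ saturates because $\Sigma$ is finite), which is why the tableau detour is not merely the cleanest route but essentially forced.

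Second, ``structural induction on $\varphi$'' is not a sound induction scheme for claim (b): when the program reaches a variable occurrence $y$ with $\sigma\notin\Delta$, it re-enters the whole body $\varphi_y$ of the enclosing fixpoint---a \emph{super}formula of $y$, not a subformula---with the strictly enlarged set $\Delta\cup\set{\sigma}$, so a structural inductive hypothesis is simply unavailable there. The paper instead inducts on a well-founded order on pairs (stack, subformula) in which enlarging $\Delta$ strictly decreases the measure (well-foundedness coming from finiteness of $\Sigma$), the body $\varphi_z$ counts as smaller than $\mu z.\varphi_z$ or $\nu z.\varphi_z$, and the subformula order is used only at fixed $\Delta$. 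You clearly have the right intuition (``bounds the depth of unfolding by $|\Sigma|$''), but this measure must be made explicit, otherwise the variable case of your induction has nothing to appeal to. With the induction repaired and the approximant claim dropped, the rest of your outline (propositional and modal cases as in the ACTL proof, the environment $\eta(\X)=\sem[]{\r_x}$ tying each $\MOKA$ variable to its closed fixpoint body) matches the paper's argument.
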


    \begin{proof}
      We assume, w.l.o.g., that some formula $\varphi$ is fixed
      in which all variables are distinct in a way that we can
      map each variable $y$ uniquely to the corresponding
      formula $\varphi_y$, where $\mu y. \varphi_y$ is the
      subformula in $\varphi$, and for any subformula of
      $\varphi$ we can identify the variable $\bvar{\varphi}$ quantified by the
      enclosing fixpoint.

      We consider the tableau construction
      of~\cite{DBLP:conf/tapsoft/StirlingW89}, with small
      modifications to adapt it to our context. In particular
      our judgements are of the form
      $\sigma, \Gamma \vdash \varphi$, where $\sigma \in \Sigma$
      and $\Gamma \subseteq \LVar \times \Sigma$.

      The tableau rules are the following:
      \begin{center}
        \begin{tabular}{c c}
          $\inference{\sigma, \Gamma \vdash \varphi_1 \lor \varphi_2}
          {\sigma, \Gamma \vdash \varphi_1}[($\lor_1$)]$
          & $\inference{\sigma, \Gamma \vdash \varphi_1 \lor \varphi_2}
          {\sigma, \Gamma \vdash \varphi_2}[($\lor_2$)]$  \\
          & \\
          $\inference{\sigma, \Gamma \vdash \varphi_1 \land \varphi_2}
          {\sigma, \Gamma \vdash \varphi_1 \qquad \sigma, \Gamma
          \vdash \varphi_2}[($\land$)]$
          & $\inference{\sigma, \Gamma \vdash x}
          {\sigma, \Gamma \cup \set{(x, \sigma)} \vdash
          \varphi_x}[($\X$)]$  \\
          &  \\
          $\inference{\sigma, \Gamma \vdash \mu x. \varphi_x}
          {\sigma, \Gamma \vdash x}[($\mu. x$)]$
          & $\inference{\sigma, \Gamma \vdash \nu x. \varphi_x}
          {\sigma, \Gamma \vdash x}[($\nu. x$)]$ \\
          & \\
          $\inference{\sigma, \Gamma  \vdash \Box \varphi}
          {\sigma_1, \Gamma \vdash \varphi \qquad \dots \qquad
          \sigma_n, \Gamma \vdash \varphi}[($\Box$)]$
          & $\sigma \trel \sigma_i, \ i = 1, \dots, n$ \\
        \end{tabular}
      \end{center}

      A leaf is \emph{successful} if it is of the form
      $\sigma, \Gamma
      \vdash x$ with
      $(x, \sigma) \in \Gamma$ and $x$ associated with a
      greatest fixpoint (namely bound in a subformula $\nu x. \varphi_x$), or
      $\sigma, \Gamma \vdash p$ with $\sigma \vdash p$.
      A leaf is a \emph{failure} if it is of the form
      $\sigma, \Gamma \vdash x$ with
      $(x, \sigma) \in \Gamma$ and $x$ such that is associated
      to a least fixpoint (namely bound in a subformula
      $\mu x. \varphi_x$), or
      $\sigma, \Gamma \centernot \vdash p$ with
      $\sigma \centernot \vdash p$.
      A tableau is successful if all its leaves are successful.

      Let us give a name to the programs corresponding to
      subformulae $\mu x. \varphi_x$ or $\nu x. \varphi_x$ of
      $\varphi$.  Recall that
      \begin{align*}
        \PtoRn{\mu x.\varphi_x} & = \push; \reset;\underset{=
        \r_x}{\underbrace{\upmu \X.\Bigl ((\nloops;
        \add; \PtoRn{\varphi_x}) \oplus \loops  \Bigr )}}; \pop \\
        \PtoRn{\nu x.\varphi_x} & = \push; \reset;  \underset{=
        \r_x}{\underbrace{\upmu \X. \Bigl ( \nloops; \add;
        \PtoRn{\varphi_x} \Bigr ) }}; \pop
      \end{align*}
      As indicated above we name $\r_x$ the underbraced programs
      and call $\X$ the $\MOKA$ variable corresponding to the
      logical variable $x$. Then we define an environment $\eta$
      for $\MOKA$ programs by letting
      $\eta(\X) = \sem[]{\r_{x}}$. Note that the semantics of
      $\r_{x}$ is independent from the environment since
      $\r_{x}$ is closed.

      \medskip

      Now we prove that for all subformulae $\psi$ of $\varphi$,
      for all stacks $\frame{\sigma, \Delta}\concat S$, if $y=\bvar{\psi}$ is the
      variable bound in the enclosing fixpoint and $\Gamma \subseteq
      \left (\Var \setminus \set{y} \right) \times \Sigma$, then we have
      \begin{equation*}
        \sem{\PtoRn{\psi}} \set{\frame{\sigma, \Delta}\concat S} =
        \begin{cases}
          \set{\frame{\sigma, \Delta}\concat S} &
          \mbox{if there is no successful tableau }
          \sigma, \Gamma \cup \left ( \set{y} \times \Delta \right
          )\vdash \psi \\
          \varnothing & \mbox{otherwise}
        \end{cases}
      \end{equation*}

      Now the proof goes by induction on
      $\frame{\sigma, \Delta}\concat S$ and $\psi$ ordered
      by
      \begin{equation*} \frame{\sigma, \Delta}\concat S, \psi
        < \frame{\sigma', \Delta'}\concat S', \psi'
      \end{equation*}
      if one of the following holds
        
      \begin{itemize}
        \item $\psi = \varphi_z$ and $\psi' = \eta z.
          \varphi_z$ for
          some $\eta \in \set{\mu, \nu}$ and $z$ variable;
        \item The previous condition is not satisfied, $\bvar{\psi}=\bvar{\psi'}$ and either
          \begin{itemize}
            \item $\Delta' \subset \Delta$
            \item $\Delta=\Delta'$ and $\psi$ is a subformula of
              $\psi'$;
          \end{itemize}
      \end{itemize}

      \subparagraph*{($\psi = p$)}
      We have
      \begin{equation*}
        \sem{\PtoRn{p}} \set{  \frame{\sigma, \Delta}\concat S} =
        \set{  \frame{\sigma, \Delta}\concat S \mid \sigma
        \centernot\vdash p}
      \end{equation*}
      which is the empty set if and only if $\sigma \vdash p$, i.e.,
      $\sigma,  \Gamma \cup  \left ( \set{y} \times \Delta \right )
      \vdash p$ alone is a
      successful tableau.

      \subparagraph*{($\psi = \neg p$)} Analogous to the case $\psi = p$.

      \subparagraph*{($\psi = \psi_1 \lor \psi_2$)}
      \begin{align*}
        \sem{\PtoRn{\psi_1 \lor \psi_2}} \set{  \frame{\sigma,
        \Delta}\concat S} & = \sem{\PtoRn{\psi_1}};
        \sem{\PtoRn{\psi_2}}
        \set{  \frame{\sigma, \Delta}\concat S} \\
        & = \sem{\PtoRn{\psi_2}} \set{  \frame{\sigma,
        \Delta}\concat S} \\
        & = \set{  \frame{\sigma, \Delta}\concat S}
      \end{align*}
      if and only if, by inductive hypothesis, since both $\psi_1,
      \psi_2$ are subformulae of $\psi_1 \lor \psi_2$,
      there is no
      successful tableau neither for $\sigma, \Gamma \cup \left (\set{y}
      \times \Delta \right ) \vdash \psi_1$
      nor for $\sigma, \Gamma \cup  \left ( \set{y} \times \Delta
      \right ) \vdash
      \psi_2$, since a tableau for
      $\sigma, \Gamma \cup  \left ( \set{y} \times \Delta \right )
      \vdash \psi_1 \lor
      \psi_2$ is of the shape
      \begin{equation*}
        \inference{\sigma, \Gamma \cup  \left ( \set{y} \times \Delta
          \right ) \vdash \psi_1 \lor
        \psi_2}{\inference{\sigma,  \Gamma \cup  \left ( \set{y}
            \times \Delta \right ) \vdash
        \psi_1}{T}} \qquad
        \qquad  \inference{\sigma, \Gamma \cup  \left ( \set{y}
          \times \Delta \right )
          \vdash \psi_1 \lor
        \psi_2}{\inference{\sigma, \Gamma \cup  \left ( \set{y}
            \times \Delta \right )
        \vdash \psi_2}{T'}}
      \end{equation*}
      with $T, T'$ successful tableaux.

      \subparagraph*{($\psi = \psi_1 \land \psi_2$)}
      \begin{align*}
        \sem{\PtoRn{\psi_1 \land \psi_2}} \set{  \frame{\sigma,
        \Delta}\concat S}  & = \sem{\PtoRn{\psi_1} \oplus
        \PtoRn{\psi_2}}
        \set{  \frame{\sigma, \Delta}\concat S} \\
        & = \sem{\PtoRn{\psi_1}}\set{  \frame{\sigma,
        \Delta}\concat S}
        \cup \sem{\PtoRn{\psi_2}} \set{  \frame{\sigma,
        \Delta}\concat S} \\
        & = \set{  \frame{\sigma, \Delta}\concat S}
      \end{align*}
      if and only if, by inductive hypothesis, since both $\psi_1,
      \psi_2$ are subformulae of $\psi_1 \land
      \psi_2$, either there
      is no successful tableau for $\sigma, \Gamma \cup \left (\set{y} \times
      \Delta \right ) \vdash \psi_1$ or
      for $\sigma, \Gamma \cup  \left ( \set{y} \times \Delta \right
      ) \vdash \psi_2$,
      since a tableau for
      $\sigma, \Gamma \cup  \left ( \set{y} \times \Delta \right )
      \vdash \psi_1 \land
      \psi_2$ is of the shape
      \begin{equation*}
        \inference{\sigma, \Gamma \cup  \left ( \set{y} \times \Delta
          \right ) \vdash \psi_1 \land
        \psi_2}{\inference{\sigma, \Gamma \cup  \left ( \set{y}
            \times \Delta \right )
          \vdash \psi_1}{T} \qquad
          \inference{\sigma, \Gamma \cup  \left ( \set{y} \times \Delta
        \right ) \vdash \psi_2}{T'}}
      \end{equation*}
      with $T, T'$ successful tableaux.
      \subparagraph*{($\psi = \Box \psi_1$)}
      \begin{align*}
        \sem{\PtoRn{\Box{\psi_1}}}\set{  \frame{\sigma, \Delta}\concat
        S} & = \sem{\push{}; \nexte; \PtoRn{\psi_1}; \pop} \set{
        \frame{\sigma, \Delta}\concat S} \\
        & = \sem{\pop}\Bigl ( \sem{\PtoRn{\psi_1}}\set{
            \frame{\sigma', \Delta}\concat\, \frame{\sigma,
            \Delta}\concat S
        \mid \sigma \trel \sigma'}\Bigr ) \\
        & \overset{\clubsuit}{=} \sem{\pop}\Bigl ( \left \{
            \frame{\sigma', \Delta} \concat\, \frame{\sigma, \Delta} \concat S
            \mid \sigma \trel \sigma', \right . \\ 
        & \left . \hspace{50pt} \mbox{no successful tableau
            for } \sigma', \Gamma \cup  ( \{y\} \times \Delta) 
            \vdash \psi_1 \right \} \Bigr ) \\
        & =
        \begin{cases}
          \set{\frame{\sigma, \Delta}\concat S} & \mbox{if no successful tableau for } 
          \sigma, \Gamma \cup  \left (
          \set{y} \times \Delta \right )
          \vdash \Box\psi_1 \\
          \varnothing & \mbox{otherwise}
        \end{cases}
      \end{align*}
      where at $(\clubsuit)$ we applied the inductive hypothesis
      since $\psi_1$ is a subformula of $\Box \psi_1$.
      \subparagraph*{($\psi = y$ variable associated with $\mu  y. \varphi_y$)}
      We have
      \begin{equation*}
        \sem{\PtoRn{y}} = \sem{\Y} = \eta(\Y) = \sem{\r_y}
      \end{equation*}
      where
      \begin{equation*}
        \r_{y} = \upmu \Y. \Bigl ((\nloops; \add;
        \PtoRn{\varphi_y}) \oplus \loops \Bigr )
      \end{equation*}
      We have two possible cases:
      \begin{itemize}
        \item if $\sigma \in \Delta$ then
          \begin{equation*}
            \sem{\Y} \set{\frame{\sigma, \Delta}\concat S} =
            \sem{\r_y} \set{\frame{\sigma, \Delta}\concat S} =
            \set{\frame{\sigma, \Delta}\concat S}
          \end{equation*}
          and indeed $\sigma, \Gamma \cup  \left ( \set{y} \times
          \Delta \right ) \vdash
          y$ alone is a failed tableau
          since $y$ corresponds to a minimal fixpoint formula $\mu
          y. \varphi_y$.
        \item if $\sigma \notin \Delta$ then
          \begin{align*}
            \sem{\Y} \set{\frame{\sigma, \Delta}\concat S}
            & = \sem{\r_y} \set{\frame{\sigma, \Delta}\concat S} \\
            & = \sem{(\nloops; \add;
            \PtoRn{\varphi_y}) \oplus \loops}\set{\frame{\sigma, \Delta}\concat S}\\
            & = \sem{\PtoRn{\varphi_y}}\set{\frame{\sigma,
            \Delta\cup \set{\sigma}}\concat S} \\
            & \overset{\clubsuit}{=} \set{\frame{\sigma,
            \Delta\cup \set{\sigma}}\concat S}
          \end{align*}
      \end{itemize}
      if and only if there is no successful tableau $\sigma, \Gamma
      \cup \left (\set{y} \times (\Delta
      \cup \set{\sigma}) \right ) \vdash \varphi_y$, if and only if
      there is no successful tableau $\sigma, \Gamma \cup \left ( \set{y}
      \times \Delta \right ) \vdash y$,
      since the only applicable rule is
      \begin{equation*}
        \inference{\sigma, \Gamma \cup  \left ( \set{y} \times \Delta \right )
        \vdash y}{\sigma, \Gamma \cup \left ( \set{y} \times (\Delta \cup
        \set{\sigma}) \right ) \vdash \varphi_y}
      \end{equation*}
      At $(\clubsuit)$ we can apply the inductive hypothesis since
      $\Delta \subseteq \Delta \cup \set{\sigma}$.
      \subparagraph*{($\psi = \mu z. \varphi_z$)}
      \begin{align*}
        \sem{\PtoRn{\mu z. \varphi_z}}\set{\frame{\sigma,
        \Delta}\concat S} & = \sem{\push{}; \upmu \Z. \Bigl ( \loops
          \oplus (\nloops; \add; \PtoRn{\varphi_z}) \Bigr ); \pop
        }\set{\underset{= S'}{\underbrace{\frame{\sigma,
        \Delta}\concat S}}} \\
        & = \sem{\pop}\Bigl [ \sem{ \upmu \Z. \Bigl ( \loops \oplus
          (\nloops; \add; \PtoRn{\varphi_z}) \Bigr )}
        \set{\frame{\sigma, \varnothing}\concat S'} \Bigr ] \\
        & \overset{\diamond}{=} \sem{\pop} \Bigl ( \sem{\loops
            \oplus (\nloops; \add;
          \PtoRn{\varphi_z})}\set{\frame{\sigma, \varnothing}\concat
        S'} \Bigr ) \\
        & = \sem{\pop}\Bigl (
          \underset{(\heartsuit)}{\underbrace{\sem{
        \PtoRn{\varphi_z}}\set{\frame{\sigma, \set{\sigma}}\concat S'}}} \Bigr )
      \end{align*}
      where at $(\diamond)$ we use the fact that $\sem{\upmu \Z. \r} =
      \llbracket \r \rrbracket_{\eta[\Z \mapsto \sem{\upmu \Z. \r}]}$.
      Now the first clause of
      the order holds, since we have $\psi = \varphi_z$ and
      $\psi' = \mu z. \varphi_z$, hence by inductive hypothesis we
      deduce that $(\heartsuit)$ is
      \begin{itemize}
        \item $\set{\frame{\sigma, \set{\sigma}}\concat S'}$ if and
          only if there is no successful tableau $\sigma, \Gamma \cup
          \left ( \set{y} \times \Delta \right ) \cup
          \set{(z,\sigma)} \vdash \varphi_z$,
          if and only if there is no
          successful tableau $\sigma, \Gamma \cup \left
          (\set{y}\times\Delta \right )
          \vdash \mu z.
          \varphi_z$ since the only applicable rule is
          \begin{equation*}
            \inference{\sigma, \Gamma \cup  \left ( \set{y} \times
              \Delta \right ) \vdash \mu z.
            \varphi_z}{\inference{\sigma, \Gamma \cup  \left (
                \set{y} \times \Delta \right ) \vdash
              z}{\sigma, \Gamma \cup  \left ( \set{y} \times \Delta
                \right ) \cup
            \set{(z,\sigma)} \vdash \varphi_z}}
          \end{equation*}
        \item $\varnothing$ if and only if there is a successful
          tableau for $\sigma, \Gamma \cup  \left ( \set{y} \times
          \Delta \right ) \cup
          \set{(z, \sigma)} \vdash \varphi_z$,
          hence for $\sigma, \Gamma \cup  \left ( \set{y} \times
          \Delta \right ) \vdash \mu
          z. \varphi_z$.
      \end{itemize}
      \subparagraph*{($\psi = y$ variable associated with $\nu y.
      \varphi_y$)}   We have
      \begin{equation*}
        \sem{\PtoRn{y}} = \sem{\Y} = \eta(\Y) = \sem{\r_y}
      \end{equation*}
      where
      \begin{equation*}
        \r_{y} = \upmu \Y. \Bigl (\nloops; \add; \PtoRn{\varphi_y} \Bigr )
      \end{equation*}
      We have two possible cases:
      \begin{itemize}
        \item if $\sigma \in \Delta$ then
          \begin{equation*}
            \sem{\Y} \set{\frame{\sigma, \Delta}\concat S} =
            \sem{\r_y} \set{\frame{\sigma, \Delta}\concat S} = \varnothing
          \end{equation*}
          and indeed $\sigma, \Gamma \cup  \left ( \set{y} \times
          \Delta \right ) \vdash y$
          alone is a successful
          tableau since $y$ corresponds to a maximal fixpoint formula
          $\nu y. \varphi_y$.
        \item if $\sigma \notin \Delta$ then
          \begin{align*}
            \sem{\Y} \set{\frame{\sigma, \Delta}\concat S} & =
            \sem{\r_y} \set{\frame{\sigma, \Delta}\concat S} \\
            & = \sem{\nloops; \add;
            \PtoRn{\varphi_y}}\set{\frame{\sigma, \Delta}\concat S}\\
            & = \sem{\PtoRn{\varphi_y}}\set{\frame{\sigma,
            \Delta\cup \set{\sigma}}\concat S} \\
            & \overset{\clubsuit}{=} \set{\frame{\sigma,
            \Delta\cup \set{\sigma}}\concat S}
          \end{align*}
      \end{itemize}
      if and only if there is no successful tableau $\sigma, \Gamma
      \cup  \left ( \set{y} \times \Delta \right ) \cup \set{(y, \sigma)} \vdash
      \varphi_y$, if and only if there
      is no successful tableau $\sigma, \Gamma \cup  \left ( \set{y}
      \times \Delta \right )
      \vdash y$, since the
      only applicable rule is
      \begin{equation*}
        \inference{\sigma, \Gamma \cup  \left ( \set{y} \times \Delta
          \right ) \vdash
        y}{\sigma, \Gamma \cup  \left ( \set{y} \times \Delta \right ) \cup
        \set{(y, \sigma)} \vdash \varphi_y}
      \end{equation*}
      At $(\clubsuit)$ we can apply the inductive hypothesis since
      $\Delta \subseteq \Delta \cup \set{\sigma}$.
      \subparagraph*{($\psi = \nu z. \varphi_z$)}
      \begin{align*}
        \sem{\PtoRn{\nu z. \varphi_z}}\set{\frame{\sigma,
        \Delta}\concat S} & = \sem{\push{}; \upmu \Z. \Bigl ( \nloops; \add;
        \PtoRn{\varphi_z} \Bigr ); \pop }
        \set{\frame{\sigma, \Delta}\concat S} \\
        & = \sem{\pop}\Bigl ( \sem{ \upmu \Z. (\nloops; \add;
          \PtoRn{\varphi_z})} \set{\frame{\sigma, \varnothing}\concat\,
        \frame{\sigma, \Delta}\concat S} \Bigr ) \\
        & \overset{\diamond}{=} \sem{\pop} \Bigl ( \sem{\nloops;
          \add; \PtoRn{\varphi_z}}\set{\frame{\sigma, \varnothing}\concat\,
        \frame{\sigma, \Delta}\concat S} \Bigr ) \\
        & = \sem{\pop}\Bigl ( \underset{(\heartsuit)}{\underbrace{\sem{
              \PtoR{\varphi_z}}\set{\frame{\sigma,
        \set{\sigma}}\concat\, \frame{\sigma, \Delta}\concat S}}} \Bigr )
      \end{align*}
      where at $(\diamond)$ we use the fact that $\sem{\upmu \Z. \r} =
      \llbracket \r \rrbracket_{\eta[\Z \mapsto \sem{\upmu \Z. \r}]}$.
      Now the first clause of
      the order holds, since we have $\psi = \varphi_z$ and
      $\psi' = \mu z. \varphi_z$, by inductive hypothesis we deduce
      that $(\heartsuit)$ is
      \begin{itemize}
        \item $\set{\frame{\sigma, \set{\sigma}}\concat S'}$ if and only
          if there is no successful tableau $\sigma, \Gamma
      \cup  \left ( \set{y} \times \Delta \right ) \cup \set{(z, \sigma)}
          \vdash \varphi_z$, if and only if there is no successful
          tableau $\sigma, \Gamma \cup  \left ( \set{y} \times \Delta
          \right ) \vdash \nu z.
          \varphi_z$ since
          the only applicable rule is
          \begin{equation*}
            \inference{\sigma, \Gamma \cup  \left ( \set{y} \times
              \Delta \right ) \vdash \nu z.
            \varphi_z}{\inference{\sigma, \Gamma \cup  \left (
                \set{y} \times \Delta \right ) \vdash
              z}{\sigma, \Gamma \cup  \left ( \set{y} \times \Delta
                \right ) \cup
            \set{(z, \sigma)} \vdash \varphi_z}}
          \end{equation*}
        \item $\varnothing$ if and only if there is a successful
          tableau for $\sigma, \Gamma \cup  \left ( \set{y} \times
          \Delta \right ) \cup
          \set{(z, \sigma)} \vdash \varphi_z$,
          hence for $\sigma, \Gamma \cup  \left ( \set{y} \times
          \Delta \right ) \vdash \nu
          z. \varphi_z$.
      \end{itemize}
      \hfill
    \end{proof}

    In order to conclude, we need to show that the result above still
    works for the original encoding $\PtoR{\varphi}$. This can be
    obtained as an easy corollary.

    \begin{corollary}
      \label{co:programs-mu}
      For all frames
      $\frame{\sigma, \Delta}\concat S \in
      \str{\operatorname{F}_{\Sigma}}$, for any $\mu_\Box$-calculus
      formula $\varphi$, environment $\eta$ and valuation
      $\mathcal{V}$ it holds
      $\sem{\PtoR{\varphi}} \set{\frame{\sigma, \Delta}\concat S} =
      \sem{\PtoRn{\varphi}} \set{\frame{\sigma, \Delta}\concat S}$ and
      thus
      \begin{equation*}
        \sem{\PtoR{\varphi}} \set{\frame{\sigma, \Delta}\concat S} =
        \set{\frame{\sigma, \Delta}\concat S \mid \sigma \notin
          \sem[]{\varphi}}
      \end{equation*}
    \end{corollary}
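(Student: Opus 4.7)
The plan is to establish the first identity, $\sem{\PtoR{\varphi}} \set{\frame{\sigma, \Delta}\concat S} = \sem{\PtoRn{\varphi}} \set{\frame{\sigma, \Delta}\concat S}$, by structural induction on $\varphi$; the final characterisation then follows immediately by composing with Theorem~\ref{th:programs-mu}. For all the cases in which $\PtoR{\cdot}$ and $\PtoRn{\cdot}$ coincide verbatim (atomic propositions, negation, conjunction, disjunction, $\Box$, and the variable $x$), the equality is immediate from the induction hypothesis together with compositionality of the $\MOKA$ semantics (Lemma~\ref{le:substitution-lemma} and Lemma~\ref{le:additivity-regular-expressions}).

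The case $\varphi = \mu x. \varphi_x$ reduces to showing equality of the two inner fixpoints, since the outer $\push; \reset; \ldots; \pop$ wrapping is common. By IH, applied uniformly under any environment $\eta[\X \mapsto f]$, the semantics of $\PtoR{\varphi_x}$ and $\PtoRn{\varphi_x}$ coincide; denote this common function by $h_f$. The comparison thus reduces to $\lfp F = \lfp G$, where $F(f) = \sem[]{\loops} \vee (h_f \circ \sem[]{\add})$ and $G(f) = (h_f \circ \sem[]{\add} \circ \sem[]{\nloops}) \vee \sem[]{\loops}$. Since $\sem[]{\nloops}$ is reductive, $G \leq F$ pointwise and so $\lfp G \leq \lfp F$. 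For the converse I would first prove an auxiliary \emph{singleton-reductivity lemma}: if $\eta$ maps every free variable to a function that sends each singleton stack to a subset of itself, then $\sem{\PtoR{\psi}}$ enjoys the same property for every $\mu_\Box$ subformula $\psi$. The lemma is established by a parallel structural induction, exploiting that $\push$/$\pop$ occur in balanced pairs and that each basic expression either filters or replaces the top frame. Using it, one shows $h_K$ is singleton-reductive for $K \triangleq \lfp G$, so on any stack with $\sigma \in \Delta$ the extra contribution $h_K(\set{\frame{\sigma, \Delta}\concat S})$ is already contained in $\sem[]{\loops}(\set{\frame{\sigma, \Delta}\concat S}) \subseteq K(\set{\frame{\sigma, \Delta}\concat S})$; on stacks with $\sigma \notin \Delta$, $\sem[]{\add}$ and $\sem[]{\add} \circ \sem[]{\nloops}$ already coincide. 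Hence $F(K) \leq K$ and $\lfp F \leq \lfp G$.

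The main obstacle is the $\nu$ case, since the two encodings are structurally quite different: $\PtoR{\nu x.\varphi_x} = \upmu\X.\PtoR{\varphi_x}$ has neither $\push$/$\pop$ nor explicit loop-detection, while $\PtoRn{\nu x.\varphi_x}$ does. A direct fixpoint-to-fixpoint comparison seems intricate; instead, I plan to verify that $\sem{\PtoR{\nu x.\varphi_x}}$ satisfies the tableau characterisation of Theorem~\ref{th:programs-mu} by replaying that proof for this specific encoding. The argument relies crucially on finiteness of $\Sigma$: the Kleene iterates of $\upmu \X. \PtoR{\varphi_x}$ saturate in at most $|\Sigma|$ steps on each stack, and by the complement-duality between least and greatest fixpoints on a finite lattice, this lfp computes precisely the complement of $\sem[]{\nu x.\varphi_x}$, matching what $\PtoRn{}$ obtains through explicit loop-detection. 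Combining this with the other cases closes the induction, and the stated equality with the set of counterexamples then follows directly from Theorem~\ref{th:programs-mu}.
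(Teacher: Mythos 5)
Your proposal is correct, and for the least-fixpoint case it is essentially the paper's argument: the paper also observes that one may restrict to reductive functions, on which the operators with and without $\nloops$ coincide (your two-way inequality via $\lfp G \leq \lfp F$ and $F(\lfp G) \leq \lfp G$, supported by the singleton-reductivity lemma, just makes explicit the step the paper states informally as ``we can restrict to reductive $f$''). The genuine divergence is the $\nu$ case. The paper does not replay the tableau construction or invoke lfp--gfp complement duality; instead it matches \emph{syntactic} approximants, showing $\PtoR{\nu^{n} x.\varphi} = \upmu^{n}\X.\PtoR{\varphi}$ (a consequence of $\PtoR{\psi_1[\psi_2/x]} = \PtoR{\psi_1}[\PtoR{\psi_2}/\X]$ together with $\nu^0 x.\varphi = \true$ and $\upmu^0\X.\PtoR{\varphi}=\zero$), applies the already-established counterexample characterisation to each unfolding $\nu^n x.\varphi$, and concludes from $\bigcup_n \set{\cdots \mid \sigma\not\models\nu^n x.\varphi} = \set{\cdots\mid\sigma\not\models\nu x.\varphi}$, which holds by finiteness. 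This buys a short, purely equational argument that reuses the substitution lemma and never revisits the tableau machinery; your duality route is also sound on a finite state space and arguably explains \emph{why} the encoding works (the counterexample transformer is the De~Morgan dual of the formula's operator, so its lfp is the complement of the gfp), but it carries an extra obligation you should not gloss over: the lfp in $\upmu\X.\PtoR{\varphi_x}$ lives in the function space $C\to C$, so connecting its Kleene iterates on a singleton stack to the set-level iterates of the dual operator requires the inductive hypothesis for the open formula $\varphi_x$ under arbitrary filter-valued environments, i.e.\ exactly the approximant correspondence the paper makes explicit. With that supplied, both proofs go through.
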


    \begin{proof}
      The non-trivial cases are those concerning fixpoints formulas.
      \begin{itemize}
      \item For least fixpoints $\mu x.\, \varphi_x$, recall that
        \begin{quote}
          $\PtoRn{\mu x.\varphi_x}$ $\triangleq$ $\push; \reset; \mu
          \X. \Bigl ( (\nloops;
          \add; \PtoRn{\varphi_x}) \oplus \loops \Bigr ); \pop$ \\
          $\PtoR{\mu x.\varphi_x}$ $\triangleq$ $\push; \reset; \mu
          \X. \Bigl ( (\add; \PtoR{\varphi_x}) \oplus \loops \Bigr ); \pop$ \\
        \end{quote}

        Thus
        $\sem{\PtoRn{\mu x.\varphi_x}} = \bsem{\pop} \circ
        \operatorname{lfp}(F_n) \circ \bsem{\reset} \circ
        \bsem{\push}$ and
        $\sem{\PtoR{\mu x.\varphi_x}} = \bsem{\pop} \circ
        \operatorname{lfp}(F) \circ \bsem{\reset} \circ \bsem{\push}$
        where $F_n, F : C^C \to C^C$, with
        $C = \pow{\str{\operatorname{F}_{\Sigma}}}$, are the operators
        below:
        \[
          F_n(f) = \sem[{\eta[\X \mapsto f]}]{(\nloops; \add; \PtoR{\varphi_x}) \oplus \loops}\\
          F(f) = \sem[{\eta[\X \mapsto f]}]{(\add; \PtoR{\varphi_x}) \oplus \loops}
        \]
        We conclude simply by observing that, in our setting we can
        restrict to functions $f \in C^C$ which are reductive, i.e.,
        such that $f(X) \subseteq X$  
        for $X \in C$, and on such
        functions $F_n(f) =F(f)$.
        In fact, we have that
        \begin{itemize}
        \item if $\sigma \not\in \Delta$
          then  $F_n(f)(\frame{\sigma, \Delta} \concat S) = \sem[{\eta[\X \mapsto f]}]{\PtoR{\varphi_x}}(\frame{\sigma, \Delta\cup\set{\sigma}} \concat S) = F(f)(\frame{\sigma, \Delta} \concat S)$
    \item if $\sigma \in \Delta$
      then
      \begin{align*}
        F_n(f)(\frame{\sigma, \Delta} \concat S)
        & = \set{\frame{\sigma, \Delta} \concat S}\\
        & \overset{\clubsuit}{=}  \set{\frame{\sigma, \Delta} \concat S} \cup \sem[{\eta[\X \mapsto f]}]{\PtoR{\varphi_x}}(\frame{\sigma, \Delta} \concat S) \\
        & \overset{\heartsuit}{=} \set{\frame{\sigma, \Delta} \concat S} \cup \sem[{\eta[\X \mapsto f]}]{\PtoR{\varphi_x}}(\frame{\sigma, \Delta\cup\set{\sigma}} \concat S) \\
        & = F(f)(\frame{\sigma, \Delta} \concat S)
      \end{align*}
      where the second passage $(\clubsuit)$ is motivated by the fact that 
      $\sem[{\eta[\X \mapsto f]}]{\PtoR{\varphi}}(\frame{\sigma,
        \Delta} \concat S) \subseteq \set{\frame{\sigma, \Delta}
        \concat S}$ and the third $(\heartsuit)$ by the hypothesis $\sigma\in\Delta$.
    \end{itemize}
    
  \item For greatest fixpoints $\nu x.\, \varphi_x$, instead, we
    provide a direct proof of the fact that
    $\PtoR{\nu x. \varphi} \triangleq \upmu \X.  \PtoR{\varphi_x}$ is
    correctly computing the counterexamples, i.e., we show directly
    that
    \begin{center}
      $\sem{\PtoR{\nu x. \varphi}} \set{\frame{\sigma, \Delta}\concat S} =
      \set{\frame{\sigma, \Delta}\concat S \mid \sigma \notin
        \sem[]{\varphi}}$
    \end{center}      
    
    To this aim consider the approximants for
    $\nu x.\, \varphi$, i.e., $\nu^{0} x. \varphi = \true$ and
    $\nu^{n+1} x. \varphi = \varphi [(\nu^{n} x. \varphi)/x]$.
    Similarly, consider the approximants for the encoding
    $\upmu \X. \PtoR{\varphi}$, i.e., $\upmu^0 \X. \PtoR{\varphi} = \zero$
    and
    $\upmu^{n+1} \X. \PtoR{\varphi} = \PtoR{\varphi} [(\upmu^{n}
    \X. \PtoR{\varphi})/\X]$.

    Then one can show that for all $n$ it holds that
    $\PtoR{\nu^{n} x. \varphi} = \upmu^{n} \X. \PtoR{\varphi}$ (this
    immediately follows from the fact that
    $\PtoR{\psi_1[\psi_2/x]} = \PtoR{\psi_1}[\PtoR{\psi_2}/\X]$) and thus
    we can conclude. In fact
    \begin{align*}
      \sem{\PtoR{\nu x. \varphi}} \set{\frame{\sigma, \Delta}\concat S}
      & = \sem{\upmu \X. \PtoR{\varphi}} \set{\frame{\sigma, \Delta}\concat S}\\
      & = \bigcup_{n \in \N} \sem{\upmu^n \X. \PtoR{\varphi}} \set{\frame{\sigma, \Delta}\concat S}\\
      & = \bigcup_{n \in \N} \sem{\PtoR{\nu^{n} x. \varphi}} \set{\frame{\sigma, \Delta}\concat S}\\
      & = \bigcup_{n \in \N} \set{\frame{\sigma, \Delta}\concat S \mid \sigma \not\models \nu^{n} x. \varphi }\\
      & = \set{\frame{\sigma, \Delta}\concat S \mid \sigma \not\models \bigwedge_{n \in \N} \nu^{n} x. \varphi}\\
      & = \set{\frame{\sigma, \Delta}\concat S \mid \sigma \not\models  \nu x. \varphi}
    \end{align*}
  \end{itemize}
\end{proof}
    
  \end{toappendix}

  \begin{example}[Control flow graphs and $\mu_\Box$-calculus encoding]
    \label{ex:mu}
    Consider again Examples~\ref{ex:cfg} and~\ref{ex:lts}. We now
    check the property \emph{``if after three
      steps we end up in program point $3$ then we will loop forever,
    every $4$ steps, on this program point 3.''} It is known that regular
    properties of this kind cannot be expressed in $\ACTL$
    (see, e.g.,~\cite{PW83}). This property can be expressed in the
    $\mu_\Box$-calculus as
    $\psi \triangleq \Box^3 (n=3\ \to \nu x.\, (n=3\ \land\ \Box^4 x))$, 
    where
    $\Box^k$ is a shortand for $\Box \ldots \Box$ repeated $k$ times.
    Letting $\r^k$ as a shorthand for
    a $k$ times composition
    $\r; \ldots; \r$ of a $\MOKA$ program $\r$,
    the $\MOKA$ program encoding $\psi$ is
    $\PtoR{\psi}= (\push;\nexte)^3;(\mathsf{n = 3} ;
      \PtoR{\nu x.\, (n=3\
    \land\ \Box^4 x)});\pop^3$, with
    $\PtoR{\nu x.\, (n=3\ \land\ \Box^4 x)} = \upmu \X.\, ([\mathsf{n \neq 3?} \oplus ((\push;\nexte)^4; \X;
    \pop^4)])$.
    \qed
  \end{example}

It is well known that all ACTL formulae can be expressed as
$\mu_\Box$-calculus formulae.
For example,
$\AX \varphi = \Box \varphi$,
$\AF \varphi = \mu x.\, (\varphi \lor \Box x)$, and
$\AG \varphi = \nu x.\, (\varphi \land \Box x)$.
  Hence one could  obtain programs generating
  counterexamples for $\ACTL$ by encoding $\ACTL$
  in the
  $\mu_{\Box}$-calculus and then generating the corresponding program.
  However, this in general produces programs which are
  (unnecessarily) more complex than those
  produced for ACTL formulae. For instance, the program for $\AF
  \varphi$ obtained through the $\mu_\Box$-calculus encoding above would be
  $\push; \reset; \upmu \X. \Bigl (\loops \oplus (\add;
  \PtoR{\varphi}; \push; \nexte; \X; \pop)   \Bigr ); \pop$.

  \begin{toappendix}
    \begin{remark}[ACTL counterexamples programs as
      $\mu$-calculus formulae]
      For completeness, we explicitly write the programs $\PtoR{\varphi}$
      that correspond to (the modal fragment of) ACTL formulae
      $\varphi$ in the $\mu$-calculus encoding of ACTL formulae. 
      Let us note that, as expected, $\AX \varphi$ encoding as $\Box \varphi$ 
      is exactly the encoding of $\AX \varphi$.

      \begin{align*}
        \PtoR{\AF \varphi}
        & = \push; \reset; \upmu \X. \Bigl ( (\add; \PtoR{\varphi}; \push; \nexte; \X; \pop) \oplus  \loops \Bigr ); \pop\\
        \PtoR{\AG \varphi}
        & = \upmu \X. \Bigl (\PtoR{\varphi} \oplus (\push;\nexte; \X; \pop) \Bigr ) \\
        \PtoR{\varphi_1 \AU \varphi_2}
        & = \push; \reset;\upmu \X. \Bigl [ 
          \add; \Bigl ( \PtoR{\varphi_2}; (\PtoR{\varphi_1} \oplus
          \push; \nexte; \X; \pop) \Bigr ) \oplus \loops \Bigr ]; \pop
      \end{align*}
    \end{remark}
  \end{toappendix}

  \section{%
  Abstract Interpretation of $\MOKA$}
  \label{sec:stackabs}

  We show how to lift an abstraction over the states
  of a transition system to an abstraction over the domain of
  stacks. This is achieved stepwise by first considering
  an abstraction
  applied to each single stack, and then by merging classes of
  stacks through a
  suitable equivalence. The resulting stack abstraction will induce
  the abstract interpretation of $\MOKA$ programs.

  \subparagraph{Lifting the abstraction.}
  Let $(\Sigma, I, \trel)$  be a fixed transition system and let
  $\langle \alpha,\gamma \rangle: \pow{\Sigma} \rightleftarrows A$
  be an abstraction of state properties.
  We consider the lattice
  $\operatorname{F}_{A} \triangleq A \times A$,
  with componentwise order, whose elements $\frame{\sigma^{\sharp}, \delta^{\sharp}}$ are called \emph{abstract
  frames}, and, in turn,
  $\str{\operatorname{F}_{A}}$ whose elements $\frame{\sigma^{\sharp}, \delta^{\sharp}}\concat S^{\sharp}$ are called
  \emph{abstract stacks}.
  As in the concrete case, we abbreviate $\frame{\sigma^{\sharp}, \bot_A}$ as $\sigma^{\sharp}$.

  The abstraction map $\alpha$ on state properties can be extended to
  a frame abstraction, that by abusing the notation, we still
  denote by $\alpha$,
  and is defined by
  $\alpha(\frame{\sigma, \Delta}) \triangleq
  \frame{\alpha(\set{\sigma}), \alpha(\Delta)}$. In turn, the
  abstraction is inductively defined
  on stacks as
  $\alpha(\frame{\sigma, \Delta}\concat S^{n}) \triangleq
  \alpha(\frame{\sigma, \Delta}) \concat
  \alpha(S^{n})$.
  
  Now,  a set of
  stacks is abstracted to a set of abstract stacks by first applying $\alpha$
  pointwise using the
  image adjunction  (see Example~\ref{ex:image})
  and then joining classes of abstract stacks
  in a way which is parameterised by a suitable equivalence.
  Given an equivalence  $\mathbin{\sim} \subseteq L \times L$,
  we let $\class{x}{\sim}$
  denote the equivalence class of $x\in L$ w.r.t.~$\sim$. Given $x \in L$ we
  let  
$\cone{x} \triangleq \set{y \in L \mid y \leq x}$.

  \begin{definition}[Equivalence adjunction]
    \label{de:compatible-equivalence}
    Given a complete lattice
    $L$, a \emph{compatible equivalence} is an equivalence
    $\sim\subseteq L \times L$ such that for all $x \in
    L$, it holds that
    $\class{x}{\sim}$ is closed by joins of non-empty subsets. Let
    $\poweq{L}{\ssim} \triangleq \set{X \in \pow{L} \mid \forall x \in
      X.\, \class{x}{\sim} \cap X = \set{x}}$, ordered as follows: $X
    \leq_\ssim Y$ if for all $x \in X$ there is $y \in Y$ such that $x
    \sim y$ and $x \leq y$.

    \noindent
    Then, the pair $\langle \alpha_\ssim, \gamma_\ssim \rangle : \pow{L}
    \rightleftarrows
    \poweq{L}{\ssim}$
    defined, for $X \in \pow{L}$, $Y \in \poweq{L}{\ssim}$, by
    \begin{center}
      $\alpha_\ssim(X) \triangleq \set{ \bigvee (X \cap \class{x}{\sim})
      \mid x \in X}$
      \qquad
      $\gamma_\ssim(Y) \triangleq \bigcup \set{ \class{y}{\sim} \cap \cone{y}
      \mid y \in Y}$
    \end{center}
    is a Galois connection, called \emph{equivalence adjunction}
    (see Lemma~\ref{le:equiv}).
  \end{definition}
  In words, $\poweq{L}{\ssim}\subseteq \pow{L}$
  consists of the subsets of $L$ containing at most one
  representative for each
  equivalence class, and $\alpha_\ssim(X)$ abstracts $X$ in
  the subset (in $\poweq{L}{\ssim}$) consisting of the least upper bounds of
  $\sim$-equivalent elements in $X$.

  \begin{toappendix}
    Given a complete lattice $L$ and a compatible equivalence $\sim$ on
    $L$, note that given $W \subseteq \poweq{L}{\ssim}$ we have
    $\bigvee W = \alpha_\ssim(\bigcup W)$.  With the additional
    observation that $\alpha_\ssim$ is idempotent, this immediately shows
    that $\alpha_\ssim$ is additive and thus part of a Galois
    connection. A detailed proof is below.

    \begin{lemma}
      \label{le:equiv}
      Let $L$ be a lattice and $\sim$ be a compatible equivalence on
      $L$. Then the equivalence adjunction $\langle \alpha_\ssim,
      \gamma_\ssim \rangle : \pow{L} \rightleftarrows \poweq{L}{\ssim}$ is a Galois
      connection.
    \end{lemma}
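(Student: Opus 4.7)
The plan is to verify the Galois connection property directly, namely that $\alpha_\ssim(X) \leq_\ssim Y \iff X \subseteq \gamma_\ssim(Y)$ for any $X \in \pow{L}$ and $Y \in \poweq{L}{\ssim}$. Before doing so, I would check the well-typedness condition, i.e., that $\alpha_\ssim(X) \in \poweq{L}{\ssim}$. This is where compatibility of $\sim$ enters: since $x \in X \cap \class{x}{\sim}$, the set is non-empty and hence its join $\bigvee(X \cap \class{x}{\sim})$ lies in $\class{x}{\sim}$. Two representatives $\bigvee(X \cap \class{x_1}{\sim})$ and $\bigvee(X \cap \class{x_2}{\sim})$ coming from distinct classes are themselves in distinct classes; coming from the same class, they are literally equal. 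So $\alpha_\ssim(X)$ has exactly one element per equivalence class it intersects.

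For the forward direction of the adjunction, assume $\alpha_\ssim(X) \leq_\ssim Y$ and pick $x \in X$. The element $\bigvee(X \cap \class{x}{\sim})$ lies in $\alpha_\ssim(X)$, so by definition of $\leq_\ssim$ there is $y \in Y$ with $\bigvee(X \cap \class{x}{\sim}) \sim y$ and $\bigvee(X \cap \class{x}{\sim}) \leq y$. Since $x \leq \bigvee(X \cap \class{x}{\sim}) \leq y$ and $x \sim \bigvee(X \cap \class{x}{\sim}) \sim y$, we conclude $x \in \class{y}{\sim} \cap \cone{y} \subseteq \gamma_\ssim(Y)$.

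For the reverse direction, assume $X \subseteq \gamma_\ssim(Y)$ and consider an element $\bigvee(X \cap \class{x}{\sim}) \in \alpha_\ssim(X)$. For each $x' \in X \cap \class{x}{\sim}$ there is by hypothesis some $y_{x'} \in Y$ with $x' \sim y_{x'}$ and $x' \leq y_{x'}$. The key observation, which is the technical heart of the proof, is that all $y_{x'}$ belong to the same class $\class{x}{\sim}$, but since $Y \in \poweq{L}{\ssim}$ contains at most one representative per class, all these $y_{x'}$ coincide with a single $y \in Y$. Taking joins over $x' \in X \cap \class{x}{\sim}$ yields $\bigvee(X \cap \class{x}{\sim}) \leq y$, while $\bigvee(X \cap \class{x}{\sim}) \sim y$ is immediate from both being in $\class{x}{\sim}$. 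Hence $\alpha_\ssim(X) \leq_\ssim Y$.

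The main (mild) obstacle is the collapsing argument in the reverse direction: it is crucial that $Y$ is a transversal of the equivalence classes, not an arbitrary subset of $L$, since otherwise the joined witnesses $y_{x'}$ need not form an upper bound in $Y$. Everything else is essentially bookkeeping about compatibility and the definitions of $\alpha_\ssim$ and $\gamma_\ssim$; no transfinite induction or fixpoint reasoning is required.
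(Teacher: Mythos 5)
Your proof is correct, but it takes a genuinely different route from the paper's. You verify the adjunction equivalence $\alpha_\ssim(X) \leq_\ssim Y \iff X \subseteq \gamma_\ssim(Y)$ head-on, and your two key observations are sound: compatibility guarantees $\bigvee(X \cap \class{x}{\sim}) \in \class{x}{\sim}$ so that $\alpha_\ssim(X)$ is a well-formed transversal, and in the reverse direction the witnesses $y_{x'}$ all collapse to a single $y \in Y$ precisely because $Y$ contains at most one representative per class, which is what lets you pass to the join. The paper instead proceeds structurally: it first checks that $\leq_\ssim$ is a partial order, then shows $\poweq{L}{\ssim}$ is a complete lattice by proving $\bigvee_{i} X_i = \alpha_\ssim(\bigcup_i X_i)$, deduces from this that $\alpha_\ssim$ is additive, and then invokes the standard fact that an additive map between complete lattices has an upper adjoint (leaving the identification of that adjoint with $\gamma_\ssim$ as ``an easy calculation''). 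Your approach buys an explicit, self-contained verification of exactly the calculation the paper omits; the paper's approach buys, as a by-product, the fact that $\poweq{L}{\ssim}$ is a complete lattice, which the paper's definition of Galois connection presupposes and which your argument does not establish directly. This is only a minor omission on your side: since $\alpha_\ssim$ fixes every $Y \in \poweq{L}{\ssim}$ (each class meets $Y$ in a singleton, whose join is itself), $\alpha_\ssim$ is surjective, and a surjective lower adjoint out of a complete lattice forces the codomain to be a complete lattice; but you should either say this or check completeness of $\poweq{L}{\ssim}$ separately, as the paper does.
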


    \begin{proof}
      We start observing that $\leq_\ssim$ is a partial order on
      $\poweq{L}{\ssim}$. In fact, let $X, Y, Z \in \poweq{L}{\ssim}$. Then
      \begin{itemize}
        \item
          $X \leq_\ssim X$ since $\sim$ is reflexive.
        \item Assume $X \leq_\ssim Y$ and $Y \leq_\ssim X$. We show
          that $X \subseteq Y$. In fact, for all $x \in X$, since
          $X \leq_\ssim Y$, there exists $y \in Y$ such that
          $x \sim y$ and $x \leq y$. In turn, by $Y \leq_\ssim X$
          there is $x' \in X$ such that $x' \sim y$ and $y \leq
          x'$. Therefore $x \sim x'$ from which we obtain $x=x'$ and
          thus $x=y$ by antisymmetry of $\leq$. Dually, one can show
          $Y \subseteq X$ and thus $X=Y$, as desired.

        \item Assume $X \leq_\ssim Y \leq_\ssim Z$. Then for all
          $x \in X$, since $X \leq_\ssim Y$, there exists $y \in Y$
          such that $x \sim y$ and $x \leq y$. In turn, by
          $Y \leq_\ssim Z$ this implies that there is $z \in Z$ such
          that $y \sim z$ and $y \leq z$. Hence $x \sim z$ and
          $x \leq z$. Thus $X \leq_\ssim Z$.
      \end{itemize}

      We now observe that $\poweq{L}{\ssim}$ is a complete lattice,
      by showing that, for a collection of elements
      $X_i \in \poweq{L}{\ssim}$ for $i \in I$, it holds
      \begin{center}
        $\bigvee_{i \in I} X_i = \alpha_\ssim(\bigcup_{i\in I} X_i)$
      \end{center}
      In fact, let $i \in I$ and $x \in X_i$. Then, if we define
      $y = \bigvee (\class{x}{\sim} \cap \bigcup_i X_i)$ we have
      that $x \sim y$ and
      $y \in \alpha_\ssim(\bigcup_{i\in I} X_i)$. Therefore
      $X_i \leq_\ssim \alpha_\ssim(\bigcup_{i\in I} X_i)$, i.e.,
      $\alpha_\ssim(\bigcup_{i\in I} X_i)$ is an upper bound.

      Let us show that it is the least upper bound. Given any other
      upper bound $Z$ for the $X_i$'s, take
      $y \in \alpha_\ssim(\bigcup_{i\in I} X_i)$. By definition of
      $\alpha_\ssim$,
      $y = \bigvee \class{x}{\sim} \cap \bigcup_{i\in I, X_i \cap
      \class{x}{\sim} \neq \varnothing} X_i$ for some $x \in
      L$. Now, for all $i\in I$ such that
      $X_i \cap \class{x}{\sim} \neq \varnothing$ let
      $X_i \cap \class{x}{\sim} = \{x_i\}$. Since $X_i \leq_\ssim Z$
      there is $z_i \in Z$ such that $z_i \sim x_i$ and
      $x_i \leq z_i$. Now all the $x_i$'s are equivalent and thus
      all the $z_i$'s are so. This implies that they all coincide
      with some $z \in Z$. Now
      $y = \bigvee \{ x_i \mid X \cap \class{x}{\sim} \cap X_i =
      \{x_i\}\} \leq z$ and $x \sim x_i \sim z$. Thus
      $\alpha_\ssim(\bigcup_{i\in I} X_i) \leq_\ssim Z$.

      From the above characterisation of the least upper bound in
      $\poweq{L}{\ssim}$, it follows immediately that $\alpha_\ssim$
      is additive. Then an easy calculation proves that
      $\gamma_\ssim$, as given in the statement, is its adjoint.
    \end{proof}

    \begin{definition}[Pointwise lifting]
      \label{de:pointwise-lifting}
      Given
      $\langle \alpha, \gamma \rangle: \pow{\Sigma} \rightleftarrows
      A$ we call the corresponding image adjunction
      $\langle \dirIm{\alpha}, \invIm{\alpha} \rangle:
      \pow{\str{\operatorname{F}_{\Sigma}}} \to
      \pow{\str{\operatorname{F}_{A}}}$ the \emph{pointwise lifting} of
      $\langle \alpha, \gamma \rangle$.
    \end{definition}

    \begin{note}
      \label{no:BCAs}
      Observe that the fact that
      the lifting is a Galois connection does
      not depend on the fact that
      $\langle \alpha, \gamma \rangle$ is a Galois
      connection. However this is useful to have BCAs of relevant
      operators, like $\loops$ and $\add$, that are directly computable in the
      abstract domain
      (see Theorem~\ref{th:BCAs}).
    \end{note}
  \end{toappendix}

  We can now define the stack abstraction parameterised
  by an equivalence on the lattice of abstract
  frames.
  Given an equivalence $\mathbin{\sim}\subseteq X\times X$ on any
  set $X$, $\sim_+$ denotes
  the equivalence on $\str{X}$ induced by $\sim$ as follows: for all
  $x_1 \cdots x_n,\: y_1\cdots y_m \in \str{A}$,
  $x_1 \ldots x_n \sim_+ y_1\ldots y_m \in \str{X}$ if $n=m$ and
  $x_i \sim y_i$ for all $i \in \set{1,\ldots, n}$.

  \begin{definition}[Stack abstraction]
    \label{de:equivalence-abstraction}
    Let $\sim$ be a compatible equivalence on the abstract domain $A$
    and let us extend it to the lattice of abstract frames
    $\operatorname{F}_{A}$ by
    $\frame{\sigma_1^{\sharp},\delta_1^{\sharp}} \sim
    \frame{\sigma_2^{\sharp},\delta_2^{\sharp}}$ if
    $\sigma_1^{\sharp} \sim \sigma_2^{\sharp}$.  The
    \emph{$\sim$-stack abstraction} is the domain $\As \triangleq \poweq{\str{\operatorname{F}_{A}}}{\ssim}$ along with the Galois connection
    $\langle \alpha^s_{\ssim}, \gamma^s_{\ssim} \rangle :
    \pow{\str{\operatorname{F}_{\Sigma}}} \rightleftarrows
    \As$ defined by
    $\alpha^s_\ssim \triangleq \alpha_{\ssimplus} \circ \dirIm{\alpha}$
    and
    $\gamma^s_\ssim \triangleq \invIm{\alpha} \circ \gamma_{\ssimplus}$.
  \end{definition}

  In words, given a set $X$ of stacks, its abstraction
  $\alpha^s_\ssim(X)$ first applies pointwise the underlying $\alpha$
  to each stack in $X$, and then joins equivalent abstract stacks.  As
  corner cases we can have the identity relation $\mathbin{\sim} =
  \mathrm{id}$, which joins
  abstract frames with identical first component, and
  $\mathbin{\sim} =A \times A$,
  the trivial relation, which
  joins all abstract stacks into one.

  \subparagraph{An abstract interpreter for counterexamples.}
  Given an abstraction for state properties
  $\langle \alpha,\gamma \rangle: \pow{\Sigma} \rightleftarrows A$ and
  a compatible equivalence on $A$, following the standard approach in
  abstract interpretation~\cite{cousot21}, we consider an abstract
  semantics
  $\sem{\r}^\sharp: \As \to
  \As$, defined inductively as explained
  in~\S~\ref{sec:mocha} and 
  using as abstract semantics for basic
  expressions $\e$ their BCAs on $\As$, 
  denoted 
  by $\bsem{\e}^{\As}$.
  We write $\sem[]{\mathsf{r}}^{\sharp}$ instead of
  $\sem{\mathsf{r}}^{\sharp}$ when the abstract environment is inessential.
  Notably, the BCAs of basic expressions 
  independent from 
  the underlying system can be effectively defined, 
  and some of them result to be complete.
  \begin{toappendix}
    We next prove technical lemma which shows that if a function $f$ over 
    concrete stacks only acts on the top frame, then also its BCA does.
    \begin{lemma}
      \label{lem:top-frame}
      Let $f: \pow{\operatorname{F}_{\Sigma}^+} \to \pow{\operatorname{F}_{\Sigma}^+}$ and assume that there is a function
      $g : \operatorname{F}_{\Sigma} \to \operatorname{F}_{\Sigma}$ such that $f$ is the additive extension of
      \begin{equation*}
        f(\set{\frame{\sigma, \Delta} \concat S}) = \set{\frame{\sigma', \Delta'}\concat S \mid \frame{\sigma', \Delta'} \in g(\frame{\sigma, \Delta})}
      \end{equation*}
      Then $f^{A} : \pow{\operatorname{F}_{A}^+} \to \pow{\operatorname{F}_{A}^+}$ is the additive extension of
      \begin{equation*}
        f^{A}(\set{\frame{\sigma^{\sharp}, \delta^{\sharp}} \concat S^{\sharp}}) = \set{\frame{\sigma^{\flat}, \Delta^{\flat}}\concat S^{\sharp} \mid 
        \frame{\sigma^{\flat}, \Delta^{\flat}} \in g^{A}(\frame{\sigma^{\sharp},\delta^{\sharp}})}
    \end{equation*}
    and $f^{\As} : \poweq{\operatorname{F}_{A}^+}{\ssim} \to \poweq{\operatorname{F}_{A}^+}{\ssim}$ is the additive extension of
      \begin{equation*}
        f^{\As}(\set{\frame{\sigma^{\sharp}, \delta^{\sharp}} \concat S^{\sharp}}) = \set{\frame{\sigma^{\flat}, \Delta^{\flat}}\concat S^{\sharp} \mid 
        \frame{\sigma^{\flat}, \Delta^{\flat}} \in g^{\As}(\frame{\sigma^{\sharp},\delta^{\sharp}})}
    \end{equation*}
      \end{lemma}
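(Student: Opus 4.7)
The plan is to compute both BCAs directly from their definitions $f^A = \dirIm{\alpha} \circ f \circ \invIm{\alpha}$ and $f^{\As} = \alpha^s_\ssim \circ f \circ \gamma^s_\ssim$, using additivity of $f$ and of the abstractions to reduce to singleton inputs, and then exploiting the fact that the pointwise lifting of $\alpha$ and the stack equivalence $\sim_+$ act coordinate-wise on stacks so that a modification of the top frame never interacts with the tail.

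For the first part, I would take a singleton $\{\frame{\sigma^\sharp,\delta^\sharp}\concat S^\sharp\} \in \pow{\operatorname{F}_{A}^+}$ and unfold $\invIm{\alpha}$ at it. Because $\alpha$ on stacks is defined pointwise as $\alpha(\frame{\sigma,\Delta}\concat S) = \alpha(\frame{\sigma,\Delta})\concat\alpha(S)$, the inverse image is exactly the set of concrete stacks $\frame{\sigma,\Delta}\concat S$ of equal length such that $\alpha(\frame{\sigma,\Delta}) = \frame{\sigma^\sharp,\delta^\sharp}$ and $\alpha(S) = S^\sharp$, i.e. a product of $\invIm{\alpha}(\frame{\sigma^\sharp,\delta^\sharp})$ with $\invIm{\alpha}(S^\sharp)$ in the stack-concatenation sense. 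Applying $f$ only rewrites the first coordinate according to $g$, leaving the concretisations of $S^\sharp$ untouched. Finally $\dirIm{\alpha}$ abstracts each resulting concrete stack coordinate-wise, so the tail goes back to $S^\sharp$ and the top frames collect into $\dirIm{\alpha}(g(\invIm{\alpha}(\frame{\sigma^\sharp,\delta^\sharp}))) = g^A(\frame{\sigma^\sharp,\delta^\sharp})$, yielding the claimed form. Additivity of $f$, of $\dirIm{\alpha}$, and of $\invIm{\alpha}$ on joins in $\pow{\operatorname{F}_{A}^+}$ then extends the identity from singletons to arbitrary subsets.

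For the second part, I would reuse the first one together with the factorisations $\gamma^s_\ssim = \invIm{\alpha}\circ\gamma_{\ssimplus}$ and $\alpha^s_\ssim = \alpha_{\ssimplus}\circ\dirIm{\alpha}$, which give
\[
f^{\As} = \alpha_{\ssimplus}\circ (\dirIm{\alpha}\circ f\circ\invIm{\alpha})\circ\gamma_{\ssimplus} = \alpha_{\ssimplus}\circ f^A\circ\gamma_{\ssimplus}.
\]
On a singleton $\{\frame{\sigma^\sharp,\delta^\sharp}\concat S^\sharp\}\in\As$, $\gamma_{\ssimplus}$ returns all abstract stacks of the same length that are pointwise $\leq$ and $\sim_+$-related, so they share the length and have tails $\ssimplus$-related and $\leq$ to $S^\sharp$; by the first part $f^A$ then acts only on the top coordinate, producing abstract stacks whose tails remain among these $\ssimplus$-related approximants of $S^\sharp$ and whose tops range over $g^A$ of the top-coordinate concretisation. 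Collapsing via $\alpha_{\ssimplus}$ joins $\ssimplus$-classes coordinate-wise, and since $\sim$ on frames depends only on the first component (by Definition of the stack abstraction), the tail entries all join back to $S^\sharp$ while the top entries produce exactly $g^{\As}(\frame{\sigma^\sharp,\delta^\sharp}) = \alpha_\ssim(g^A(\gamma_\ssim(\frame{\sigma^\sharp,\delta^\sharp})))$. Additivity concludes.

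The main obstacle will be the bookkeeping in the second part: verifying that the join computed by $\alpha_{\ssimplus}$ factors through coordinates, so that the tail component does recombine to $S^\sharp$ while the top component computes $g^{\As}$ of the chosen top frame. This rests on the fact that $\ssimplus$ is defined componentwise on stacks and joins in $\str{\operatorname{F}_A}$ (when defined) are componentwise, together with the observation that $\frame{\sigma_1^\sharp,\delta_1^\sharp}\sim\frame{\sigma_2^\sharp,\delta_2^\sharp}$ depends only on $\sigma_1^\sharp\sim\sigma_2^\sharp$, so frames sharing the top first-component can be paired with any tail without affecting the equivalence structure.
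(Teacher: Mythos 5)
Your proposal is correct and follows essentially the same route as the paper's proof: both unfold the BCA definitions $f^A = \dirIm{\alpha}\circ f\circ\invIm{\alpha}$ and $f^{\As} = \alpha_{\ssimplus}\circ f^A\circ\gamma_{\ssimplus}$ on singletons, exploit the coordinate-wise action of the pointwise lifting and of $\sim_+$ so that the tail is untouched while the top frame is transformed by $g^A$ (resp.\ $g^{\As}$), and conclude by additivity. The bookkeeping point you flag for the second part (that $\alpha_{\ssimplus}$ joins componentwise and frame equivalence depends only on the first component) is exactly what the paper relies on in its final step, and is handled there with the same level of detail.
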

      \begin{proof}
        \begin{align*}
          f^{A}\left ( \set{\frame{\sigma^{\sharp}, \delta^{\sharp}} \concat S^{\sharp}} \right ) 
          & = \dirIm{\alpha} f \invIm{\alpha}\left ( \set{\frame{\sigma^{\sharp}, \delta^{\sharp}} \concat S^{\sharp}} \right ) \\
          & = \dirIm{\alpha} f \left ( \set{\frame{\sigma, \Delta} \concat S 
          \mid \alpha(\frame{\sigma, \Delta}) = \frame{\sigma^{\sharp}, \delta^{\sharp}} \land \alpha(S) = S^{\sharp}} \right ) \\
          & = \dirIm{\alpha} \left ( \left \{  \frame{\sigma', \Delta'} \concat S
          \mid \frame{\sigma', \Delta'} \in g(\frame{\sigma, \Delta}) \right . \right . \\ 
          & \qquad \left . \left . \land \ \alpha(\frame{\sigma, \Delta}) = \frame{\sigma^{\sharp}, \delta^{\sharp}} \land \alpha(S) = S^{\sharp} \right \} \right ) \\
          & = \set{\frame{\sigma'^{\sharp}, \delta'^{\sharp}} \concat S^{\sharp} \mid \frame{\sigma'^{\sharp}, \delta'^{\sharp}} 
          \in g^A(\frame{\sigma^{\sharp}, \delta^{\sharp}})}
        \end{align*}
        \begin{align*}
          f^{\As}\left ( \set{\frame{\sigma^{\sharp}, \delta^{\sharp}} \concat S^{\sharp}} \right )  
          & = \alpha_{\ssimplus} f^{A} \gamma_{\ssimplus}\left ( \set{\frame{\sigma^{\sharp}, \delta^{\sharp}} \concat S^{\sharp}} \right ) \\
          & = \alpha_{\ssimplus} f^{A} \left ( \left \{ \frame{\sigma^{\flat}, \delta^{\flat}}\concat S^{\flat} 
          \mid \frame{\sigma^{\flat}, \delta^{\flat}}\concat S^{\flat} \sim \frame{\sigma^{\sharp}, \delta^{\sharp}}\concat S^{\sharp} \right . \right . \\
          & \qquad \left .  \left . \land \ \frame{\sigma^{\flat}, \delta^{\flat}}\concat S^{\flat} 
          \leq \frame{\sigma^{\sharp}, \delta^{\sharp}}\concat S^{\sharp}\right \} \right ) \\
          & = \alpha_{\ssimplus} \left ( \left \{ \frame{\sigma'^{\flat}, \delta'^{\flat}} \concat S^{\flat}
          \mid \frame{\sigma'^{\flat}, \delta'^{\flat}} \in g^A(\frame{\sigma^{\flat}, \delta^{\flat}}) \right . \right . \\
          & \qquad \left . \left . \land \  \frame{\sigma^{\flat}, \delta^{\flat}}\concat S^{\flat} \sim \frame{\sigma^{\sharp}, \delta^{\sharp}}\concat S^{\sharp} 
          \land \frame{\sigma^{\flat}, \delta^{\flat}}\concat S^{\flat} \leq \frame{\sigma^{\sharp}, \delta^{\sharp}}\concat S^{\sharp}
          \right \} \right ) \\
          & = \set{\frame{\sigma'^{\sharp}, \delta'^{\sharp}} \concat S^{\sharp} \mid \frame{\sigma'^{\sharp}, \delta'^{\sharp}} 
          \in g^{\As}(\frame{\sigma^{\sharp}, \delta^{\sharp}})}
        \end{align*}
      \end{proof}

    \begin{theorem}
      \label{th:BCAs}
      Let
      $\frame{\sigma^{\sharp}, \delta^{\sharp}}\concat S^{\sharp}\in
      \str{\operatorname{F}_{A}}$ such that
      $\sigma^{\sharp} = \alpha(\set{\sigma})$ for some
      $\sigma \in \Sigma$.  Then the BCAs of the basic expressions
      $\loops, \add, \reset, \push, \pop$ in
      $\pow{{\str{\operatorname{F}_{A}}}}$ for the pointwise lifting
      (Definition~\ref{de:pointwise-lifting}) are as follows, where we write
      $\bsem{\cdot}^A$ as a shortcut for
      $\bsem{\cdot}^{\pow{{\str{\operatorname{F}_{A}}}}}$:
      \begin{center}
        \begin{tabular}{r@{\hspace{2pt}}c@{\hspace{2pt}}l}
          $\bsem{\add}^A \set{\frame{\sigma^{\sharp},
          \delta^{\sharp}} \concat S^{\sharp}}$ & $=$ &
          $\set{\frame{\sigma^{\sharp}, \delta^{\sharp} \vee
          \sigma^{\sharp}} \concat S^{\sharp}}$ \\
          $\bsem{\reset}^{A} \set{\frame{\sigma^{\sharp},
          \delta^{\sharp}} \concat S^{\sharp}}$ & $=$ &
          $\set{\frame{\sigma^{\sharp}, \bot} \concat S^{\sharp}}$ \\
          $\bsem{\loops}^{A} \set{\frame{\sigma^{\sharp},
          \delta^{\sharp}} \concat S^{\sharp}}$ & $=$ &
          $\set{\frame{\sigma^{\sharp}, \delta^{\sharp}} \concat
          S^{\sharp} \mid \sigma^{\sharp} \leq \delta^{\sharp}}$ \\
          $\bsem{\push}^{A} \set{\frame{\sigma^{\sharp},
          \delta^{\sharp}} \concat S^{\sharp}}$ & $=$ &
          $\set{\frame{\sigma^{\sharp}, \delta^{\sharp}} \concat
          \frame{\sigma^{\sharp}, \delta^{\sharp}} \concat S^{\sharp}}$ \\
          $\bsem{\pop}^{A}\set{\frame{\sigma^{\sharp},
          \delta^{\sharp}} \concat S^{\sharp}}$ & $=$ & $\set{S^{\sharp}}$ \\
        \end{tabular}
      \end{center}
      Moreover, the three operations $\reset, \push, \pop$ are
      globally complete.
    \end{theorem}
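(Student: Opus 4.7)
The plan is to establish each BCA formula by direct computation of $g^A = \dirIm{\alpha} \circ g \circ \invIm{\alpha}$ on singleton abstract frames, and then to verify global completeness separately for $\reset$, $\push$, $\pop$. First I would invoke Lemma~\ref{lem:top-frame}: each of $\loops, \add, \reset, \push, \pop$ is the additive extension of a function acting only on the top frame, so its BCA is also the additive extension of the BCA of the corresponding single-frame function $g$. It therefore suffices to compute $g^A$ on singletons $\set{\frame{\sigma^\sharp, \delta^\sharp}}$ and extend additively to stacks.

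For each basic expression I would unfold the definition. The inverse image gives
\[
\invIm{\alpha}(\set{\frame{\sigma^\sharp, \delta^\sharp}}) = \set{\frame{\sigma, \Delta} \mid \alpha(\set{\sigma}) = \sigma^\sharp,\ \alpha(\Delta) = \delta^\sharp},
\]
and the assumption $\sigma^\sharp = \alpha(\set{\sigma})$ for some concrete $\sigma$ ensures that this set is nonempty. For $\add$, applying $g$ and then the direct image yields $\set{\frame{\sigma^\sharp, \alpha(\Delta \cup \set{\sigma})}} = \set{\frame{\sigma^\sharp, \delta^\sharp \vee \sigma^\sharp}}$, exploiting additivity of $\alpha$ as a left adjoint: $\alpha(\Delta \cup \set{\sigma}) = \alpha(\Delta) \vee \alpha(\set{\sigma})$. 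For $\reset$ one obtains $\set{\frame{\sigma^\sharp, \alpha(\varnothing)}} = \set{\frame{\sigma^\sharp, \bot}}$. The cases $\push$ and $\pop$ are mechanical since $g$ manipulates only stack structure without touching data, yielding $\set{\frame{\sigma^\sharp, \delta^\sharp}\concat \frame{\sigma^\sharp, \delta^\sharp}\concat S^\sharp}$ and $\set{S^\sharp}$, respectively.

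The main obstacle is the filter $\loops$, where I must prove that the concrete condition $\sigma \in \Delta$ corresponds to the abstract condition $\sigma^\sharp \leq \delta^\sharp$. The forward direction follows from monotonicity of $\alpha$ on $\pow{\Sigma}$: $\sigma \in \Delta$ implies $\alpha(\set{\sigma}) \leq \alpha(\Delta)$, i.e.\ $\sigma^\sharp \leq \delta^\sharp$. For the converse, given $\sigma^\sharp = \alpha(\set{\sigma_0})$ with $\sigma^\sharp \leq \delta^\sharp$, I would pick any $\Delta'$ with $\alpha(\Delta') = \delta^\sharp$ (such a $\Delta'$ exists since $\delta^\sharp$ lies in the image of $\alpha$, as is standard when $A$ is viewed as the codomain of a Galois connection) and take $\Delta = \Delta' \cup \set{\sigma_0}$. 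Then $\sigma_0 \in \Delta$ and $\alpha(\Delta) = \delta^\sharp \vee \sigma^\sharp = \delta^\sharp$ by the assumption, producing the witness required to keep the frame in the direct image.

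Finally, for global completeness of $\reset, \push, \pop$ I would verify $\alpha \circ f = f^A \circ \alpha$ directly. For $\reset$, both sides send $\frame{\sigma, \Delta}\concat S$ to $\frame{\alpha(\set{\sigma}), \bot}\concat \alpha(S)$ regardless of $\Delta$, so completeness is immediate. For $\push$ and $\pop$, the concrete operations depend only on stack shape and commute pointwise with $\alpha$, so no precision is lost. By contrast, $\loops$ is generally not complete: two concrete frames $\frame{\sigma_1, \set{\sigma_2}}$ with $\sigma_1 \neq \sigma_2$ but $\alpha(\set{\sigma_1}) = \alpha(\set{\sigma_2})$ satisfy $\sigma_1 \notin \set{\sigma_2}$ while $\sigma^\sharp \leq \delta^\sharp$ still holds, witnessing strict loss of precision.
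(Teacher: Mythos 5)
Your proposal is correct and follows essentially the same route as the paper's proof: reduce to the top frame via Lemma~\ref{lem:top-frame}, compute $\dirIm{\alpha}\circ\bsem{\cdot}\circ\invIm{\alpha}$ on singleton abstract frames using additivity of $\alpha$ for $\add$ and $\reset$, handle the nontrivial direction of $\loops$ by exhibiting a concrete witness with $\sigma\in\Delta$ (the paper argues via the adjunction and $\gamma(\delta^{\sharp})$ where you use $\Delta'\cup\set{\sigma_0}$; both variants tacitly need $\delta^{\sharp}$ in the image of $\alpha$, an assumption the theorem statement shares), and verify $\dirIm{\alpha}\circ\bsem{\cdot}=\bsem{\cdot}^{A}\circ\dirIm{\alpha}$ pointwise for $\reset$, $\push$, $\pop$. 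The only small imprecision is your claim that Lemma~\ref{lem:top-frame} covers all five operations: as stated it applies only to operations induced by a map $g:\operatorname{F}_{\Sigma}\to\operatorname{F}_{\Sigma}$ on the top frame, so it does not literally cover $\push$ and $\pop$, which change the stack length---but since you compute those two cases directly anyway, nothing is lost.
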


    \begin{proof}
      \textbf{(BCA computation)} By exploiting Lemma~\ref{lem:top-frame}, we can compute the BCAs of $\add, \reset, \loops$
      considering only the top frame.
      \begin{align*}
        \bsem{\add}^{A} \set{\frame{\sigma^{\sharp}, \delta^{\sharp}}}
        & =
        \dirIm{\alpha}(\bsem{\add}(\invIm{\alpha}(\set{\frame{\sigma^{\sharp},
        \delta^{\sharp}}})))\\
        & = \dirIm{\alpha}(\bsem{\add}(\set{\frame{\sigma,
              \Delta} \mid \alpha(\set{\sigma}) = \sigma^{\sharp}
        \ \land \ \alpha(\Delta) = \delta^{\sharp}}))\\
        & = \dirIm{\alpha}( \set{ \frame{\sigma, \set{\sigma}
            \cup \Delta}  \mid \alpha(\set{\sigma}) =
        \sigma^{\sharp} \ \land \ \alpha(\Delta) = \delta^{\sharp}})\\
        & =  \set{ \frame{\alpha(\set{\sigma}), \alpha(\set{
          \sigma } \cup \Delta)} \mid \alpha(\set{\sigma}) =
        \sigma^{\sharp} \ \land \ \alpha(\Delta) = \delta^{\sharp}}\\
        & \overset{\clubsuit}{=}  \set{
          \frame{\alpha(\set{\sigma}), \alpha(\set{
          \sigma } ) \vee \alpha(\Delta)}  \mid \alpha(\set{\sigma}) =
        \sigma^{\sharp} \ \land \ \alpha(\Delta) = \delta^{\sharp}}\\
        & =  \set{ \frame{\sigma^{\sharp}, \delta^{\sharp} \vee
        \sigma^{\sharp}}  } %
      \end{align*}
      where the ($\clubsuit$) passage uses additivity of $\alpha$.

      \begin{align*}
      \bsem{\reset}^{A} \set{\frame{\sigma^{\sharp}, \delta^{\sharp}}}
        & =
        \dirIm{\alpha}(\bsem{\reset}(\invIm{\alpha}(\set{\frame{\sigma^{\sharp},
        \delta^{\sharp}}}))) \\
        & = \dirIm{\alpha}(\bsem{\reset}(\set{\frame{\sigma,
              \Delta}
              \mid \alpha(\set{\sigma}) = \sigma^{\sharp}
        \land \alpha(\Delta) = \delta^{\sharp}})) \\
        & = \dirIm{\alpha}(\set{\frame{\sigma,
            \varnothing} \mid \alpha(\set{\sigma}) =
        \sigma^{\sharp}}) \\
        & = \set{\frame{\sigma^{\sharp}, \bot}} %
      \end{align*}
      \begin{align*}
        \bsem{\loops}^{A}  (\set{\frame{\sigma^{\sharp},
        \delta^{\sharp}}})
        & =
        \dirIm{\alpha}(\bsem{\loops}(\invIm{\alpha}(\set{\frame{\sigma^{\sharp},
        \delta^{\sharp}}})))\\
        & = \dirIm{\alpha}( \bsem{\loops}(\set{\frame{\sigma,
              \Delta} \mid \alpha(\set{\sigma}) =
        \sigma^{\sharp}\ \land\ \alpha(\Delta) = \delta^{\sharp}}))\\
        & = \dirIm{\alpha}( \set{\frame{\sigma, \Delta} \mid
            \alpha(\set{\sigma}) =
        \sigma^{\sharp}\ \land\ \alpha(\Delta) = \delta^{\sharp} })\\
        & \overset{\heartsuit}{=} \set{ \frame{\sigma^{\sharp},
          \delta^{\sharp}} \mid \sigma^{\sharp}
          \leq \delta^{\sharp}} %
      \end{align*}
      where the ($\heartsuit$) passage uses the fact that there if there is
      $\sigma$ such that $\sigma^{\sharp} =
      \sigma_\Sigma(\set{\sigma})$, then by the
      properties of the adjunction
      $\set{ \sigma} \leq \gamma(\delta^{\sharp})$ iff
      $\sigma^{\sharp} = \alpha(\set{\sigma}) \leq \delta^{\sharp}$.
      \begin{align*}
        & \bsem{\push}^{A} \set{\frame{\sigma^{\sharp}, \delta^{\sharp}}
        \concat S^{\sharp}} %
        = \dirIm{\alpha}(\bsem{\push}(\invIm{\alpha}(\set{\frame{\sigma^{\sharp},
        \delta^{\sharp}}\concat S^{\sharp}}))) \\
        & = \dirIm{\alpha}(\bsem{\push}(\set{\frame{\sigma,
              \Delta}\concat S \mid
              \alpha(\set{\sigma}) = \sigma^{\sharp}
        \land \alpha(\Delta) = \delta^{\sharp} 
        \ \land \ S \in \invIm{\alpha}(S^{\sharp})})) \\
        & = \dirIm{\alpha}(\set{\frame{\sigma,
            \Delta}\concat\, \frame{\sigma,
            \Delta}\concat S \mid
            \alpha(\set{\sigma}) = \sigma^{\sharp}
        \land \alpha(\Delta) = \delta^{\sharp} 
        \ \land \ S \in \invIm{\alpha}(S^{\sharp})}) \\
        & = \set{\frame{\sigma^{\sharp}, \delta^{\sharp}}\concat\,
        \frame{\sigma^{\sharp}, \delta^{\sharp}}\concat S^{\sharp}}\\
        & \\ 
        & \bsem{\pop}^{A}  \set{\frame{\sigma^{\sharp}, \delta^{\sharp}}
        \concat S^{\sharp}} %
        = \dirIm{\alpha}(\bsem{\pop}(\invIm{\alpha}(\set{\frame{\sigma^{\sharp},
        \delta^{\sharp}}\concat S^{\sharp}}))) \\
        & = \dirIm{\alpha}(\bsem{\pop}(\set{\frame{\sigma,
              \Delta}\concat S \mid
              \alpha(\set{\sigma}) = \sigma^{\sharp}
        \land \alpha(\Delta) = \delta^{\sharp} 
        \ \land \ S \in \invIm{\alpha}(S^{\sharp})})) \\
        & = \dirIm{\alpha}(\invIm{\alpha}(S^{\sharp})) %
        =  \set{S^{\sharp}}%
      \end{align*}
      \textbf{(Global Completeness)} Let $\cel = \set{\frame{\sigma_i,
        \Delta_i}\concat S_i \mid  i
      \in I }$.
      \begin{align*}
        (\dirIm{\alpha} \circ \bsem{\reset})(\cel) & =
        \dirIm{\alpha} \left (
          \set{\frame{\sigma_i, \varnothing}
        \concat S_i \mid \ i \in I} \right ) \\
        & = \set{\frame{\alpha(\set{\sigma_i}), \bot} \concat
        \dirIm{\alpha}(S_i) \mid i \in I}\\
        (\bsem{\reset}^{A}\circ \dirIm{\alpha})(\cel) & =
        \bsem{\reset}^{A} \left (
          \set{\frame{\alpha(\set{\sigma_i}),
            \alpha(\Delta_i)}\concat \dirIm{\alpha}(S_i) \mid
        i \in I} \right ) \\
        & = \set{\frame{\alpha(\set{\sigma_i}), \bot} \concat
        \dirIm{\alpha}(S_i) \mid i \in I}
      \end{align*}
      \begin{align*}
        (\dirIm{\alpha} \circ \bsem{\push})(\cel) & =
        \dirIm{\alpha} \left (
          \set{\frame{\sigma_i, \Delta_i}
            \concat\, \frame{\sigma_i, \Delta_i}
        \concat S_i \mid \ i \in I} \right ) \\
        & =
        \set{\frame{\alpha(\set{\sigma_i}),\alpha(\Delta_i)}
          \concat\, \frame{\alpha(\set{\sigma_i}),\alpha(\Delta_i)}
          \concat
        \dirIm{\alpha}(S_i) \mid i \in I}\\
        (\bsem{\push}^{A}\circ \dirIm{\alpha})(\cel) & =
        \dirIm{\alpha} \left (
          \set{\frame{\sigma_i, \Delta_i}
            \concat\, \frame{\sigma_i, \Delta_i}
        \concat S_i \mid \ i \in I} \right ) \\
        & =
        \set{\frame{\alpha(\set{\sigma_i}),\alpha(\Delta_i)}
          \concat\, \frame{\alpha(\set{\sigma_i}),\alpha(\Delta_i)}
          \concat
        \dirIm{\alpha}(S_i) \mid i \in I}
      \end{align*}
      \begin{align*}
        (\dirIm{\alpha} \circ \bsem{\pop})(\cel) & =
        \dirIm{\alpha} \left (
        \set{S_i \mid \ i \in I} \right ) \\
        & =
        \set{\dirIm{\alpha}(S_i) \mid i \in I}\\
        (\bsem{\pop}^{A}\circ \dirIm{\alpha})(\cel) & =
        \bsem{\pop}^{A} \left (
          \set{\frame{\alpha(\set{\sigma_i}),
            \alpha(\Delta_i)}\concat \dirIm{\alpha}(S_i) \mid
        i \in I} \right ) \\
        & =
        \set{\dirIm{\alpha}(S_i) \mid i \in I}%
      \end{align*}
    \end{proof}

    When we move to coarser abstractions based on a compatible
    equivalence over the domain $A$, we can still retain a similar result.

    \begin{definition}[strong equivalence]
      \label{def:strong-eq}
      Let
      $\langle \alpha,\gamma \rangle: \pow{\Sigma} \rightleftarrows A$
      be an abstraction of state properties and let $\sim$ be a
      compatible equivalence on the abstract domain $A$. Say that a
      set of states $X \subseteq \Sigma$ is $\sim$-uniform if for all
      $\sigma, \sigma' \in X$
      $\alpha(\{\sigma\}) \sim \alpha(\{\sigma'\})$. We say that $\sim$
      is \emph{strong} when for all $X \subseteq \Sigma$ $\sim$-uniform it
      holds that $\gamma(\alpha(X))$ is $\sim$-uniform.
    \end{definition}
    \begin{remark}[equivalence properties]
      \label{rem:eq}
      Let $\sigma$ be a strong equivalence and let
      $\sigma_1^\sharp, \sigma_2^\sharp$ such that
      $\sigma_1^\sharp \leq \sigma_2^\sharp$. If
      $\sigma_2^\sharp = \alpha(X)$ for some $\sim$-uniform set
      $X \subseteq \Sigma$ then
      $\sigma_1^\sharp = \alpha(\gamma(\sigma_1^\sharp))$ with
      $\gamma(\sigma_1^\sharp) \subseteq \gamma(\sigma_2^\sharp) =
      \gamma(\alpha(X))$. Hence also $\sigma_1^\sharp = \alpha(Y)$ for
      $Y = \gamma(\sigma_2^\sharp)$ a $\sim$-uniform set and thus
      $\sigma_1^\sharp \sim \sigma_2^{\sharp}$.
    \end{remark}

    Below,
    given a compatible equivalence $\sim$, we denote by
    $\sem[]{\cdot}^\sim$ the BCA of $\sem[]{\cdot}^{\sharp}$ with
    respect to the frame abstraction:

    \begin{theorem}
      \label{th:BCAs-equiv}
      Let
      $\frame{\sigma^{\sharp}, \delta^{\sharp}}\concat S^{\sharp}\in
      \str{\operatorname{F}_{A}}$ and consider an abstraction as in
      Definition~\ref{de:equivalence-abstraction}.
      Then the BCAs of the basic expressions $\loops, \add, \reset,$
      in the domain $\As = \pow{\str{\operatorname{F}_A}}_\ssim$, denoted
      $\bsem{\cdot}^{\As}$, computed with respect to the
      pointwise lifting (Definition~\ref{de:pointwise-lifting}) are the same
      as Theorem~\ref{th:BCAs}, with the exception of operator $\loops$ for which we have
      \begin{center}
        \begin{tabular}{r@{\hspace{2pt}}c@{\hspace{2pt}}l}
          $\bsem{\loops}^{\As} \set{\frame{\sigma^{\sharp},
          \delta^{\sharp}} \concat S^{\sharp}}$ & $\leq_\ssim$ &
          $\set{\frame{\sigma^{\sharp},
            \delta^{\sharp}} \concat S^{\sharp} \mid \sigma^{\sharp}
          \wedge \delta^{\sharp} \neq \bot}$ \\
        \end{tabular}
      \end{center}
      If $\sim$ is strong then  equality holds also for $\loops$, i.e.,
           \begin{center}
        \begin{tabular}{r@{\hspace{2pt}}c@{\hspace{2pt}}l}
          $\bsem{\loops}^{\As} \set{\frame{\sigma^{\sharp},
          \delta^{\sharp}} \concat S^{\sharp}}$ & $=$ &
          $\set{\frame{\sigma^{\sharp} \wedge \delta^{\sharp},
            \delta^{\sharp}} \concat S^{\sharp} \mid \sigma^{\sharp}
          \wedge \delta^{\sharp} \neq \bot}$ \\
        \end{tabular}
      \end{center}
    \end{theorem}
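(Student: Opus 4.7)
The plan is to reduce each BCA computation in $\As$ to one already understood in the pointwise lifting $A$. Writing out the definitions gives $\alpha^s_\ssim=\alpha_{\ssimplus}\circ\dirIm{\alpha}$ and $\gamma^s_\ssim=\invIm{\alpha}\circ\gamma_{\ssimplus}$, and since $\bsem{e}^A=\dirIm{\alpha}\circ\bsem{e}\circ\invIm{\alpha}$, the two inner maps combine to yield the factorisation $\bsem{e}^{\As}=\alpha_{\ssimplus}\circ\bsem{e}^{A}\circ\gamma_{\ssimplus}$; this lets me re-use the closed forms of $\bsem{e}^{A}$ already established in Theorem~\ref{th:BCAs}. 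Concretely, starting from a singleton input $\{\frame{\sigma^\sharp,\delta^\sharp}\concat S^\sharp\}$, $\gamma_{\ssimplus}$ produces all stacks componentwise below it whose first components are $\sim$-equivalent to the corresponding ones of the input; then $\bsem{e}^A$ acts pointwise; finally $\alpha_{\ssimplus}$ quotients by the equivalence and replaces each class by its join.

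For $\add$ and $\reset$ this is essentially bookkeeping: both operations leave the first component of the top frame untouched, so all images lie in a single $\ssimplus$-class, and by additivity of $\vee$ the resulting joins are $\frame{\sigma^\sharp,\delta^\sharp\vee\sigma^\sharp}\concat S^\sharp$ and $\frame{\sigma^\sharp,\bot}\concat S^\sharp$, respectively. The cases $\push$ and $\pop$ are even simpler because these operations are purely structural and already globally complete in Theorem~\ref{th:BCAs}; the final $\alpha_{\ssimplus}$ just reassembles the input (duplicated or with its top removed) as the unique maximum representative of its class.

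The interesting case is $\loops$. After $\gamma_{\ssimplus}$, $\bsem{\loops}^A$ filters away every top frame $\frame{\sigma^\flat,\delta^\flat}$ that fails $\sigma^\flat\leq\delta^\flat$. A surviving frame must therefore satisfy $\sigma^\flat\leq\sigma^\sharp$, $\delta^\flat\leq\delta^\sharp$, $\sigma^\flat\sim\sigma^\sharp$ and $\sigma^\flat\leq\delta^\flat\leq\delta^\sharp$, so in particular $\sigma^\flat\leq\sigma^\sharp\wedge\delta^\sharp$. Joining such frames through $\alpha_{\ssimplus}$ yields at most one stack whose top is pointwise dominated by $\frame{\sigma^\sharp,\delta^\sharp}$, which proves the general $\leq_\ssim$ inequality; if $\sigma^\sharp\wedge\delta^\sharp=\bot$ then (assuming $\sigma^\sharp\neq\bot$) no surviving $\sigma^\flat$ remains in the equivalence class of $\sigma^\sharp$, so the output is $\varnothing$, matching the right-hand side. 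When $\sim$ is strong, Remark~\ref{rem:eq} applied to $\sigma^\sharp\wedge\delta^\sharp\leq\sigma^\sharp$ upgrades this to $\sigma^\sharp\wedge\delta^\sharp\sim\sigma^\sharp$, exhibiting $\sigma^\sharp\wedge\delta^\sharp$ itself as a legitimate witness paired with $\delta^\flat=\delta^\sharp$; the componentwise joins then collapse exactly to $\frame{\sigma^\sharp\wedge\delta^\sharp,\delta^\sharp}\concat S^\sharp$, yielding equality.

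The principal obstacle is tracking which elements of $A$ remain in the $\sim$-class of $\sigma^\sharp$ once one descends below $\sigma^\sharp\wedge\delta^\sharp$: strong equivalence is indispensable precisely because it promotes $\sigma^\sharp\wedge\delta^\sharp$ from ``some element below $\sigma^\sharp$'' to ``an element $\sim$-equivalent to $\sigma^\sharp$'', so the class-join attains its potential upper bound rather than falling strictly below it.
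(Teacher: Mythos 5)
Your proposal is correct and follows essentially the same route as the paper: it factors the BCA as $\alpha_{\ssimplus}\circ\bsem{\e}^{A}\circ\gamma_{\ssimplus}$ to reuse the closed forms from Theorem~\ref{th:BCAs}, handles $\add$, $\reset$, $\push$, $\pop$ by observing that the first component of the top frame (and hence the $\sim$-class) is preserved, and for $\loops$ identifies the surviving frames as those with $\sigma^\flat\leq\sigma^\sharp\wedge\delta^\sharp$, invoking Remark~\ref{rem:eq} for strong equivalences exactly as the paper does to turn the $\leq_\ssim$ bound into an equality. The only (shared) informality is the edge case $\sigma^\sharp\wedge\delta^\sharp=\bot$, which you flag and which the paper treats at the same level of detail.
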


    \begin{proof}
      \textbf{(BCA computation)} By exploiting Lemma~\ref{lem:top-frame}, we can compute the BCAs of $\add, \reset, \loops$
      considering only the top frame.
      \begin{align*}
        \bsem{\add}^{\As}(\set{\frame{\sigma^{\sharp}, \delta^{\sharp}}})
        &= \alpha_{\ssimplus}(\bsem{\add}^{A}(\gamma_{\ssimplus}(\set{\frame{\sigma^{\sharp},
        \delta^{\sharp}}})))\\
        & \hspace{-5ex} = \alpha_{\ssimplus}(\bsem{\add}^{A}(
            \class{\frame{\sigma^{\sharp}, \delta^{\sharp}}}{\sim}
        \cap \cone{\frame{\sigma^{\sharp}, \delta^{\sharp}}}))\\
        & \hspace{-5ex}=   \alpha_{\ssimplus} \left (\bsem{\add}^{A}\left(
            \set{\frame{\sigma^{\flat}, \delta^{\flat}} \mid
              \frame{\sigma^{\flat}, \delta^{\flat}} \sim
              \frame{\sigma^{\sharp}, \delta^{\sharp}} \land
              \frame{\sigma^{\flat}, \delta^{\flat}} \leq
        \frame{\sigma^{\sharp}, \delta^{\sharp}}}\right)\right )\\
        & \hspace{-5ex}=   \alpha_{\ssimplus} \left (
          \set{\frame{\sigma^{\flat}, \delta^{\flat} \vee
            \sigma^{\flat}} \mid
            \frame{\sigma^{\flat}, \delta^{\flat}} \sim
            \frame{\sigma^{\sharp}, \delta^{\sharp}} \land
            \frame{\sigma^{\flat}, \delta^{\flat}} \leq
        \frame{\sigma^{\sharp}, \delta^{\sharp}}} \right )\\
        &  \hspace{-5ex} \overset{\clubsuit}{=}  \set{\frame{\sigma^{\sharp},
        \delta^{\sharp} \vee \sigma^{\sharp}}}
      \end{align*}
      \begin{align*}     
        \hspace{-2ex} \bsem{\reset}^{\As} \set{\frame{\sigma^{\sharp},
        \delta^{\sharp}}} %
        & = \alpha_{\ssimplus}(\bsem{\reset}^{A}(\gamma_{\ssimplus}(\set{\frame{\sigma^{\sharp},
        \delta^{\sharp}}})))\\
        & = \alpha_{\ssimplus} \left (\bsem{\reset}^{A} \left (
            \class{\frame{\sigma^{\sharp}, \delta^{\sharp}}}{\sim}
        \cap \cone{\frame{\sigma^{\sharp}, \delta^{\sharp}}}\right) \right)\\
        & = \alpha_{\ssimplus}\left (\bsem{\reset}^{A} \left (
            \set{\frame{\sigma^{\flat}, \delta^{\flat}} \mid
              \frame{\sigma^{\flat}, \delta^{\flat}}\sim
              \frame{\sigma^{\sharp}, \delta^{\sharp}} \land
              \frame{\sigma^{\flat}, \delta^{\flat}}\leq
        \frame{\sigma^{\sharp}, \delta^{\sharp}}}\right ) \right )\\
        & = \alpha_{\ssimplus} \left (
          \set{\frame{\sigma^{\flat}, \bot} \mid
            \frame{\sigma^{\flat}, \delta^{\flat}} \sim
            \frame{\sigma^{\sharp}, \delta^{\sharp}} \land
            \frame{\sigma^{\flat}, \delta^{\flat}} \leq
        \frame{\sigma^{\sharp}, \delta^{\sharp}}}\right )\\
        & \overset{\clubsuit}{=} \set{\frame{\sigma^{\sharp}, \bot}}
      \end{align*}
      
      Where  at 
      ($\clubsuit$) we use the fact that frame equivalence only depends on its first component.
      
      For $\loops$ we have
      \begin{align*} 
        \hspace{-2ex} \bsem{\loops}^{\As} \set{\frame{\sigma^{\sharp},
        \delta^{\sharp}}}  %
        & = \alpha_{\ssimplus}(\bsem{\loops}^{A}(\gamma_{\ssimplus}(\set{\frame{\sigma^{\sharp},
        \delta^{\sharp}}})))\\
        & = \alpha_{\ssimplus}(\bsem{\loops}^{A}(
            \class{\frame{\sigma^{\sharp}, \delta^{\sharp}}}{\sim}
        \cap \cone{\frame{\sigma^{\sharp}, \delta^{\sharp}}}))\\
        & = \alpha_{\ssimplus}\left (\bsem{\loops}^{A}\left(
            \set{\frame{\sigma^{\flat}, \delta^{\flat}} \mid
              \frame{\sigma^{\flat}, \delta^{\flat}} \sim
              \frame{\sigma^{\sharp}, \delta^{\sharp}} \land
              \frame{\sigma^{\flat}, \delta^{\flat}} \leq
        \frame{\sigma^{\sharp}, \delta^{\sharp}}}\right ) \right)\\
        & = \alpha_{\ssimplus}\left (
          \set{\frame{\sigma^{\flat}, \delta^{\flat}} \mid
            \frame{\sigma^{\flat}, \delta^{\flat}} \sim
            \frame{\sigma^{\sharp}, \delta^{\sharp}} \land
            \frame{\sigma^{\flat}, \delta^{\flat}} \leq
            \frame{\sigma^{\sharp}, \delta^{\sharp}}\land \sigma^{\flat}
        \leq \delta^{\flat}}\right)\\
        & = \alpha_{\ssimplus}\left(
          \set{\frame{\sigma^{\flat}, \delta^{\flat}} \mid
            \frame{\sigma^{\flat}, \delta^{\flat}} \sim
            \frame{\sigma^{\sharp}, \delta^{\sharp}} \land
            \sigma^{\flat} \leq \sigma^{\sharp} \land
            \delta^{\flat} \leq  \delta^{\sharp}
        \land \sigma^{\flat} \leq \delta^{\flat}}\right )\\
        & = \set{\frame{\bigvee \set{\sigma^{\flat} \mid \sigma^{\flat} \sim
          \sigma^{\sharp} \land
          \sigma^{\flat} \leq \sigma^{\sharp}          
          \land \sigma^{\flat} \leq \delta^{\sharp}}, \delta^{\sharp}} }\\
        & = \set{\frame{\bigvee \set{\sigma^{\flat} \mid \sigma^{\flat} \sim
          \sigma^{\sharp} \land
          \sigma^{\flat} \leq \sigma^{\sharp} \wedge \delta^{\sharp}    
           }, \delta^{\sharp}}}\\
        & \leq_\ssim
        \set{\frame{\sigma^{\sharp}, \delta^{\sharp}}
          \mid \sigma^{\sharp}
        \wedge \delta^{\sharp} \neq \bot}
      \end{align*}

      When the equivalence is strong, in the last passage we obtain
      $\bsem{\loops}^{\As} \set{\frame{\sigma^{\sharp},
          \delta^{\sharp}}} = \set{\frame{\sigma^{\sharp}\wedge
          \delta^{\sharp}, \delta^{\sharp}} \mid
        \sigma^{\sharp} \wedge \delta^{\sharp} \neq \bot}$. In fact,
      as observed in Remark~\ref{rem:eq}, the assumption of dealing with a strong equivalence, 
      guarantees that for all $\sigma^{\flat}$ such that
      $\sigma^{\flat} \leq \sigma^{\sharp}$ it holds
      $\sigma^\flat \sim \sigma^{\sharp}$.
      \begin{align*}
       & \bsem{\push}^{\As}(\set{\frame{\sigma^{\sharp}, \delta^{\sharp}}\concat S^{\sharp}})
        =\\
        & =
          \alpha_{\ssimplus}(\bsem{\push}^{A}(\gamma_{\ssimplus}(\set{\frame{\sigma^{\sharp},
          \delta^{\sharp}}\concat S^{\sharp}})))\\
          & = \alpha_{\ssimplus}(\bsem{\push}^{A}(
              \class{\frame{\sigma^{\sharp}, \delta^{\sharp}}\concat S^{\sharp}}{\sim}
          \cap \cone{\frame{\sigma^{\sharp}, \delta^{\sharp}}\concat S^{\sharp}}))\\
          & = \alpha_{\ssimplus}\left (\bsem{\push}^{A} \left (
              \set{\frame{\sigma^{\flat}, \delta^{\flat}}\concat S^{\flat} \mid
                \frame{\sigma^{\flat}, \delta^{\flat}}\concat S^{\flat} \sim
                \frame{\sigma^{\sharp}, \delta^{\sharp}}\concat S^{\sharp} \land
                \frame{\sigma^{\flat}, \delta^{\flat}}\concat S^{\flat} \leq
          \frame{\sigma^{\sharp}, \delta^{\sharp}}\concat S^{\sharp}}\right ) \right)\\
          & = \alpha_{\ssimplus} \left (\set{\frame{\sigma^{\flat}, \delta^{\flat}}\concat
          \frame{\sigma^{\flat}, \delta^{\flat}}\concat S^{\flat} \mid
              \frame{\sigma^{\flat}, \delta^{\flat}}\concat S^{\flat} \sim
              \frame{\sigma^{\sharp}, \delta^{\sharp}}\concat S^{\sharp} \land
              \frame{\sigma^{\flat}, \delta^{\flat}}\concat S^{\flat} \leq
        \frame{\sigma^{\sharp}, \delta^{\sharp}}\concat S^{\sharp}}\right)\\
        & = \set{\frame{\sigma^{\sharp}, \delta^{\sharp}}\concat\frame{\sigma^{\sharp}, \delta^{\sharp}}\concat S^{\sharp}} \\ 
      & \\
        & \bsem{\pop}^{\As}(\set{\frame{\sigma^{\sharp}, \delta^{\sharp}}\concat S^{\sharp}})
        =\\
        & =
          \alpha_{\ssimplus}(\bsem{\pop}^{A}(\gamma_{\ssimplus}(\set{\frame{\sigma^{\sharp},
          \delta^{\sharp}}\concat S^{\sharp}})))\\
          & = \alpha_{\ssimplus}(\bsem{\pop}^{A}(
              \class{\frame{\sigma^{\sharp}, \delta^{\sharp}}\concat S^{\sharp}}{\sim}
          \cap \cone{\frame{\sigma^{\sharp}, \delta^{\sharp}}\concat S^{\sharp}}))\\
          & = \alpha_{\ssimplus}\left (\bsem{\pop}^{A}\left(
              \set{\frame{\sigma^{\flat}, \delta^{\flat}} \concat S^{\flat} \mid
                \frame{\sigma^{\flat}, \delta^{\flat}}\concat S^{\flat} \sim
                \frame{\sigma^{\sharp}, \delta^{\sharp}} \concat S^{\sharp} \land
                \frame{\sigma^{\flat}, \delta^{\flat}}\concat S^{\flat} \leq
          \frame{\sigma^{\sharp}, \delta^{\sharp}} \concat S^{\sharp}}\right) \right)\\
          & = \alpha_{\ssimplus} \set{S^{\flat} \mid 
          \frame{\sigma^{\flat}, \delta^{\flat}}\concat S^{\flat} \sim
          \frame{\sigma^{\sharp}, \delta^{\sharp}} \concat S^{\sharp} \land
          \frame{\sigma^{\flat}, \delta^{\flat}}\concat S^{\flat} \leq
    \frame{\sigma^{\sharp}, \delta^{\sharp}} \concat S^{\sharp}}\\
        & = \set{S^{\sharp}}
      \end{align*}
      \textbf{(Global Completeness)} Let $\ela = \set{\frame{\sigma^{\sharp}_i,
          \delta^{\sharp}_i}\concat S^{\sharp}_i \mid i \in I } \in \As = \pow{\str{\operatorname{F}_A}}$.

      \subparagraph*{Operator $\reset$}
        \begin{align*}
          (\alpha_{\ssimplus} \circ \bsem{\reset}^A)(\ela) & =
          \alpha_{\ssimplus} \left (
            \set{\frame{\sigma^{\sharp}_i, \bot}
          \concat S^{\sharp}_i \mid \ i \in I} \right ) \\
          & = \set{\bigvee \class{\frame{\sigma^{\sharp}_j, \bot} \concat S^{\sharp}_j}{\sim} 
          \cap  \set{\frame{\sigma^{\sharp}_i, \bot}
          \concat S^{\sharp}_i \mid \ i \in I} \mid j \in I} \\
          & = \set{\frame{\overline{\sigma^{\sharp}_j}, \bot} \concat \overline{S^{\sharp}_j} \mid j \in I}
        \end{align*}
        where we named $\frame{\overline{\sigma^{\sharp}_j}, \bot} \concat \overline{S^{\sharp}_j} = 
        \bigvee \class{\frame{\sigma^{\sharp}_j, \bot} \concat S^{\sharp}_j}{\sim} 
        \cap  \set{\frame{\sigma^{\sharp}_i, \bot}
          \concat S^{\sharp}_i \mid \ i \in I}$.
        \begin{align*}
          (\bsem{\reset}^{\As}\circ \alpha_{\ssimplus})(\ela) & =
          \bsem{\reset}^{\As} \left (
            \set{\bigvee \class{\frame{\sigma^{\sharp}_j, \delta^{\sharp}_j} \concat S^{\sharp}_j}{\sim} 
            \cap  \set{\frame{\sigma^{\sharp}_i, \bot}
          \concat S^{\sharp}_i \mid \ i \in I} \mid j \in I} \right ) \\
          & = \bsem{\reset}^{\As}\set{\frame{\widehat{\sigma^{\sharp}_j}, \widehat{\delta^{\sharp}_j}} 
          \concat \widehat{S^{\sharp}_j} \mid j \in I} \\
          & = \set{\frame{\widehat{\sigma^{\sharp}_j}, \bot} 
          \concat \widehat{S^{\sharp}_j} \mid j \in I} 
        \end{align*}
where $\frame{\widehat{\sigma^{\sharp}_j}, \widehat{\delta^{\sharp}_j}} 
        \concat \widehat{S^{\sharp}_j} =  \bigvee \class{\frame{\sigma^{\sharp}_j, \delta^{\sharp}_j} \concat S^{\sharp}_j}{\sim} 
        \cap  \set{\frame{\sigma^{\sharp}_i, \delta_i^{\sharp}}
      \concat S^{\sharp}_i \mid \ i \in I}$ and actually for all $j \in I$ we have 
      $\frame{\overline{\sigma^{\sharp}_j}, \bot} \concat \overline{S^{\sharp}_j} = \frame{\widehat{\sigma^{\sharp}_j}, \bot} 
          \concat \widehat{S^{\sharp}_j}$ by the strongness of the equivalence and the fact that 
        the order $\leq$ is pointwise on frame components and on stacks (sequences). 

        \subparagraph*{Operator $\push$}
        \begin{align*}
          (\alpha_{\ssimplus} \circ \bsem{\push}^A)(\ela) & =
          \alpha_{\ssimplus} \left (
            \set{\frame{\sigma^{\sharp}_i, \delta^{\sharp}_i} \concat \frame{\sigma^{\sharp}_i, \delta^{\sharp}_i}
          \concat S^{\sharp}_i \mid \ i \in I} \right ) \\
          & = \set{\bigvee \class{\frame{\sigma^{\sharp}_j, \delta^{\sharp}_j} \concat \frame{\sigma^{\sharp}_j, \delta^{\sharp}_j} \concat S^{\sharp}_j}{\sim} 
          \cap  \set{\frame{\sigma^{\sharp}_i, \delta^{\sharp}_i} \concat \frame{\sigma^{\sharp}_i, \delta^{\sharp}_i}
          \concat S^{\sharp}_i \mid \ i \in I} \mid j \in I} \\
          & = \set{\frame{\overline{\sigma^{\sharp}_j}, \overline{\delta^{\sharp}_j}} \concat \frame{\overline{\sigma^{\sharp}_j}, \overline{\delta^{\sharp}_j}} \concat \overline{S^{\sharp}_j} \mid j \in I}
        \end{align*}
        where $\frame{\overline{\sigma^{\sharp}_j}, \overline{\delta^{\sharp}_j}} \concat \frame{\overline{\sigma^{\sharp}_j}, 
        \overline{\delta^{\sharp}_j}} \concat \overline{S^{\sharp}_j} = 
        \bigvee \class{\frame{\sigma^{\sharp}_j, \delta^{\sharp}_j} \concat \frame{\sigma^{\sharp}_j, \delta^{\sharp}_j} \concat S^{\sharp}_j}{\sim} 
        \cap  \set{\frame{\sigma^{\sharp}_i, \delta^{\sharp}_i}
        \concat \frame{\sigma^{\sharp}_i, \delta^{\sharp}_i}
        \concat S^{\sharp}_i \mid \ i \in I}$.
      
        \begin{align*}
          (\bsem{\push}^{\As}\circ \alpha_{\ssimplus})(\ela) & =
          \bsem{\push}^{\As} \left (
            \set{\bigvee \class{\frame{\sigma^{\sharp}_j, \delta^{\sharp}_j} \concat S^{\sharp}_j}{\sim} 
            \cap  \set{\frame{\sigma^{\sharp}_i, \bot}
          \concat S^{\sharp}_i \mid \ i \in I} \mid j \in I} \right ) \\
          & = \bsem{\push}^{\As}\set{\frame{\widehat{\sigma^{\sharp}_j}, \widehat{\delta^{\sharp}_j}} 
          \concat \widehat{S^{\sharp}_j} \mid j \in I} \\
          & = \set{\frame{\widehat{\sigma^{\sharp}_j}, \widehat{\delta^{\sharp}_j}} 
          \concat \frame{\widehat{\sigma^{\sharp}_j}, \widehat{\delta^{\sharp}_j}} 
          \concat \widehat{S^{\sharp}_j} \mid j \in I} \\
        \end{align*}
        where $\frame{\widehat{\sigma^{\sharp}_j}, \widehat{\delta^{\sharp}_j}} 
        \concat \widehat{S^{\sharp}_j} =  \bigvee \class{\frame{\sigma^{\sharp}_j, \delta^{\sharp}_j} \concat S^{\sharp}_j}{\sim} 
        \cap  \set{\frame{\sigma^{\sharp}_i, \delta_i^{\sharp}}
      \concat S^{\sharp}_i \mid \ i \in I}$ and actually for all $j \in I$ we have 
      $\frame{\overline{\sigma^{\sharp}_j}, \overline{\delta^{\sharp}_j}} \concat \frame{\overline{\sigma^{\sharp}_j}, 
      \overline{\delta^{\sharp}_j}} \concat \overline{S^{\sharp}_j} = \frame{\widehat{\sigma^{\sharp}_j}, \widehat{\delta^{\sharp}_j}} 
      \concat \frame{\widehat{\sigma^{\sharp}_j}, \widehat{\delta^{\sharp}_j}} 
      \concat \widehat{S^{\sharp}_j}$ since 
        the order $\leq$ is pointwise on frame components and on stacks (sequences).

        \subparagraph*{Operator $\pop$}

        \begin{align*}
          (\alpha_{\ssimplus} \circ \bsem{\pop}^A)(\ela) & =
          \alpha_{\ssimplus} \left (
            \set{S^{\sharp}_i \mid \ i \in I} \right ) \\
          & = \set{\bigvee \class{S^{\sharp}_j}{\sim} 
          \cap  \set{S^{\sharp}_i \mid \ i \in I} \mid j \in I} \\
          & = \set{\overline{S^{\sharp}_j} \mid j \in I}
        \end{align*}
        where $\overline{S^{\sharp}_j} = 
        \bigvee \class{S^{\sharp}_j}{\sim} 
        \cap  \set{S^{\sharp}_i \mid \ i \in I}$.

          \begin{align*}
          (\bsem{\pop}^{\As}\circ \alpha_{\ssimplus})(\ela) & =
          \bsem{\pop}^{\As} \left (
            \set{\bigvee \class{\frame{\sigma^{\sharp}_j, \delta^{\sharp}_j} \concat S^{\sharp}_j}{\sim} 
            \cap  \set{\frame{\sigma^{\sharp}_i, \bot}
          \concat S^{\sharp}_i \mid \ i \in I} \mid j \in I} \right ) \\
          & = \bsem{\pop}^{\As}\set{\frame{\widehat{\sigma^{\sharp}_j}, \widehat{\delta^{\sharp}_j}} 
          \concat \widehat{S^{\sharp}_j} \mid j \in I} \\
          & = \set{\widehat{S^{\sharp}_j} \mid j \in I}
        \end{align*}
        where $\frame{\widehat{\sigma^{\sharp}_j}, \widehat{\delta^{\sharp}_j}} 
        \concat \widehat{S^{\sharp}_j} =  \bigvee \class{\frame{\sigma^{\sharp}_j, \delta^{\sharp}_j} \concat S^{\sharp}_j}{\sim} 
        \cap  \set{\frame{\sigma^{\sharp}_i, \delta_i^{\sharp}}
      \concat S^{\sharp}_i \mid \ i \in I}$ and actually for all $j \in I$ we have $\overline{S^{\sharp}_j} = \widehat{S^{\sharp}_j}$ 
      since the order $\leq$ is pointwise on frame components and on stacks (sequences), and whenever it happens that 
      $\frame{\sigma^{\sharp}_1, \delta^{\sharp}_1}\concat S^{\sharp}_1 \not\sim \frame{\sigma^{\sharp}_2, \delta^{\sharp}_2}\concat S^{\sharp}_2$
      but $S^{\sharp}_1 \sim S^{\sharp}_2$, then the $\bsem{\pop}^{\As}$ also takes their least upper bound, 
      yielding the same least upper bound we find if we remove the top element first. 
    \end{proof}

  \end{toappendix}

  \begin{theoremrep}[Basic abstract operations]
    \label{th:gen-BCAs}
    Let $\sim$ be a compatible equivalence on the abstract domain
    $A$. The BCAs of the basic expressions
    $\add$, $\reset$, $\push$, $\pop$ for the
    $\sim$-stack abstraction
    are as follows: for all
    $\frame{\sigma^{\sharp}, \delta^{\sharp}}\concat S^{\sharp}\in
    \str{\operatorname{F}_{A}}$
    \begin{center}
      \begin{tabular}{r@{\hspace{2pt}}c@{\hspace{2pt}}l
        r@{\hspace{2pt}}c@{\hspace{2pt}}l}
        $\bsem{\add}^{\As} \set{\frame{\sigma^{\sharp},
        \delta^{\sharp}} \concat S^{\sharp}}$ & $=$ &
        $\set{\frame{\sigma^{\sharp}, \delta^{\sharp} \vee
        \sigma^{\sharp}} \concat S^{\sharp}}$ &
        $\bsem{\reset}^{\As} \set{\frame{\sigma^{\sharp},
        \delta^{\sharp}} \concat S^{\sharp}}$ & $=$ &
        $\set{\frame{\sigma^{\sharp}, \bot} \concat S^{\sharp}}$ \\
        $\bsem{\push}^{\As} \set{\frame{\sigma^{\sharp},
        \delta^{\sharp}} \concat S^{\sharp}}$ & $=$ &
        $\set{\frame{\sigma^{\sharp}, \delta^{\sharp}} \concat\,
        \frame{\sigma^{\sharp}, \delta^{\sharp}} \concat S^{\sharp}}$ &
        $\bsem{\pop}^{\As} \set{\frame{\sigma^{\sharp},
        \delta^{\sharp}} \concat S^{\sharp}}$ & $=$ & $\set{S^{\sharp}}$
      \end{tabular}
    \end{center}
    and the operations $\reset$, $\push$, $\pop$ are
    globally complete. Moreover
    \begin{center}
      \begin{tabular}{r@{\hspace{2pt}}c@{\hspace{2pt}}l}
        $\bsem{\p}^{\As} \set{\frame{\sigma^{\sharp},
        \delta^{\sharp}} \concat S^{\sharp}}$ & $\leq_{\ssim}$ & 
        $\set{\frame{\sigma^{\sharp} \wedge \p^{\As},
        \delta^{\sharp}} \concat S^{\sharp}}$ \\ 
        $\bsem{\np}^{\As} \set{\frame{\sigma^{\sharp},
        \delta^{\sharp}} \concat S^{\sharp}}$ & $\leq_{\ssim}$ & 
        $\set{\frame{\sigma^{\sharp} \wedge \np^{\As},
        \delta^{\sharp}} \concat S^{\sharp}}$ \\ 
        $\bsem{\loops}^{\As} \set{\frame{\sigma^{\sharp},
        \delta^{\sharp}} \concat S^{\sharp}}$ & $\leq_{\ssim}$ & 
         $\set{\frame{\sigma^{\sharp},
      \delta^{\sharp}} \concat S^{\sharp} \mid \sigma^{\sharp}
    \wedge \delta^{\sharp} \neq \bot}$
      \end{tabular}
    \end{center}
    where $\p^{\As} = \alpha(\sem[]{\p}\Sigma)$ (resp. $\np^{\As} = \alpha(\sem[]{\np}\Sigma)$ ) is the abstraction of 
    the set of concrete states satisfying $p$ (resp $\neg p$).  %
  \end{theoremrep}

  For $\loops$  one can get
  $\bsem{\loops}^{\As} \set{\frame{\sigma^{\sharp}, \delta^{\sharp}}
    \concat S^{\sharp}} = \set{\frame{\sigma^{\sharp} \wedge
      \delta^{\sharp}, \delta^{\sharp}} \concat S^{\sharp} \mid
    \sigma^{\sharp} \wedge \delta^{\sharp} \neq \bot}$ under additional conditions on $\sim$ (see
  Definition~\ref{def:strong-eq} in the appendix).    

  \begin{proof}
    Corollary of Theorems~\ref{th:BCAs} and \ref{th:BCAs-equiv}. The
    property for $\p^{\As}$ and $\np^{\As}$ is a direct consequence of
    Remark~\ref{re:bca}.
  \end{proof}

\begin{toappendix}
  The next lemma shows that, besides providing a result which depends only on the top frame, programs associated with formulae operate as filters also at the abstract level.
  
    \begin{proposition}[Top-frame]
      \label{prop:top-frame}
      Let $\sim$ be a compatible equivalence on the abstract domain
  $A$. For all
  $\frame{\sigma^{\sharp}, \delta^{\sharp}}\concat S^{\sharp}\in
  \str{\operatorname{F}_{A}}$ and $\varphi$ formula, it holds 
  \begin{equation}\label{eq:strong-top-frame}
  \sem{\PtoR{\varphi}}^{\As}\set{\frame{\sigma^{\sharp}, \delta^{\sharp}}\concat S^{\sharp}} 
  = \begin{cases} 
    \varnothing \\
    \set{\frame{\sigma_1^{\sharp},\delta^{\sharp}}\concat S^{\sharp}} & \sigma_1^{\sharp} \leq \sigma^{\sharp}, \ \sigma_1^{\sharp} \sim \sigma^{\sharp}
  \end{cases}
  \end{equation}
\end{proposition}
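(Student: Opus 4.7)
The plan is to prove this by structural induction on the formula $\varphi$, showing that at the abstract level the encoding $\PtoR{\varphi}$ behaves as a top-frame filter, just as $\sem{\PtoR{\varphi}}$ does at the concrete level by Theorem~\ref{th:counterexamples-general}. The key observation driving the induction is a syntactic one: every clause in the definition of $\PtoR{\cdot}$ either is a basic expression acting on the top frame only (namely $\p$ or $\np$), or is a join/sequential composition of sub-programs with the filtering property, or has the sandwich shape $\push; \r'; \pop$, which ensures that the stack below is left untouched because the abstract $\push$ and $\pop$ are globally complete and behave as their concrete counterparts on the stack skeleton (Theorem~\ref{th:gen-BCAs}).

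For the base cases $\p$ and $\np$, I would invoke Theorem~\ref{th:gen-BCAs} directly: the BCA upper-bounds the result by $\set{\frame{\sigma^{\sharp} \wedge \p^{\As}, \delta^{\sharp}} \concat S^{\sharp}}$ (resp.\ $\np^{\As}$) in the $\leq_\ssim$ order, and, since any element in the image is both componentwise below and $\sim$-equivalent to this upper bound, either the set is empty or it consists of a single frame $\frame{\sigma_1^{\sharp},\delta^{\sharp}}\concat S^{\sharp}$ with $\sigma_1^{\sharp} \leq \sigma^{\sharp}\wedge \p^{\As} \leq \sigma^{\sharp}$ and $\sigma_1^{\sharp} \sim \sigma^{\sharp}\wedge \p^{\As} \sim \sigma^{\sharp}$ (using that $\sim$ on frames depends only on the first component and is transitive). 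The second components stays as $\delta^{\sharp}$ and the tail as $S^{\sharp}$, as required.

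For Boolean connectives, I would exploit that $\sem{\cdot}^{\As}$ commutes with $\oplus$ as join and with $;$ as composition: if both $\sem{\PtoR{\varphi_1}}^{\As}$ and $\sem{\PtoR{\varphi_2}}^{\As}$ satisfy \eqref{eq:strong-top-frame}, then their composition is still a top-frame filter (possibly emptying the set, or refining $\sigma^{\sharp}$ twice, each refinement staying $\sim$-equivalent and below $\sigma^{\sharp}$), and their join in $\poweq{\str{\operatorname{F}_{A}}}{\ssim}$ merges the two $\sim$-equivalent refinements into a single top-frame. For the modal/ACTL cases $\AX, \AG, \AF, \AU, \Box$, I would use the $\push; \r'; \pop$ structure: after $\push$ we work on a fresh copy of the top frame, the inner $\r'$ (by induction hypothesis, possibly iterated through $\nexte^*$ or the loop-guarded body) either produces $\varnothing$ or a stack whose bottom portion is exactly $\frame{\sigma^{\sharp},\delta^{\sharp}}\concat S^{\sharp}$, and the terminal $\pop$ discards the top, restoring the original stack.

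The main obstacle will be the least-fixpoint case $\upmu \X. \r$ that appears in the $\mu_\Box$-calculus encoding for $\mu x. \varphi_x$ (inside a $\push; \reset; \upmu \X.\, (\ldots); \pop$ sandwich). I would handle it by transfinite induction on the Kleene approximants $\sem{\upmu \X.\r}^{\As} = \bigvee_\kappa F^\kappa(\bot_\to)$ where $F(g) = \sem[\eta[\X\mapsto g]]{\r}^{\As}$, showing the top-frame-filter property (suitably parameterised in $\X$) is preserved by each iteration of $F$, and then closed under arbitrary joins in $\As$ — again using that the $\sim$-equivalence class of any element of $\As$ is closed under non-empty joins (Definition~\ref{de:compatible-equivalence}), so joining $\sim$-equivalent candidate refinements of $\sigma^{\sharp}$ yields an element still $\sim$-equivalent to and below $\sigma^{\sharp}$. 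The outer $\push$/$\pop$ then discards all the manipulation done in the fresh top frame and leaves the bottom stack pristine, completing the induction.
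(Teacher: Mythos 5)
Your overall strategy coincides with the paper's: structural induction on $\varphi$, base cases discharged via the BCAs of Theorem~\ref{th:gen-BCAs} together with the fact that frame equivalence depends only on the first component, Boolean connectives via the join/composition structure of the abstract semantics, and modal operators via the $\push;\r';\pop$ sandwich that restores the underlying stack. Up to and including the ACTL cases your argument matches the paper's proof of Proposition~\ref{prop:top-frame} essentially step for step.

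There is, however, one concrete gap in your treatment of the fixpoint case. The invariant~\eqref{eq:strong-top-frame} as stated --- with the second component $\delta^{\sharp}$ of the top frame preserved \emph{exactly} --- is not preserved by the fixpoint functional $F(g) = \sem[{\eta[\X\mapsto g]}]{\loops \oplus (\add; \PtoR{\varphi_x})}^{\As}$, because the $\add$ in the body strictly enlarges $\delta^{\sharp}$ (its BCA maps $\frame{\sigma^{\sharp},\delta^{\sharp}}$ to $\frame{\sigma^{\sharp},\delta^{\sharp}\vee\sigma^{\sharp}}$), and the variable $\X$, whose environment value is the whole body, is re-entered inside $\PtoR{\varphi_x}$ with this already-modified trace. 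So if you try to push the strong property through the approximants $F^\kappa(\bot_\to)$, the inductive step fails at the very first unfolding. Your parenthetical ``suitably parameterised in $\X$'' gestures at this but does not identify the needed weakening. The paper resolves it in two stages: it first proves, for all subformulae and for environments mapping variables to functions with the same weaker property, the invariant in which the output frame is $\frame{\sigma_1^{\sharp},\delta_1^{\sharp}}$ with $\sigma_1^{\sharp}\leq\sigma^{\sharp}$, $\sigma_1^{\sharp}\sim\sigma^{\sharp}$ and $\delta_1^{\sharp}\geq\delta^{\sharp}$ (property~\eqref{eq:weak} in the appendix); this weaker property \emph{is} preserved by $F$ and by joins of approximants. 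Only then does it recover the strong statement~\eqref{eq:strong-top-frame} for closed formulae by a further routine induction, using the fact that the outer $\push;\reset;\ldots;\pop$ sandwich confines all the trace manipulation to the freshly pushed frame, which is discarded by the final $\pop$. You would need to make this two-level invariant explicit for your fixpoint case to go through.
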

\begin{proof}
We prove the statement 
by induction on the structure of $\varphi$. 
\paragraph*{ACTL.} Let us first note that the environment is inessential for ACTL formulae, so we omit it.
\subparagraph*{($\varphi = p$)}  By exploiting Lemma~\ref{lem:top-frame}, we can compute counterexamples to $p$ by looking at the top frame. 
\begin{align*}
  & \hspace{-3ex} \sem[]{\PtoR{p}}^{\As}\set{\frame{\sigma^{\sharp}, \delta^{\sharp}}} %
  = \sem[]{\np}\set{\frame{\sigma^{\sharp}, \delta^{\sharp}}} \\ 
  &  \hspace{-3ex} = \alpha_{\ssimplus}\dirIm{\alpha}\sem[]{\np}\invIm{\alpha}\gamma_{\ssimplus}\set{\frame{\sigma^{\sharp}, \delta^{\sharp}}}  \\ 
  &  \hspace{-3ex} = \alpha_{\ssimplus}\dirIm{\alpha}\sem[]{\np}\invIm{\alpha}\left \{ \frame{\sigma^{\flat}, \delta^{\flat}} 
  \mid \frame{\sigma^{\flat}, \delta^{\flat}}\sim \frame{\sigma^{\sharp}, \delta^{\sharp}} \right . \\ & \qquad \left . 
  \land \ \frame{\sigma^{\flat}, \delta^{\flat}}\ \leq \frame{\sigma^{\sharp}, \delta^{\sharp}}\right \}  \\
  &  \hspace{-3ex}= \alpha_{\ssimplus}\dirIm{\alpha}\sem[]{\np} \left \{ \frame{\sigma, \Delta} 
  \mid \alpha(\frame{\sigma, \Delta}) = \frame{\sigma^{\flat}, \delta^{\flat}} \
  \land  \ \frame{\sigma^{\flat}, \delta^{\flat}} \sim \frame{\sigma^{\sharp}, \delta^{\sharp}}
  \land \frame{\sigma^{\flat}, \delta^{\flat}} \leq \frame{\sigma^{\sharp}, \delta^{\sharp}} \right \}  \\
  &  \hspace{-3ex} = \alpha_{\ssimplus}\dirIm{\alpha} \left \{ \frame{\sigma, \Delta} 
  \mid \alpha(\frame{\sigma, \Delta}) = \frame{\sigma^{\flat}, \delta^{\flat}}, \ \sigma \models \neg p 
  \land \ \frame{\sigma^{\flat}, \delta^{\flat}} \sim \frame{\sigma^{\sharp}, \delta^{\sharp}}
  \land \frame{\sigma^{\flat}, \delta^{\flat}} \leq \frame{\sigma^{\sharp}, \delta^{\sharp}} \right \}  \\
  &  \hspace{-3ex}= \alpha_{\ssimplus} \left \{ \frame{\sigma^{\flat}, \delta^{\flat}} \mid \exists \sigma \in \invIm{\alpha}(\set{\sigma^{\flat}}). \ \sigma \models \neg p 
  \land \ \frame{\sigma^{\flat}, \delta^{\flat}} \sim \frame{\sigma^{\sharp}, \delta^{\sharp}}
  \land \frame{\sigma^{\flat}, \delta^{\flat}} \leq \frame{\sigma^{\sharp}, \delta^{\sharp}}\right \} \\
  &  \hspace{-3ex}= \set{\frame{\sigma_1^{\sharp}, \delta^{\sharp}} \mid \exists \sigma \in \invIm{\alpha}(\sigma_1^{\sharp}). \sigma \models \neg p}
\end{align*}
where $\sigma_1^{\sharp}$ is defined as
\begin{equation*}
  \sigma_1^{\sharp} = \bigvee_{\sigma^{\flat} \leq_{\ssim} \sigma^{\sharp} } \set{\sigma^{\flat} 
  \mid \exists \sigma \in \invIm{\alpha}(\set{\sigma^{\flat}}). \sigma \models \neg p} 
\end{equation*} 
hence $\sigma_1^{\sharp} \sim \sigma^{\sharp}$ by transitivity and $\sigma_1^{\sharp}\leq \sigma^{\sharp}$. 
The case $\neg p$ is analogous.
\subparagraph*{($\varphi = \varphi_1 \land \varphi_2$)} 
\begin{align*}
  & \sem[]{\PtoR{\varphi_1 \land \varphi_2}}^{\As}\set{\frame{\sigma^{\sharp}, \delta^{\sharp}}\concat S^{\sharp}} %
  = 
  \sem[]{\PtoR{\varphi_1} \oplus \PtoR{\varphi_2}}^{\As}\set{\frame{\sigma^{\sharp}, \delta^{\sharp}}\concat S^{\sharp}} \\
  & = \sem[]{\PtoR{\varphi_1}}^{\As}\set{\frame{\sigma^{\sharp}, \delta^{\sharp}}\concat S^{\sharp}} \vee
  \sem[]{\PtoR{\varphi_2}}^{\As}\set{\frame{\sigma^{\sharp}, \delta^{\sharp}}\concat S^{\sharp}} \\
  & \overset{\heartsuit}{=} \begin{cases} 
    \varnothing \\
    \set{\frame{\sigma_1^{\sharp},\delta^{\sharp}}\concat S^{\sharp}} & \sigma_1^{\sharp} \leq \sigma^{\sharp}, \ \sigma_1^{\sharp} \sim \sigma^{\sharp}
  \end{cases} \vee \ \begin{cases} 
    \varnothing \\
    \set{\frame{\sigma_2^{\sharp},\delta^{\sharp}}\concat S^{\sharp}} & \sigma_2^{\sharp} \leq \sigma^{\sharp}, \ \sigma_2^{\sharp} \sim \sigma^{\sharp}
  \end{cases}  \\
  & = \begin{cases}
    \varnothing \\ 
    \set{\frame{\sigma_1^{\sharp},\delta^{\sharp}}\concat S^{\sharp}} & \sigma_1^{\sharp} \leq \sigma^{\sharp}, \ \sigma_1^{\sharp} \sim \sigma^{\sharp} \\ 
    \set{\frame{\sigma_2^{\sharp},\delta^{\sharp}}\concat S^{\sharp}} & \sigma_2^{\sharp} \leq \sigma^{\sharp}, \ \sigma_2^{\sharp} \sim \sigma^{\sharp} \\
    \set{\frame{\sigma_3^{\sharp},\delta^{\sharp}}\concat S^{\sharp}} & \sigma_3^{\sharp} = \sigma_1^{\sharp} \vee \sigma_2^{\sharp} \leq \sigma^{\sharp}, 
    \ \sigma_3^{\sharp} \sim \sigma_1^{\sharp} \sim \sigma_2^{\sharp} \sim \sigma^{\sharp}
  \end{cases}
\end{align*}
where at $(\heartsuit)$ we used the inductive hypothesis for $\varphi_1$ and $\varphi_2$.
\subparagraph*{($\varphi = \varphi_1 \lor \varphi_2$)} 
\begin{align*}
  & \sem[]{\PtoR{\varphi_1 \lor \varphi_2}}^{\As}\set{\frame{\sigma^{\sharp}, \delta^{\sharp}}\concat S^{\sharp}} %
  =
  \sem[]{\PtoR{\varphi_1} ; \PtoR{\varphi_2}}^{\As}\set{\frame{\sigma^{\sharp}, \delta^{\sharp}}\concat S^{\sharp}} \\
  & = \sem[]{\PtoR{\varphi_2}}^{\As}\left (\sem[]{\PtoR{\varphi_1}}^{\As}\set{\frame{\sigma^{\sharp}, \delta^{\sharp}}\concat S^{\sharp}} \right) \\
  & \overset{\heartsuit}{=} \sem[]{\PtoR{\varphi_2}}^{\As} \begin{cases} 
    \varnothing \\
    \set{\frame{\sigma_1^{\sharp},\delta^{\sharp}}\concat S^{\sharp}} & \sigma_1^{\sharp} \leq \sigma^{\sharp}, \ \sigma_1^{\sharp} \sim \sigma^{\sharp}
  \end{cases} \\
  & \overset{\clubsuit}{=} \begin{cases} 
    \varnothing \\
    \set{\frame{\sigma_2^{\sharp},\delta^{\sharp}}\concat S^{\sharp}} & \sigma_2^{\sharp} \leq \sigma_1^{\sharp} \leq \sigma^{\sharp}, 
    \ \sigma_2^{\sharp} \sim \sigma_1^{\sharp} \sim \sigma^{\sharp}
  \end{cases} \\
\end{align*}
where at $(\heartsuit)$ we used the inductive hypothesis for $\varphi_1$ and at $(\clubsuit)$ for $\varphi_2$.
\subparagraph*{($\varphi = \AX \varphi_1$)} 
\begin{align*}
  & \sem[]{\PtoR{\AX \varphi_1}}^{\As}\set{\frame{\sigma^{\sharp}, \delta^{\sharp}}\concat S^{\sharp}} 
  = \sem[]{\push; \nexte; \PtoR{\varphi_1}; \pop}^{\As}\set{\frame{\sigma^{\sharp}, \delta^{\sharp}}\concat S^{\sharp}} \\
  & = \sem[]{\pop}^{\As}\left (\sem[]{\PtoR{\varphi_1}}^{\As}\left (\sem[]{\nexte}^{\As}
  \left (\set{\frame{\sigma^{\sharp}, \delta^{\sharp}}\concat \frame{\sigma^{\sharp}, \delta^{\sharp}}\concat S^{\sharp}} \right ) \right ) \right ) \\
  & \overset{\clubsuit}{=} \sem[]{\pop}^{\As}\left (\sem[]{\PtoR{\varphi_1}}^{\As}
  \left (\set{\frame{\sigma^{\flat}, \delta^{\sharp}}\concat \frame{\sigma^{\sharp}, \delta^{\sharp}}\concat S^{\sharp} 
  \mid \frame{\sigma^{\flat}, \delta^{\sharp}} \in \sem[]{\nexte}^{\As}\set{\frame{\sigma^{\sharp}, \delta^{\sharp}}}} \right ) \right ) \\
  & \overset{\heartsuit}{=} \sem[]{\pop}^{\As}
  \begin{cases} 
    \varnothing \\
    T \subseteq \set{\frame{\sigma_1^{\sharp},\delta^{\sharp}}\concat \frame{\sigma^{\sharp}, \delta^{\sharp}}\concat S^{\sharp}} 
    & \sigma_1^{\sharp} \leq \sigma^{\sharp}, \ \sigma_1^{\sharp} \sim \sigma^{\sharp}
  \end{cases} \\
  & = \begin{cases} 
    \varnothing \\
    \set{\frame{\sigma^{\sharp},\delta^{\sharp}}\concat S^{\sharp}}
  \end{cases} \\
\end{align*}
as desired, since $\sigma^{\sharp} \sim \sigma^{\sharp}$ and $\sigma^{\sharp} \leq \sigma^{\sharp}$. 
Note that at $(\heartsuit)$ we used the inductive hypothesis and at $(\clubsuit)$ we use Lemma~\ref{lem:top-frame}, 
noticing additionally that the operator $\nexte$ 
only changes the current state and leaves the trace $\delta^{\sharp}$ unchanged.
\subparagraph*{($\varphi = \AF \varphi_1$)} 
\begin{align*}
  \sem[]{\PtoR{\AF \varphi_1}}^{\As}\set{\frame{\sigma^{\sharp}, \delta^{\sharp}}\concat S^{\sharp}}
  & = \sem[]{\PtoR{\varphi_1}; \r}^{\As}
  \set{\frame{\sigma^{\sharp}, \delta^{\sharp}}\concat S^{\sharp}} \\
  & \overset{\heartsuit}{=} \sem[]{\r}^{\As}\begin{cases} 
    \varnothing \\
    \set{\frame{\sigma_1^{\sharp},\delta^{\sharp}} \concat S^{\sharp}} 
    & \sigma_1^{\sharp} \leq \sigma^{\sharp}, \ \sigma_1^{\sharp} \sim \sigma^{\sharp} \\ 
  \end{cases} 
\end{align*}
where$\r = \push; \reset; (\add; \nexte; \PtoR{\varphi_1})^*; \loops;  \pop$ and at $(\heartsuit)$ we use the inductive hypothesis. 
Now if the result is $\varnothing$, the computation will yield $\varnothing$. Hence we proceed as in the second case: 
\begin{align*}
  & \sem[]{\push; \reset; (\add; \nexte; \PtoR{\varphi_1})^*; \loops;  \pop}^{\As}\set{\frame{\sigma_1^{\sharp},\delta^{\sharp}} \concat S^{\sharp}} \\ 
  & \quad = \sem[]{\pop}^{\As}\left ( \sem[]{\loops}^{\As}\left ( \sem[]{(\add; \nexte; \PtoR{\varphi})^*}^{\As}\left ( 
      \set{\frame{\sigma_1^{\sharp},\bot} \concat \frame{\sigma_1^{\sharp},\delta^{\sharp}} \concat S^{\sharp}} \right ) \right ) \right ) %
\end{align*}
We obtain a subset $T$ of $\set{\frame{\sigma^{\sharp}_{k_i}, \delta^{\sharp}_{k_i}} \concat \frame{\sigma_1^{\sharp}, \delta^{\sharp}} \concat S^{\sharp} \mid 
\frame{\sigma^{\sharp}_{k_i}, \delta^{\sharp}_{k_i}} \in \sem[]{(\add; \nexte; \PtoR{\varphi})^k}^{\As}(\frame{\sigma_1^{\sharp}, \bot})}$, 
exploiting the definition of $\sem[]{\add}^{\As}$, the lemma ensuring the property of the $\nexte$ operator and the inductive hypothesis on $\PtoR{\varphi}$. 
Then, we filter $T$ with $\sem[]{\loops}^{\As}$, which is the BCA of a filter, hence has the top-frame dependence property, 
and finally we apply $\sem[]{\pop}^{\As}$, which let us conclude as in the $\AX$ case.
The cases $\AG \varphi_1$ and $\varphi_1 \AU \varphi_2$ are analogous. 
\paragraph*{$\mu$-calculus}
We first prove that the following weaker property holds for all closed formulae, assuming that 
the environment $\eta$ associates to each variable $\X$ 
a function that also satisfies the following property:
\begin{equation}\label{eq:weak}
\sem{\PtoR{\varphi}}^{\As}\set{\frame{\sigma^{\sharp}, \delta^{\sharp}}\concat S^{\sharp}} 
  = \begin{cases} 
    \varnothing \\
    \set{\frame{\sigma_1^{\sharp},\delta_1^{\sharp}}\concat S^{\sharp}} &
    \sigma_1^{\sharp} \leq \sigma^{\sharp}, \ \sigma_1^{\sharp} \sim \sigma^{\sharp}, \ \delta_1^{\sharp} \geq \delta^{\sharp}
  \end{cases}
\end{equation}
  \begin{itemize}
    \item The cases $\varphi = p$ and $\varphi = \neg p$ hold by the same argument as in the ACTL case, with $\delta_1^{\sharp} = \delta^{\sharp}$.
    \item The case $\varphi = \varphi_1 \land \varphi_2$ holds by the same argument as in the ACTL case, with $\delta_1^{\sharp}$ 
    as the least upper bound of the traces resulting from the two branches.
    \item The case $\varphi = \varphi_1 \lor \varphi_2$ holds by the same argument as in the ACTL case.
    \item The case $\varphi = x$ holds by the hypothesis on $\eta(\X)$. 
    \item The case $\varphi = \Box \varphi_1$ is analogous to the case $\AX \varphi_1$ in ACTL.
  \end{itemize}
\subparagraph*{($\varphi = \mu x . \varphi_x$)}
\begin{align*}
  \hspace{-10pt}\sem{\PtoR{\mu x . \varphi_x}}^{\As}\set{\frame{\sigma^{\sharp}, \delta^{\sharp}}\concat S^{\sharp}} & = 
  \sem{\push; \reset; \upmu \X . ( \loops \oplus \add; \PtoR{\varphi_x}); \pop}^{\As}\set{\frame{\sigma^{\sharp}, \delta^{\sharp}}\concat S^{\sharp}} \\
  & = \sem{\pop}^{\As}\left ( \sem{\upmu \X . ( \loops \oplus \add; \PtoR{\varphi_x})}^{\As}
  \set{\frame{\sigma^{\sharp}, \bot}\concat \frame{\sigma^{\sharp}, \delta^{\sharp}}\concat S^{\sharp}}  \right )  \\
\end{align*}
Let $F(f) = \sem[{\eta[\X \mapsto f]}]{\loops \oplus (\add; \PtoR{\varphi_1})}$, then if $f$ has property~\eqref{eq:weak} also $F(f)$ has it, 
since $\loops$ (BCA of a filter), $\add$ (computation) and $\PtoR{\varphi_1}$ (inductive hypothesis) have it, and the operators $\oplus$ and $;$ preserve it. 
Then the least fixpoint $\upmu \X. \r_x$ has the property~\eqref{eq:weak}.  The case $\varphi = \nu x . \varphi_x$ is straightforward  by inductive hypothesis.

By relying on the property~\eqref{eq:weak}, the fact that for any formula $\varphi$ in $\mu_{\Box}$-calculus 
also the strongest property~(\ref{eq:strong-top-frame}) holds, follows by a further  routine induction.
\end{proof}
\end{toappendix}

  The soundness-by-design of this abstract semantics entails a
  sound program verification.

  \begin{proposition}[Program verification for satisfaction]
    \label{pr:prg-ver-sat}
    Given a transition system
  $T = (\Sigma, \operatorname{I}, \mathbf{P}, \trel, \vdash)$,
    for all ACTL or $\mu_{\Box}$-calculus formulae 
    $\varphi$, abstract domain $A$, and a compatible equivalence $\sim$ on A, 
    if
    $\sem{\PtoR{\varphi}}^{\sharp}(\alpha^s_\ssim(I)) = \bot$ then, for all
    $\sigma \in I$, we have $\sigma \models \varphi$.
  \end{proposition}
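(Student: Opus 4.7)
The plan is to combine Theorem~\ref{th:counterexamples-general}, which identifies the concrete semantics of $\PtoR{\varphi}$ with the set of counterexamples to $\varphi$, with the standard soundness result of abstract interpretation, plus the fact that $\alpha^s_\ssim$ reflects the bottom element.

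First, I would unfold the abstraction applied to the set of initial states. Viewing $I \subseteq \Sigma$ as the set of one-frame stacks $I^\varepsilon \triangleq \set{\frame{\sigma,\varnothing} \mid \sigma \in I} \subseteq \str{\operatorname{F}_\Sigma}$, Theorem~\ref{th:counterexamples-general} gives
\[
\sem[]{\PtoR{\varphi}}\,I^\varepsilon \;=\; \set{\frame{\sigma,\varnothing} \mid \sigma \in I,\ \sigma \not\models \varphi}.
\]
Hence, it suffices to prove that this concrete set is empty.

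Next, I would invoke soundness of the abstract interpreter $\sem{\cdot}^\sharp$. Since this semantics is defined inductively on $\kaf$ using, as the abstract denotation of each basic expression $\e$, its best correct approximation $\bsem{\e}^{\As}$ on $\As$, a routine structural induction on $\kaf$ terms (using monotonicity, additivity of $\alpha^s_\ssim$, and the standard soundness of $\operatorname{lfp}$ under Galois connections) gives the pointwise inequality
\[
\alpha^s_\ssim \circ \sem[]{\PtoR{\varphi}} \;\leq\; \sem[]{\PtoR{\varphi}}^\sharp \circ \alpha^s_\ssim.
\]
Applying this at $I^\varepsilon$ and using the hypothesis $\sem[]{\PtoR{\varphi}}^\sharp(\alpha^s_\ssim(I^\varepsilon)) = \bot$ yields $\alpha^s_\ssim(\sem[]{\PtoR{\varphi}}\,I^\varepsilon) = \bot$.

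Finally, I would use that $\alpha^s_\ssim$ reflects $\bot$, i.e.\ $\alpha^s_\ssim(X) = \bot$ implies $X = \varnothing$. This is immediate from the definition $\alpha^s_\ssim = \alpha_{\ssimplus} \circ \dirIm{\alpha}$: the image map $\dirIm{\alpha}$ sends a nonempty set of stacks to a nonempty set, and the equivalence abstraction $\alpha_{\ssimplus}$ only collapses equivalence classes by join, so it cannot produce the empty set out of a nonempty one. Combining with the characterisation from Theorem~\ref{th:counterexamples-general} gives $\set{\sigma \in I \mid \sigma \not\models \varphi} = \varnothing$, whence $\sigma \models \varphi$ for every $\sigma \in I$, as desired. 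The only nontrivial step is the inductive soundness argument for $\sem{\cdot}^\sharp$, but this is a textbook instance of Galois-connection soundness since all abstract basic operations are taken to be BCAs.
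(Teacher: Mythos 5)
Your proposal is correct and follows exactly the route the paper intends: the paper states this proposition as an immediate consequence of Theorem~\ref{th:counterexamples-general} together with the soundness-by-design of the abstract semantics (built from BCAs), and gives no further explicit proof. Your write-up simply makes that argument explicit, including the (true and easily checked) observation that $\alpha^s_\ssim$ maps only the empty set to $\bot$.
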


  As in the concrete case, the abstract semantics
  of programs encoding formulae is a lower closure.  This,
  together with the observation that the semantics only depends on the
  top frame of a stack (see Proposition~\ref{prop:top-frame} in the
  appendix) is relevant for the concrete implementation.

  \begin{remark}
  \label{re:bca}
  \rm
  We recalled in~\S~\ref{sec:s4} that for any Boolean test $b$ and abstract domain $A$, 
  the identity function %
  $\lambda a\in A.\, a$ is %
  a correct approximation of the filtering semantics $\sem[]{b}$ of $b$. 
  This function can be enhanced to $\lambda a\in A.\, 
  \alpha(\sem[]{b}\Sigma) \wedge_A a$, which is also a correct approximation of $\sem[]{b}$~\cite[Section~4.1]{mine17}. 
  If $A$ is a partitioning abstraction 
  then one can show that 
  $\lambda a\in A.\, \alpha(\sem[]{b}\Sigma) \wedge_A a$ 
  is indeed the \emph{best} correct approximation of $\sem[]{b}$.
  However, this property does not hold in general, even assuming that $A$ is a disjunctive abstraction, i.e.,
  the additivity of $\gamma$ (see Example~\ref{ex:filter} in the appendix).
  \qed
  \end{remark}

  \begin{toappendix}
  \begin{example}[on the BCA for Boolean tests]
    \label{ex:filter}
    We provide an example of a disjuctive abstraction where the
    $\lambda a\in A.\, \alpha(\sem[]{b}\Sigma) \wedge_A a$ is bot the
    BCA of $\sem[]{b}$.
 
  This can be shown by considering the following 
  simple sign abstraction for an integer variable $x$: $\mathrm{Sign}_{\alpha,\gamma}\triangleq \{x\in \mathbb{Z},\, x\in \mathbb{Z}_{\leq 0},\, x\in \mathbb{Z}_{\geq 0},\, x\in \mathbb{Z}_{= 0},\, x \in \varnothing\}$, which, being closed under unions, is disjunctive, whereas $\mathrm{Sign}$ is not partitioning (e.g., $\mathrm{Sign}$ does not include the complement of $\mathbb{Z}_{\leq 0}$).  
Then, by considering the Boolean test $x< 0 $, we have that 
$\alpha (\sem[]{x< 0 }\mathbb{Z}) \wedge_{\mathrm{Sign}} \mathbb{Z}_{\geq 0} = \mathbb{Z}_{\leq 0} \wedge_{\mathrm{Sign}} \mathbb{Z}_{\geq 0}=
\mathbb{Z}_{= 0}$, whereas the best approximation is
$\alpha(\sem[]{x< 0  }\mathbb{Z}_{\geq 0}) = \alpha(\varnothing) = \varnothing$. 
  \qed
\end{example}
\end{toappendix}
  \begin{example}[Control flow graphs and abstract frames]\label{ex:cfg-abstraction}
    In our running example 
    , abstract frames are of the shape
    $A \times A = \mathbb{P}^N \times \mathbb{P}^N$. 
    The abstraction of singletons is $\alpha(\set{(n, \rho)}) = (n \mapsto a)$ 
    with $a(n') = \bot$ for all $n' \neq n$.
    With the aim of joining past states when they correspond to the
    same program point, we consider the equivalence
    $\sim \subseteq \mathbb{P}^N \times \mathbb{P}^N$ on abstract frames
    defined by $\sigma^{\sharp}_1 \sim \sigma^{\sharp}_2$ when
    $\supp{\sigma^{\sharp}_1}= \supp{\sigma^{\sharp}_2}$.
    The property
    $\psi = \Box^3 (n=3\ \to \nu x.\, (n=3\ \land\ \Box^4 x))$ from
    Example~\ref{ex:mu} holds in the system, and we can prove it with
    this abstraction.  Let
    $ \sigma^{\sharp} = (s \mapsto \top) = \alpha(\set{(s, xyzw)})$ we
    can compute $\sem{\PtoR{\psi}}^{\sharp}(\sigma^{\sharp}) = \bot$,
    implying that the formula holds from any initial states
    (convergence is after a single full iteration). 
    Instead, the abstract computation for $\varphi = \AG (n = e \to z = 0)$ from
    Example~\ref{ex:actl} yields a false positive. \qed
  \end{example}

  \section{Locally Complete Analyses}\label{sec:lcl}

  If the abstract interpretation of a $\MOKA$ program returns an
  alarm, i.e.,
  $\sem[]{\PtoR{\psi}}^{\sharp}\alpha^s_{\ssim}(I) \neq \bot$, then any initial state
  in the concretisation of
  $\sem[]{\PtoR{\psi}}^{\sharp}\alpha^s_{\ssim}(I)$ is a candidate
  counterexample to the validity of $\psi$.
  However,
  due to over-approximation, we cannot distinguish spurious
  counterexamples from true ones.
  Here, we discuss how to combine under- and
  over-approximation for the analysis of $\MOKA$ programs
  $\PtoR{\psi}$ %
  to overcome this problem. %
  In particular we leverage Local Completeness Logic
  (LCL)~\cite{DBLP:journals/jacm/BruniGGR23,BruniGGR21} possibly paired with
  Abstract Interpretation Repair (AIR)
  strategies~\cite{DBLP:conf/pldi/BruniGGR22}, to improve the
  analysis precision. %

  \subparagraph{LCL.}%
  Most abstract domains are not globally complete for program
  analysis, so that the corresponding analyses may well yield false
  alarms. Accordingly,~\cite{DBLP:journals/jacm/BruniGGR23,BruniGGR21}
  studies how completeness can be locally weakened to an analysis of
  interest:
  given  a function $f: C \to C$,
  an abstract domain $A_{\alpha,\gamma} \in \operatorname{Abs}(C)$ is
   \emph{locally complete} 
  \emph{on a value}
  $c \in C$, denoted by $\mathbb{C}_c^A(f)$, when $\alpha\circ f(c) =
  \alpha \circ f \circ \gamma\circ \alpha(c)$ (hence global completeness amounts to $\mathbb{C}_c^A(f)$ for all $c\in C$).
  Intuitively, the absence of false alarms in an abstract computation
  comes as a
  consequence of the local completeness of the abstract transfer
  functions on the traversed concrete states. 
  
  Moreover, local completeness
  is a convex property, and this allows to check
  local completeness on suitable under-approximations $u\leq c$ such
  that $\alpha(u)=\alpha(c)$, as $\mathbb{C}_u^A(f)$ implies
  $\mathbb{C}_c^A(f)$.
  The work~\cite{DBLP:journals/jacm/BruniGGR23,BruniGGR21} %
  implements this idea
  through LCL, an under-approximating
  program logic (in the style
  of incorrectness logic~\cite{DBLP:journals/pacmpl/OHearn20}), %
  parameterised by the abstract domain $A$ which provides %
  an over-approximation. 
  The LCL proof rules for $\kat$
  programs are recalled in
  Table~\ref{table:LCL}.
  A provable LCL triple $\vdash_A [P] \ \r \ [Q]$ %
  ensures that each state satisfying $Q$ is reachable from some state satisfying $P$,
  and also
  guarantees that $Q$ and $\sem[]{\r}P$ 
  have the same abstraction in $A$.
  This means that the LCL program logic
  is sound w.r.t.\ the following notion of validity.

  \begin{definition}[Valid LCL triples]\ Let $P, Q \in C$ and $\r$ be a
    $\kat$ program. A
    program triple $\vdash_A [P]\ \r \ [Q]$  is valid if
    $\mathbb{C}^A_P(\r)\ \wedge\ Q \leq \sem[]{\r}P\ \wedge\ \alpha(\sem[]{\r}P)=\alpha(Q)$.
  \end{definition}
  
  The condition $\mathbb{C}^A_P(\r)$ states that $A$ is locally complete for $\sem[]{\r}$ on $P$, while $Q \leq \sem[]{\r}P$ that $Q$ under-approximates $\sem[]{\r}P$ and  $\alpha(\sem[]{\r}P)=\alpha(Q)$ that $\gamma(\alpha(Q))$ over-approximates $\sem[]{\r}P$.
  \begin{table}[t]
    \centering
    \resizebox{\textwidth}{!}{
      \begin{tabular}{c c}

        $\inference {\mathbb{C}^A_P(\e)}{\vdash_A [P] \ \e
        \ [\sem[]{\e}P]}[(transfer)]$ &
        $\inference {
        P' \leq P \leq \gamma(\alpha(P')) &
                                                      Q \leq Q' \leq \gamma(\alpha(Q))\\
        \vdash_A [P'] \ \r \ [Q']}{\vdash_A [P] \ \r
        \ [Q]}[(relax)]$ \\[15pt]
        $\inference {\vdash_A [P] \ \mathsf{r_1} \ [W] &
        \vdash_A [W] \ \mathsf{r_2} \ [Q]}{\vdash_A [P]
        \ \mathsf{r_1; r_2} \ [Q]}[(seq)]$  &
        $\inference {\vdash_A [P] \ \mathsf{r_1} \ [Q_1] &
        \vdash_A [P] \ \mathsf{r_2} \ [Q_2]}{\vdash_A [P]
        \ \mathsf{r_1\oplus r_2} \ [Q_1 \lor Q_2]}[(join)]$ \\[15pt]
        $\inference {\vdash_A [P] \ \r \ [W] & \vdash_A [P \lor
        W] \ \mathsf{r^{*}}\ [Q]}{\vdash_A [P] \ \mathsf{r^{*}}
        \ [Q]}[(rec)]$ &
        $\inference {\vdash_A [P] \ \r \ [Q] & Q \leq
        \gamma(\alpha(P))}{\vdash_A [P] \ \mathsf{r^{*}} \ [P \lor Q]}[(iterate)]$\\[15pt]
      \end{tabular}
      }
      \caption{Rules of the Local Completeness Logic~\cite{DBLP:journals/jacm/BruniGGR23}.}
      \label{table:LCL}
    \end{table}

  \subparagraph{LCL for $\MOKA$.}
  The LCL proof 
  system has been defined for $\kat$ programs only, hence it is enough for $\MOKA$
  programs induced by ACTL formulae.
  For
  analysing general $\kaf$ and thus $\MOKA$ programs, we
  extend it to $\upmu$LCL including the
  inference rules in Table~\ref{table:muLCL},
  where $\subst{\r}{\X}{\s}$ denotes the
  capture-avoiding substitution of the free occurrences of $\X$ for $\s$ in $\r$
  and the term $\upmu^{n} \X. \r$ represents the $n$-th
  fixpoint
  approximant with the expected semantics 
    ($\sem{\upmu^{0} \X. \r} \triangleq \lambda x.\, \bot$
   and
  $\sem{\upmu^{n+1} \X. \r} \triangleq \sem[{\eta[\X \mapsto \sem{\upmu^{n}
      \X. \r}]}]{\r}$).
  The syntax $\upmu^{n} \X. \r$
  is used only in $\upmu$LCL
  derivations. 
  Intuitively, since $\upmu^0 \X. \r$ behaves as $\zero$, the unique
  valid judgement is $\vdash_A [P] \ \upmu^0 \X. \r \ [\bot]$.
  The rule $(\upmu^{+})$ serves just to unfold $\upmu^{n+1} \X
  . \r$ as many times as needed. The key rule is (fix) which can be
  used whenever the $n$-th
  approximant provides enough information: in case the premise
  holds, by local completeness we will have
  $\sem[]{\upmu^n \X. \r}^{\sharp} \alpha(P) = \alpha (Q)$.
  Since it is always the case that
  $\sem[]{\upmu^n \X. \r}^{\sharp} \alpha(P) \leq_A \sem[]{\upmu
  \X. \r}^{\sharp} \alpha(P)$, if the side condition
  $\sem[]{\upmu \X. \r}^{\sharp} \alpha(P) \leq_A \alpha(Q)$ holds,
  then local completeness for the fixpoint term $\upmu \X . \r$ can be inferred.
  It turns out that $\upmu$LCL is a  sound proof system
  (see Lemma~\ref{le:soundness} in the appendix).
  \begin{table}
  \[
    \resizebox{\textwidth}{!}{
      \inference{}{\vdash_A [P] \ \upmu^0 \X. \r \ [\bot]}[($\upmu^0$)]
      \quad
      \inference{\vdash_A [P] \ \subst{\r}{\X}{\upmu^n \X .\r}
      \ [Q]}{\vdash_A [P] \  \upmu^{n+1} \X . \r \ [Q]}[($\upmu^{+}$)]
      \quad
      \inference{
        \sem[]{\upmu \X. \r}^{\sharp} \alpha(P) \leq_A \alpha(Q)\\
      \vdash_A [P] \  \upmu^{n} \X . \r \ [Q]}
      {\vdash_A [P] \  \upmu \X . \r \ [Q] }[(fix)
      ]
    }
  \]
  \caption{Novel rules for $\upmu$LCL.}
  \label{table:muLCL}
\end{table}

  \begin{toappendix}
    \begin{lemma}[Soundness]
      \label{le:soundness}
      The LCL rules for the fixpoint are sound.
    \end{lemma}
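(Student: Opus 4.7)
The plan is to verify the three novel fixpoint rules separately, checking in each case that the three components of validity (local completeness $\mathbb{C}^A_P$, under-approximation $Q \leq \sem[]{\r}P$, and matching abstraction $\alpha(\sem[]{\r}P) = \alpha(Q)$) follow from the premise(s). Rule $(\upmu^0)$ is immediate, since $\sem[]{\upmu^0 \X. \r} = \lambda x.\bot$: the conclusion $Q = \bot$ trivially under-approximates and matches $\bot$, and local completeness reduces to $\alpha(\bot) = \alpha(\bot)$. For $(\upmu^+)$ I would invoke the substitution lemma (Lemma~\ref{le:substitution-lemma}) to get $\sem[]{\subst{\r}{\X}{\upmu^n \X. \r}} = \sem[]{\upmu^{n+1} \X. \r}$; the two programs having identical concrete semantics, validity transfers verbatim from premise to conclusion.

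The main work concerns the rule $(\text{fix})$. Writing $S_n$ for $\sem[]{\upmu^n \X. \r}$ and $S$ for $\sem[]{\upmu \X. \r}$, I would rely on Kleene's theorem to know that the approximants form an ascending chain with $S_n \leq S$ pointwise for every $n$. From validity of the premise $\vdash_A [P]\ \upmu^n \X. \r\ [Q]$ I would extract $Q \leq S_n(P)$, so under-approximation for the conclusion follows as $Q \leq S_n(P) \leq S(P)$. For the matching abstraction, monotonicity of $\alpha$ yields $\alpha(Q) \leq_A \alpha(S(P))$, while correctness of the abstract semantics, $\alpha \circ S \leq_A \sem[]{\upmu \X. \r}^\sharp \circ \alpha$, paired with the side condition, gives the reverse inequality $\alpha(S(P)) \leq_A \sem[]{\upmu \X. \r}^\sharp(\alpha(P)) \leq_A \alpha(Q)$.

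What remains, and is the hardest step, is local completeness of the full fixpoint, $\alpha(S(P)) = \alpha(S(\gamma(\alpha(P))))$. The $\leq_A$ direction is monotonicity applied to $P \leq \gamma(\alpha(P))$; for the converse I would chain correctness with the standard Galois identity $\alpha\circ\gamma\circ\alpha = \alpha$ and the side condition to derive
\[
\alpha(S(\gamma(\alpha(P)))) \leq_A \sem[]{\upmu \X. \r}^\sharp(\alpha(\gamma(\alpha(P)))) = \sem[]{\upmu \X. \r}^\sharp(\alpha(P)) \leq_A \alpha(Q) = \alpha(S(P)).
\]
The key difficulty lies precisely in closing this loop: the side condition is formulated on the abstract evaluation of the \emph{full} fixpoint rather than of any finite approximant, and without it one could not bound $\alpha(S(\gamma(\alpha(P))))$ above by $\alpha(Q)$, since local completeness of the $n$-th approximant $\upmu^n \X. \r$ does not in general propagate to $\upmu \X. \r$.
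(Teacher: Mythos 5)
Your proof is correct and follows essentially the same route as the paper's: $(\upmu^0)$ and $(\upmu^{+})$ are handled identically, and for (fix) the side condition $\sem[]{\upmu \X.\r}^{\sharp}\alpha(P) \leq_A \alpha(Q)$ is used in the same way to close the chain of inequalities. The only difference is bookkeeping: the paper establishes the equivalent invariant $Q \leq \sem[]{\r}P$ together with $\sem[]{\r}^{\sharp}\alpha(P)=\alpha(Q)$ (which subsumes local completeness) and reuses the premise's abstract equation $\sem[]{\upmu^{n}\X.\r}^{\sharp}\alpha(P)=\alpha(Q)$, whereas you unfold the three-part definition of validity and instead invoke the standard soundness inequality $\alpha\circ\sem[]{\r} \leq \sem[]{\r}^{\sharp}\circ\alpha$ to obtain local completeness explicitly --- both are fine.
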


    \begin{proof}
      We equivalently show that given an abstraction
      $A \in \operatorname{Abs}(C)$, $\r \in \kaf$,
      $P, Q \in C$ and $\vdash_A [P] \ \r \ [Q]$ then
      \begin{enumerate}
        \item $Q \leq \sem[]{\r}P$
        \item $\sem[]{\r}^{\sharp} \alpha (P) = \alpha(Q)$
      \end{enumerate}

      For rule ($\upmu^0$) condition (1) is
      $\bot \leq \sem[]{\upmu^0 \X. \r}P$ and thus it trivially
      holds. Similarly, (2)
      $\sem[]{\upmu^0 \X. \r}^{\sharp} \alpha (P) = \sem[]{\zero}^{\sharp} \alpha (P) = \bot_A = \alpha(\bot)$, just by using the
      definitions.

      \bigskip

      Concerning rule ($\upmu^{+}$), conditions (1) and (2) trivially
      follow by the observation that the commands $\upmu^{n+1} \X. \r$ and
      $r[\upmu^{n} \X. \r/\X]$ in the premise and conclusion,
      respectively, have the same semantics.

      \bigskip
      Let us consider the rule (fix). Let us assume that the premise
      $\vdash_A [P] \  \upmu^{n} \X . \r \ [Q]$ holds, i.e.,
      \begin{center}
        $Q \leq \sem[]{\upmu^n \X. \r} P$ ($\dag$) \qquad $\sem[]{\upmu^n \X.
        \r}^{\sharp} \alpha (P) = \alpha(Q)$ ($\ddag$)
      \end{center}

      Then condition (1) holds since
      \begin{align*}
        Q & \leq \sem[]{\upmu^n \X. \r} P & \mbox{[by ($\dag$)]}\\
        & \leq \sem[]{\upmu \X. \r} P   & \mbox{[since $\sem[]{\upmu \X. \r } =
        \bigvee_n \sem[]{\upmu^n \X. \r }$]}.
      \end{align*}
      Concerning (2)
      \begin{align*}
        \alpha(Q)
        & \geq_A \sem[]{\upmu \X. \r}^{\sharp} \alpha (P)
        & \mbox{[side condition of the rule]}\\
        & \geq_A \sem[]{\upmu^n \X. \r}^{\sharp} \alpha (P) &
        \mbox{[definition of approximant]}\\
        & = \alpha(Q) & \mbox{[by ($\ddag$)]}%
      \end{align*}
    \end{proof}
  \end{toappendix}

  \begin{remark}\rm
    When a least fixpoint appears
    in $\PtoR{\varphi}$, we can exploit the following 
    heuristic for choosing the
    value of $n$: if the abstract domain satisfies the Ascending
    Chain Condition (ACC) then, for every $P \in C$
    there is $n_P\in \nat$ such that
    $\sem[]{\upmu^{n_P} \X. \r}^{\sharp} \alpha(P) = \sem[]{\upmu
    \X. \r}^{\sharp} \alpha(P)$. Then, by taking any $n\geq n_P$ such
    that $\vdash_A [P] \  \upmu^{n} \X . \r \ [Q]$ is provable,
    the  condition $\sem[]{\upmu \X. \r}^{\sharp} \alpha(P) \leq_A
    \alpha(Q)$ is readily 
    satisfied, so that $\vdash_A [P] \  \upmu \X . \r
    \ [Q]$ is valid. More precisely, the 
    rule below is sound:
    \begin{equation*}
      \inference{\vdash_A [P] \  \upmu^{n} \X . \r \ [Q]
        \qquad
        \sem[]{\upmu^{n} \X. \r}^{\sharp}
        \alpha(P) = \sem[]{\upmu \X. \r}^{\sharp}
        \alpha(P)
      }{\vdash_A [P]
      \  \upmu \X . \r \ [Q]}[(afix)]
      \tag*{\qed}
    \end{equation*}
  \end{remark}

  The following result relating $\upmu$LCL proofs with
  validity of formulae follows as an easy consequence
  of~\cite[Corollary~5.6]{DBLP:journals/jacm/BruniGGR23} and
  Theorem~\ref{th:counterexamples-general}.

  \begin{corollary}[Precision]
   Let $T = (\Sigma, \operatorname{I}, \mathbf{P}, \trel, \vdash)$ be a transition system. For all ACTL or $\mu_{\Box}$-calculus
   formulae 
    $\varphi$, abstract domain $A$ and compatible equivalence $\sim$ on A, 
    if the triple
    $\vdash_{\As}~[I] \ \PtoR{\varphi} \ [Q]$ is derivable, then $Q
    \subseteq I$ and
    \begin{itemize}
      \item $Q = \varnothing$ if and only if for all $\sigma \in I$ we have
        $\sigma \models \varphi$;
      \item if $Q \neq \varnothing$ then for all $\sigma \in Q$ we have $\sigma
        \centernot \models \varphi$.
    \end{itemize}
  \end{corollary}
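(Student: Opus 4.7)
My plan is to combine two ingredients: the \emph{soundness of $\upmu$LCL} w.r.t.\ the stack abstraction $\As$, which follows from Corollary~5.6 of~\cite{DBLP:journals/jacm/BruniGGR23} together with the extension to the new fixpoint rules already justified in Lemma~\ref{le:soundness}, and \emph{Theorem~\ref{th:counterexamples-general}}, which identifies $\sem[]{\PtoR{\varphi}}I$ exactly with the set of counterexamples to $\varphi$ sitting in $I$. From the hypothesis $\vdash_{\As}~[I] \ \PtoR{\varphi} \ [Q]$, LCL soundness extracts
\[
Q\; \subseteq\; \sem[]{\PtoR{\varphi}}I \qquad\text{and}\qquad \alpha^s_{\ssim}(Q)\;=\;\alpha^s_{\ssim}(\sem[]{\PtoR{\varphi}}I),
\]
which are the only two facts I need.

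\textbf{From under-approximation to the second bullet and $Q\subseteq I$.} Using Theorem~\ref{th:counterexamples-general}, the middle set can be rewritten as $\set{\frame{\sigma,\Delta}\concat S \in I \mid \sigma \not\models \varphi}$. The first display above then immediately yields $Q \subseteq I$ and shows that every $\frame{\sigma,\Delta}\concat S \in Q$ satisfies $\sigma \not\models \varphi$, delivering the second bullet of the statement.

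\textbf{The two directions of the iff.} For $(\Leftarrow)$, if every $\sigma \in I$ satisfies $\varphi$ then by Theorem~\ref{th:counterexamples-general} we have $\sem[]{\PtoR{\varphi}}I = \varnothing$, and the inclusion $Q \subseteq \sem[]{\PtoR{\varphi}}I$ collapses $Q$ to $\varnothing$. For $(\Rightarrow)$, assume $Q=\varnothing$; then $\alpha^s_{\ssim}(Q) = \bot$, so by the local completeness equality also $\alpha^s_{\ssim}(\sem[]{\PtoR{\varphi}}I)=\bot$, and the Galois connection yields $\sem[]{\PtoR{\varphi}}I \subseteq \gamma^s_{\ssim}(\bot)$; invoking Theorem~\ref{th:counterexamples-general} once more forces $\sigma \models \varphi$ for every $\sigma \in I$.

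\textbf{Expected obstacle.} There is essentially no substantive obstacle: the whole argument is an orchestration of the two cited results. The only point that deserves a one-line check is $\gamma^s_{\ssim}(\bot)=\varnothing$, which is immediate by unfolding $\gamma^s_{\ssim} = \invIm{\alpha} \circ \gamma_{\ssimplus}$ and observing that both constituent maps send $\varnothing$ to $\varnothing$; from this the $(\Rightarrow)$ step above concludes cleanly.
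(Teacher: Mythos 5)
Your proposal is correct and follows essentially the same route as the paper, which states the corollary as "an easy consequence" of the LCL soundness result (Corollary~5.6 of the cited work, extended to the fixpoint rules by Lemma~\ref{le:soundness}) together with Theorem~\ref{th:counterexamples-general}; you have simply spelled out the orchestration, including the worthwhile sanity check that $\gamma^s_{\ssim}(\bot)=\varnothing$ so that $\alpha^s_{\ssim}(\sem[]{\PtoR{\varphi}}I)=\bot$ really forces $\sem[]{\PtoR{\varphi}}I=\varnothing$.
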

  \begin{example}[Control flow graphs and LCL derivations]
    As pointed out in
    Example~\ref{ex:cfg-abstraction} the computation of the program $\mathsf{m}_{\varphi}$,
    which encodes in $\MOKA$ the property $\varphi = \AG (n = e \to z = 0)$ (see Example~\ref{ex:actl}), 
    yields a false alarm when we use the abstract domain
    $A$ depicted in Figure~\ref{fig:predicate}.
    Consider as set of initial states $I_0 =
    \set{\frame{(s, 0100), \varnothing}, \frame{(s, 0011),
    \varnothing}}$,
    whose abstraction covers the set of possible
    initial states
    $\set{\frame{(s, xyzw), \varnothing} \mid x, y, z, w \in \ev}$,
    because
    $\alpha_{\ssim}^s(I_0) = (s \mapsto \top)$. 
    Since the property holds, trying to derive a triple $\vdash_{\As} [I_0] \ \mathsf{m}_{\varphi} \ [\varnothing]$ 
    leads to 
    the failure of some
    local completeness assumption, which can drive the refinement of $A$.
    The imprecision is found in the fourth iteration of the $\nexte$ operator, 
    in which the set $\set{(3, 1100),(3, 0111)}$ is abstracted to $(3, xyzw)$, 
    losing the relation between $p$ and $q$ 
    being either both valid or both invalid. Several domain repairs are possible. 
    Following~\cite{DBLP:conf/pldi/BruniGGR22} we can %
    repair the domain by adding the abstract point $q\to p$ (a more abstract repair than adding the element $q\leftrightarrow p$, as proposed 
    in~\cite{DBLP:conf/tacas/BallPR01}). 
    Here we can also exploit a different route, by refining the equivalence so
    that $\sigma^{\sharp}_1 \sim \sigma^{\sharp}_2$
    when $\supp{\sigma^{\sharp}_1} = \supp{\sigma^{\sharp}_2} \neq 3$. This intuitively 
    corresponds to abstract separately the states at program point $n=3$.
    In both cases in the new domain $A'$ we can show $\sem[]{\PtoR{\varphi}}^{\sharp} \alpha'^{s}_{\ssim}(I_0) = \bot$ 
    (see Example~\ref{ex:cfg-appendix} in the appendix for the details). 
    \qed
  \end{example}

\begin{toappendix}
  \begin{example}[Control flow graphs - Abstract frames, LCL
    derivation and repair]\label{ex:cfg-appendix}
    We exemplify the use of LCL for showing that
    the property $\varphi = \AG (n = e \to z = 0)$ from Example~\ref{ex:actl}, 
    encoded in $\MOKA$ by the program 
    $\mathsf{m}_{\varphi} = \push; \nexte^*; \mathsf{n = e ?}; \mathsf{z \neq 0 ?}; \pop$ 
    holds for our running
    example in Figure~\ref{fig:example}. We use the abstraction
    $A_{\alpha,\gamma}$ depicted in Figure~\ref{fig:predicate}. 
    The set of initial states is $I_0 =
    \set{\frame{(s, 0100), \varnothing}, \frame{(s, 0011),
    \varnothing}}$ that satisfy the properties $x = 0$, $y \neq
    z = w \leq 1$. Note that the abstraction of
    $I_0$ covers the set of possible
    initial states
    $\set{\frame{(s, xyzw), \varnothing} \mid x, y, z, w \in \ev}$,
    since
    $\dirIm{\alpha}(I_0) = (s \mapsto \top)$.
    
    Let us consider the triple
      $\vdash_{\As}~[I_0] \ \mathsf{m}_{\varphi}
      \ [\varnothing]$, which reads \begin{equation*}
        \vdash_{\As}~ [I_0] \ \push; \nexte^*; \mathsf{n = e ?}; \mathsf{z \neq 0 ?}; \pop \ [\varnothing]
      \end{equation*}
    As pointed out in Example~\ref{ex:cfg-abstraction} the computation of the program $\mathsf{m}_{\varphi}$ 
    yields a false alarm when using the abstract domain $A$. 
    Hence this triple cannot be inferred in LCL with this 
    abstract domain $\As$ due to the failure of some local completeness 
    assumptions, and, on the other hand, this lack of local completeness
    drives the refinement of $\As$.  

    The root of the
    derivation tree is the judgement above.
    Since, by Theorem~\ref{th:gen-BCAs}, $\push$, and $\pop$ are
  complete we
  can apply the
  rule $(\operatorname{seq})$ twice to infer the
  triple below, where, to improve readability, we abuse the notation by
  noticing that stacks are inessential and that computation paths
  are never exploited (because the commands $\add$ and $\loops$
  do not occur in the program), to write $\sigma$ in place of
  $\frame{\sigma, \Delta}\concat S$, e.g., we write $I_0 =
  \set{(s, 0100), (s, 0011)}$:
    \begin{equation*}
      \vdash_{\As}~[I_0] \ \nexte^*;
      \mathsf{n =  e ?}; \mathsf{z \neq 0?} \ [\varnothing]
    \end{equation*}
    To ease the notation, any variable $x,
    y, z, w$ free is inteded to be in $\ev$. 
    Moreover, let $T_i = \bigcup_{j \leq i} I_j$.

    The derivation begins as
    
    \begin{equation*}
      \resizebox{\textwidth}{!}{
        \inference{
          \inference{
            \inference{\mathbb{C}^{\As}_{I_0}(\nexte)}{\vdash_{\As}~
              [I_0]\ \nexte
            \ [I_1]}[{\normalfont (transfer)}]
            &
            \inference{\heartsuit}{\vdash_{\As}~[T_1]
              \ \nexte^* \ [R]}[{\normalfont (rec)}]
          }{\vdash_{\As}~
          [I_0]\ \nexte^*\ [R]}[{\normalfont (rec)}]
          &
          \inference{\spadesuit}{\vdash_{\As}~ [R] \ \mathsf{n =
          e?}; \mathsf{z \neq 0 ?} \ [\varnothing]}[{\normalfont (seq)}]
        }{\vdash_{\As} [I_0]\ \nexte^*; \mathsf{n =  e ?};
        \mathsf{\neq 0?} \ [\varnothing]}[{\normalfont (seq)}]
    }
    \end{equation*}
    where $R$ needs to be found to infer the triples $\vdash_{\As} [I_0] \ \nexte^*\ [R]$ 
    and $\vdash_{\As} [R] \ \mathsf{n =  e ?}; \mathsf{z \neq 0?} \ [\varnothing]$,
    $I_1 = \sem[]{\nexte}(I_0) = \set{(1, 1100), (1, 0001)}$.
    The local completeness hypothesis
    at $I_0$ on the domain $\As$, that we denote $\mathbb{C}^{\As}_{I_0}$, is
    satisfied since $\alpha^s_\ssim \circ \sem[]{\nexte} \circ \gamma^s_\ssim \circ \alpha^s_\ssim (\set{(s, 0100),
    (s, 0011)}) =
    \alpha^s_\ssim \circ \sem[]{\nexte} (\set{(s, 0100), (s, 0011)})$, because
    \begin{align*}
      \alpha^s_\ssim \circ \sem[]{\nexte} \circ \gamma^s_\ssim \circ \alpha^s_\ssim(\set{(s, 0100), (s, 0011)}) & =
      \alpha^s_\ssim \circ \sem[]{\nexte} (\set{(s, xyzw)}) \\
      & =\alpha^s_\ssim (\set{(1, yy0w)}) \\
      & = (1 \mapsto p \land q)
      & \\
      \alpha^s_\ssim \circ \sem[]{\nexte} (\set{(s, 0100), (s, 0011)}) & = \alpha^s_\ssim (\set{(1,
      1100), (1, 0001)}) \\ 
      & = (1 \mapsto p \land q)
    \end{align*}
    The branch which is most interesting
    to explore is now $(\heartsuit)$, in which we apply again the $(\operatorname{rec})$
    rule:
    \begin{equation*}
      \inference{
        \inference
        {\mathbb{C}^{\As}_{T_1}(\nexte)}
        {\vdash_{\As} [T_1] \ \nexte \ [I_2]}
        [{\normalfont(transfer)}]
        &
        \inference
        {\diamondsuit}
        {\vdash_{\As} [T_2] \ \nexte^* \ [R]}
        [{\normalfont (rec)}]
      }
      {\vdash_{\As} [T_1] \ \nexte^* \ [R]}[{\normalfont (rec)}]
    \end{equation*}
    where $I_2 = \sem[]{\nexte}{T_1} = I_1 \cup \set{(2, 0001), (3,
    1100)}$. Also the completeness hypothesis $\mathbb{C}^{\As}_{T_1}(\nexte)$
    is satisfied:
    \begin{align*}
      \alpha^s_\ssim  \circ \sem[]{\nexte} \circ \gamma^s_\ssim \circ \alpha^s_\ssim& (\set{(s, 0100), (s, 0011),
          (1,
      1100), (1, 0001)}) \\
      & = \alpha^s_\ssim \circ \sem[]{\nexte}(\set{(s, xyzw),
      (1, yy0w)}) \\
      & = \alpha^s_\ssim (\set{(1, yy0w), (2, y y 0 1), (3, yy00)}) \\
      & = (1 \mapsto p \land q, 2 \mapsto p \land q, 3 \mapsto p \land q) \\
      & \\
      \alpha^s_\ssim  \circ \sem[]{\nexte} & (\set{(s, 0100), (s, 0011), (1,
      1100), (1, 0001)})    \\
      & = \alpha^s_\ssim (\set{(1, 1100), (1, 0001), (2, 0001), (3, 1100)}) \\
      & = (1 \mapsto p \land q, 2 \mapsto p \land q, 3 \mapsto p \land q) %
    \end{align*}
    We proceed in the exploration of the $(\diamondsuit)$ branch:
    \begin{equation*}
      \inference{
        \inference
        {\mathbb{C}^{\As}_{T_2}(\nexte)}
        {\vdash_{\As} [T_2] \ \nexte \ [I_3]}
        [{\normalfont(transfer)}]
        &
        \inference
        {\clubsuit}
        {\vdash_{\As} [T_3] \ \nexte^* [R]}[{\normalfont(iterate)}]
      }{
        \vdash_{\As} [T_2] \ \nexte^* \ [R]
      }[{\normalfont (rec)}]
    \end{equation*}
    where $I_3 = \sem[]{\nexte}T_2 = I_2 \cup \set{(3, 0111), (e, 1100)}$. 
    Also the completeness hypothesis 
    $\mathbb{C}^{\As}_{T_2}(\nexte)$ holds, because
    \begin{align*}
      \alpha^s_\ssim \circ \sem[]{\nexte} \circ \gamma^s_\ssim \circ \alpha^s_\ssim & (\set{(s,
          0100), (s, 0011), (1, 1100), (1,0001), (2, 000 1), (3,
      1100)}) = \\
      & =  \alpha^s_\ssim \circ \sem[]{\nexte}(\set{(s, xyzw), (1, yy0w), (2, yy0w),
      (3, yy0w)}) \\
      & = \alpha^s_\ssim (\set{(1, yy0w), (2, yy0w), (3, yy0w), (3, y [y\! +\!1] 1 w),
      (e, yy0w)}) \\
      & = \set{(1 \mapsto p \land q, 2 \mapsto p \land q, 
      3 \mapsto \top, e \mapsto p \land q)}  \\
      \alpha^s_\ssim \circ \sem[]{\nexte} & (\set{(s,
          0100), (s, 0011), (1, 1100), (1,0001), (2, 000 1), (3,
      1100)})  = \\
      & = \alpha^s_\ssim (\set{(1, 1100), (1, 0001), (2, 0001), (3, 1100), (3,
      0111), (e, 1100)}) \\
      & = \set{(1 \mapsto p \land q, 2 \mapsto p \land q, 
      3 \mapsto \top, e \mapsto p \land q)}  \\
    \end{align*}
    Now the source of incompleteness is found in exploring the
    $(\clubsuit)$ branch, since the proof obligation for
    $\mathbb{C}^{\As}_{T_3}(\nexte)$ fails:
    {%
      \begin{equation*}
        \inference{
          \inference{\mathbb{C}^{\As}_{T_3}(\nexte)}
          {\vdash_{\As}
            [T_3] \ \nexte \ [I_4]
            }
            [{\normalfont (transfer)}] 
            &
            I_4 \leq A(T_3)
            }
            {\vdash_{\As} [T_3] \ \nexte^*
        \ [R]}
        [{\normalfont(iterate)}]
    \end{equation*}}
    where $I_4 = \sem[]{\nexte}T_3 = I_3 \cup \set{(s, 0111), (e, 1100)}$, we set $R \triangleq T_4$ and the
    condition $I_4 \leq A(T_3)$ guarantees that, thanks to the
    abstraction, we can stop the
    unfolding of the fixpoint, while the concrete computation would
    instead require $4k$ (where $k$ is the modulo in $\ev$) steps to reach a fixpoint.
    We have that 
    \begin{align*}
      \hspace{-10ex}\alpha^s_\ssim \circ  & \sem[]{\nexte}  \circ  \gamma^s_\ssim \circ \alpha^s_\ssim  (\left \{(s,0100), (s, 0011), (1,
      1100), (1, 0001), \right . \\ 
      & \qquad \qquad \qquad \left . 
      (2, 0001), (3, 1100), (3, 0111), (e, 1100)\right \}) = \\
      & = \alpha^s_\ssim \circ \sem[]{\nexte} (\set{(s, xyzw), (1, yy0w),
          (2, yy0w), (3,
      xyzw), (e, yy0w)}) \\
      & = \alpha^s_\ssim (\set{(1, yy0w), (2, yy0w), (3, yy0w), (3, y [y\! +\!1] 1 w),
      (e, yy0w), (s, x^{\neq y} z w), (e, yyzw)}) \\
      & = (s \mapsto \overline{q}, 1 \mapsto p \land q, 2 \mapsto p \land q, 3 \mapsto \top,
      e \mapsto q) \\
      & \\
      \hspace{-10ex} \alpha^s_\ssim \circ & \sem[]{\nexte}  (\left \{(s,0100), (s, 0011), (1,
      1100), (1, 0001),
      \right . \\ 
      & \qquad \quad \left .  
      (2, 0001), (3, 1100), (3, 0111), (e, 1100)\right \}) = \\
      & = \alpha^s_\ssim (\set{(1, 1100), (1, 0001), (2, 0001), (3, 1100), (e,
      1100), (3, 0 1 1 1), (s, 01 1 1)}) \\
      & = \set{(s, x^{\neq y} y z^{\neq 0} w), (1, yy0w), (2, yy0w), (3,
      xyzw), (e, yy0w)} \\ 
      & = (s \mapsto \overline{q} \land \overline{p}, 
      1 \mapsto p \land q, 2 \mapsto p \land q, 3 \mapsto \top,
      e \mapsto p \land q) %
    \end{align*}
    The source of imprecision is due to the fact that the abstraction $(3, xyzw)$ 
  is too coarse for the set $\set{(3, 1100),(3, 0111)}$, because the relation between $p$ and $q$ 
  being either both valid or both invalid is lost. Several domain repairs are possible. 
  In~\cite{DBLP:conf/pldi/BruniGGR22} it is shown that adding the abstract element $q\to p$ to the domain 
  $A$ offers a more abstract repair than adding the element $q\leftrightarrow p$, as originally proposed 
  in~\cite{DBLP:conf/tacas/BallPR01}. 
  In fact, letting $A_1=A\cup\set{q\leftrightarrow p}$, 
  the abstract interpreter computes $\sem[]{\PtoR{\varphi}}^{\sharp}{\alpha_1}^s_\ssim(I_0) 
  = \sem[]{\PtoR{\varphi}}^{\sharp}{(s \mapsto \top)} = \bot$, implying
  that the formula holds from any initial states.
 Here we can also exploit a different route, by refining the equivalence so
  that $\sigma^{\sharp}_1 \sim \sigma^{\sharp}_2$
  when $\supp{\sigma^{\sharp}_1} = \supp{\sigma^{\sharp}_2} \neq 3$.
  Intuitively, this corresponds to abstract separately the states at program point $n=3$.
  If we denote by $\approx$ be refined equivalence, 
  the proof obligation
  $\mathbb{C}^{\Aa}_{T_3}(\nexte)$ is now satisfied:
    \begin{align*}
      \hspace{-15ex} & \alpha^s_{\aapprox} \circ  \sem[]{\nexte}  \circ \gamma^s_{\aapprox} \circ \alpha^s_{\aapprox} (
        \left \{(s, 0100), (s, 0011), (1,
      1100), (1, 0001), \right . \\ & \left . 
      \qquad \qquad \qquad \qquad \qquad 
      (2, 0001), (3, 1100), (3, 0111), (e, 1100)\right \}) = \\
      & = \alpha^s_{\aapprox} \circ \sem[]{\nexte} (\set{(s, xyzw), (1, yy0w),
      (2, yy0w), (3,yy0w), (3, x^{\neq y} y z^{\neq 0} w), (e, yy0w)}) \\
      & = \alpha^s_{\aapprox} (\set{(1, yy0w), (2, yy0w), (3, yy0w), (3, y [y\! +\!1] 1 w),
      (e, yy0w), (s, x^{\neq y} z^{\neq 0} w), (e, yy0w)}) \\
      & = (s \mapsto
      \overline{p} \land \overline{q}, 1\mapsto p \land q, 2
      \mapsto p \land q,
      3 \mapsto p \land q, 3 \mapsto \overline{p} \land \overline{q},
    e \mapsto p \land q) \\
      & \\
      \hspace{-15ex} & \alpha^s_{\aapprox} \circ  \sem[]{\nexte}  (\left \{(s,0100), (s, 0011), (1,
      1100), (1, 0001), 
      \right . \\ & \left . 
      \qquad \qquad \qquad
      (2, 0001), (3, 1100), (3, 0111), (e, 1100)\right \}) = \\
      & = \alpha^s_{\aapprox} (\set{(1, 1100), (1, 0001), (2, 0001), (3, 1100), (e,
      1100), (3, 0 1 1 1), (s, 01 1 1)}) \\
     & = (s \mapsto
      \overline{p} \land \overline{q}, 1\mapsto p \land q, 2
      \mapsto p \land q,
      3 \mapsto p \land q, 3 \mapsto \overline{p} \land \overline{q},
    e \mapsto p \land q)
    \end{align*}
    and in fact one can compute $\sem[]{\PtoR{\varphi}}^{\sharp}\alpha^s_{\aapprox}(I_0) = \bot$.
    
    Then we can go back to the $(\spadesuit)$ branch to conclude:
    \begin{equation*}
      \resizebox{\textwidth}{!}{
      \inference{
        \inference
        {\mathbb{C}^{\Aa}_{R}(\mathsf{n =  e?})}
        {\vdash_{\Aa} [R] \ \mathsf{n =  e ?} \ [\set{(e, yy0w)}]}
        [{\tiny\normalfont(transfer)}]
        &
        \inference
        {\mathbb{C}^{\Aa}_{\set{(e, yy0w)}}(\mathsf{z \neq 0?})}
        {\vdash_{\Aa} [\set{(e, yy0w)}] \ \mathsf{z \neq 0?} \ [\varnothing]}
        [{\tiny\normalfont(transfer)}]
      }
      {
        \vdash_{\Aa} [R] \ \mathsf{n = e ?}; \mathsf{z \neq 0?} \ [\varnothing]
      }
      [{\tiny\normalfont (seq)}]
    }
    \end{equation*}
    where the proof obligations $\mathbb{C}^{\Aa}_{R}(\mathsf{n = e ?}),
    \ \mathbb{C}^{\Aa}_{\set{(e, yy0w)}}(\mathsf{z \neq 0?})$ are
    satisfied, since both properties $n = e$ and
    $z \neq 0$  are representable in the abstract domain.
    \qed
  \end{example}
\end{toappendix}
\begin{example}[Traffic light example, LCL derivation and repair,
  $\mu$-calculus version] \label{ex:sem-lcl} Consider the toy traffic
  light example from Figure~\ref{fig:semaphore-car} and the property
  $\psi = \AG (\mathsf{g} \to \AX \mathsf{d})$, briefly discussed in
  the introduction. We give some additional details in the light of
  the results presented. We first translate the formula in
  $\mu_{\Box}$-calculus as
  $\psi_{\mu} = \nu x . ((\mathsf{g} \rightarrow \Box \mathsf{d})
  \land \Box x)$ whose encoding is %
  $\PtoR{\psi_{\mu}} = \upmu \X . (\r_1 \oplus \push; \nexte; \X; \pop) \triangleq \upmu
  \X. \r_x$, where $\r_1\triangleq \mathsf{g?}; \push; \nexte;
  \mathsf{\neg d?}; \pop$. Then we try to derive the triple
  $\vdash_{\As} [\set{\mathsf{rs}}]\ \upmu \X. \ \r_x [\varnothing]$ in
  the abstract domain $A$ (see Figure~\ref{fig:semaphore-hasse})
  by applying the rule (afix) with $n = 2$ (the number of iterations
  needed for convergence of the abstract fixpoint in $A$). By definition, 
  $\upmu^0\X. \r_x = \zero$; then 
  the first approximant is $\upmu^1 \X. \r_x 
  = \r_1 \oplus \push; \nexte; \zero; \pop$,
  but we can omit the branch that contains $\zero$, so 
  $\upmu^1 \X. \r_x = \r_1$; and  
  the second approximant is 
  $\r_2\triangleq \upmu^2 \X. \r_x = \r_1 \oplus \push; \nexte; \r_1; \pop$.      
  The attempt to derive the triple $\vdash_{\As} [\set{\mathsf{rs}}]\ \r_2 \ [\varnothing]$ is sketched below, where the mid row gives an indication of the triples labelling the
  leaves of the derivation, reporting for each basic command involved, the
  pre-condition and post-condition of the triple, while the top row reports their abstractions.

\begin{center}
    \begin{tikzpicture}[xscale=0.7,yscale=0.7, every
      node/.style={font=\footnotesize}, every node/.style = {inner sep=0.8pt}]
      \node (vdash) {$\vdash_{\As}$};
      \node (rs) [base right = 2pt of vdash] {$[\mathsf{rs}]$};
      \node (mid) [right = 3pt of rs] {};
      \node (g1) [right = 3pt of mid] {$\mathsf{g?}$}; 
      \node (g1;) [base right = 0pt of g1] {$;$}; 
      \node (p1) [base right = 2pt of g1;] {$\push$}; 
      \node (p1;) [base right = 0pt of p1] {$;$}; 
      \node (n1) [base right = 2pt of p1;] {$\nexte$};
      \node (n1;) [base right = 0pt of n1] {$;$}; 
      \node (d1) [base right = 2pt of n1;] {$\neg \mathsf{d?}$};
      \node (d1;) [base right = 0pt of d1] {$;$}; 
      \node (p2) [base right = 2pt of d1;] {$\pop$};
      \node (p2;) [right = 0pt of p2] {};
      \node (init) [above = 0.25cm] at (mid.north) {\scriptsize $\left [ \arraycolsep=0.8pt \begin{array}{c}\mathsf{rs}\end{array} \right ]$};
      \node (ainit) [above = 0.1cm] at (init.north) {\small $\textcolor{red}{a}$};
      \node (v1)[above = 0.195cm] at (g1;.north) {$[\varnothing]$}; 
      \node (v2) [above = 0.195cm] at (p1;.north) {$[\varnothing]$};
      \node (v3)[above = 0.195cm] at (n1;.north) {$[\varnothing]$};  
      \node (v4) [above = 0.195cm] at (d1;.north) {$[\varnothing]$}; 
      \node (v5) [above = 0.29cm] at (p2;.north) {$[\varnothing]$}; 
      \node (bot1) [above = 0.14cm] at (v1.north) {\small $\bot$};
      \node (bot2) [above = 0.14cm] at (v2.north) {\small $\bot$};
      \node (bot3) [above = 0.14cm] at (v3.north) {\small $\bot$};
      \node (bot4) [above = 0.14cm] at (v4.north) {\small $\bot$};
      \node (bot5) [above = 0.14cm] at (v5.north) {\small $\bot$};
      \node (op)[base right = 5pt of p2] {$\oplus$};
      \node (op;) [right = 3pt of op] {};
      \node (p3) [right = 3pt of op;] {$\push$};
      \node (p3;) [base right = 0pt of p3] {$;$}; 
      \node (n2) [base right = 2pt of p3;] {$\nexte$};
      \node (n2;) [base right = 0pt of n2] {$;$}; 
      \node (g3) [base right = 7pt of n2;] {$\mathsf{g?}$};
      \node (g3;) [base right = 1pt of g3] {$;$}; 
      \node (p4) [base right = 2pt of g3;] {$\push$};
      \node (p4;) [base right = 0pt of p4] {$;$}; 
      \node (n3) [base right = 2pt of p4;] {$\nexte$};
      \node (n3;) [base right = 0pt of n3] {$;$}; 
      \node (d2) [base right = 2pt of n3;] {$\neg \mathsf{d?}$};
      \node (d2;) [base right = 0pt of d2] {$;$}; 
      \node (p5) [base right = 2pt of d2;] {$\pop$};
      \node (p5;) [base right = 0pt of p5] {$;$};
      \node (p6) [base right = 2pt of p5;] {$\pop$};
      \node (empty) [base right = 4pt of p6] {$[\varnothing]$};
      \node (init2) [above = 0.25cm] at (op;.north) {\scriptsize $\left [ \arraycolsep=0.8pt \begin{array}{c}\mathsf{rs}\end{array} \right ]$};
      \node (r1) [above = 0.18cm] at (p3;.north) {\scriptsize $\left [ \arraycolsep=0.8pt\begin{array}{c}\mathsf{rs}\\ \dots\end{array} \right ]$}; 
      \node (r2) [above = 0.18cm] at (n2;.north) {\scriptsize $\left [ \arraycolsep=1.2pt \begin{array}{c|c}\mathsf{rs} & \mathsf{gs} 
        \\ \dots & \dots \end{array} \right ]$}; 
      \node (r3) [above = 0.18cm] at (g3;.north) {\scriptsize $\left [ \arraycolsep=0.8pt\begin{array}{c}\mathsf{gs}\\ \dots \end{array} \right ]$}; 
      \node (r4) [above = 0.18cm] at (p4;.north) {\scriptsize $\left [ \arraycolsep=0.8pt\begin{array}{c}\mathsf{gs}\\ \dots \end{array} \right ]$}; 
      \node (e) [above = 0.20cm] at (n3;.north) {\textcolor{red}{\textbf{!}}}; 
      \node (a1) [above = 0.08cm] at (init2.north) {\small $\textcolor{red}{a}$};
      \node (a2) [above = 0.08cm] at (r1.north) {\small $\textcolor{red}{a}$};
      \node (a3)[above = 0.08cm] at (r2.north) {\small $a \lor c$};
      \node (a4) [above = 0.08cm] at (r3.north) {\small $\textcolor{blue}{c}$};
      \node (a4)[above = 0.08cm] at (r4.north) {\small $\textcolor{blue}{c}$};
    \end{tikzpicture}
    \end{center}

 \noindent
 While we can derive $\vdash_{\As} [\set{\mathsf{rs}}]\ \r_1 [\varnothing]$, the source of local incompleteness is found in the second branch, at the 
  \textcolor{red}{\textit{\textbf{!}}} position,
  where the proof obligation for
  $\mathbb{C}_{\set{\mathsf{gs}}}^{\As}(\nexte)$ fails, since
  $\alpha^s_{\ssim} \circ \sem[]{\nexte}(\set{\mathsf{gs}})=
  \alpha^s_{\ssim}(\set{\mathsf{gd, yd}}) = b \land c$, while
  $ \alpha^s_{\ssim} \circ \sem[]{\nexte} \circ \gamma^s_{\ssim} \circ \alpha^s_{\ssim}
  (\set{\mathsf{gs}}) = \alpha^s_{\ssim} \circ \sem[]{\nexte}
  \set{\mathsf{gs, gd, gs, ys}} = \alpha^s_{\ssim} (\set{\mathsf{gs, gd, gs, ys,
      rs}}) = a \lor c$.  We can thus repair the abstract domain $A$.
  Following the procedure  in~\cite{DBLP:conf/pldi/BruniGGR22}, we add the new abstract point $c_1$:
  \begin{center}
    $c_1 \triangleq \bigvee \set{T \subseteq \Sigma \mid T \subseteq
      \gamma \circ \alpha \set{\mathsf{gs}}, \ \sem[]{\nexte} T
      \subseteq \sem[]{\nexte}(\set{\mathsf{gs}})} = \set{\mathsf{gs,
        gd}}$
\end{center}
  
  The repaired domain is
  $A_1 \triangleq A \cup \set{c_1, b \land c_1}$, where
  $b \land c_1$ is also added because the domain must be closed under meets. Now, in $A_1$,  the proof obligation for
  $\mathbb{C}_{\set{\mathsf{gs}}}^{A_{1\ssim}^s}(\nexte)$ holds true:
  $\sem[]{\nexte}(\set{\mathsf{gs}}) = {\alpha_1}^s_{\ssim} \circ \sem[]{\nexte}
  \circ \gamma \circ {\alpha_1}^s_{\ssim} (\set{\mathsf{gs}}) = b \land c$. 
  In fact, in $A_1$ we just have
  $\sem[]{\PtoR{\psi}}^{\sharp}{\alpha_1}^s_{\ssim}(\set{\mathsf{rs}}) = \bot$ 
  (we omit the details for the first branch $\r_1$ that are as above): 

  \begin{center}
    \begin{tikzpicture}[xscale=0.7,yscale=0.7, every
      node/.style={font=\footnotesize}, every node/.style = {inner sep=0.8pt}]
      \node (vdash) {$\vdash_{A_{1\ssim}^s}$};
      \node (rs) [base right = 2pt of vdash] {$[\mathsf{rs}]$};
      \node (g1) [base right = 8pt of rs] {$\r_1$}; 
      \node (g1;) [base right = 2pt of g1] {}; 
      \node (init) [above = 0.25cm] at (mid.north) {\scriptsize $\left [ \arraycolsep=0.8pt \begin{array}{c}\mathsf{rs}\end{array} \right ]$};
      \node (ainit) [above = 0.1cm] at (init.north) {\small $\textcolor{red}{a}$};
      \node (v1)[above = 0.33cm] at (g1;.north) {$[\varnothing]$}; 
      \node (bot1) [above = 0.16cm] at (v1.north) {\small $\bot$};
      \node (op)[base right = 8pt of g1] {$\oplus$};
      \node (op;) [right = 4pt of op] {};
      \node (p3) [right = 4pt of op;] {$\push$};
      \node (p3;) [base right = 0pt of p3] {$;$}; 
      \node (n2) [base right = 2pt of p3;] {$\nexte$};
      \node (n2;) [base right = 0pt of n2] {$;$}; 
      \node (g3) [base right = 7pt of n2;] {$\mathsf{g?}$};
      \node (g3;) [base right = 1pt of g3] {$;$}; 
      \node (p4) [base right = 2pt of g3;] {$\push$};
      \node (p4;) [base right = 0pt of p4] {$;$}; 
      \node (n3) [base right = 2pt of p4;] {$\nexte$};
      \node (n3;) [base right = 0pt of n3] {$;$}; 
      \node (d2) [base right = 7pt of n3;] {$\neg \mathsf{d?}$};
      \node (d2;) [base right = 1pt of d2] {$;$}; 
      \node (p5) [base right = 2pt of d2;] {$\pop$};
      \node (p5;) [base right = 0pt of p5] {$;$};
      \node (p6) [base right = 2pt of p5;] {$\pop$};
      \node (p6;) [right = 0pt of p6] {};
      \node (empty) [base right = 3pt of p6] {$[\varnothing]$};
      \node (init2) [above = 0.24cm] at (op;.north) {\scriptsize $\left [ \arraycolsep=0.8pt \begin{array}{c}\mathsf{rs}\end{array} \right ]$};
      \node (r1) [above = 0.18cm] at (p3;.north) {\scriptsize $\left [ \arraycolsep=0.8pt\begin{array}{c}\mathsf{rs}\\ \dots\end{array} \right ]$}; 
      \node (r2) [above = 0.18cm] at (n2;.north) {\scriptsize $\left [ \arraycolsep=1.2pt \begin{array}{c|c}\mathsf{rs} & \mathsf{gs} 
        \\ \dots & \dots \end{array} \right ]$}; 
      \node (r3) [above = 0.18cm] at (g3;.north) {\scriptsize $\left [ \arraycolsep=0.8pt\begin{array}{c}\mathsf{gs}\\ \dots \end{array} \right ]$}; 
      \node (r4) [above = 0.18cm] at (p4;.north) {\scriptsize $\left [ \arraycolsep=0.8pt\begin{array}{c}\mathsf{gs}\\ \dots \end{array} \right ]$}; 
      \node (r5) [above = 0.18cm] at (n3;.north) {\scriptsize $\left [ \arraycolsep=1.2pt \begin{array}{c|c}\mathsf{gd} & \mathsf{yd} 
        \\ \dots & \dots \end{array} \right ]$}; 
        \node (v6)[above = 0.195cm] at (d2;.north) {$[\varnothing]$};  
        \node (v7) [above = 0.195cm] at (p5;.north) {$[\varnothing]$}; 
        \node (v8) [above = 0.29cm] at (p6;.north) {$[\varnothing]$}; 
      \node (a1) [above = 0.1cm] at (init2.north) {\small $\textcolor{red}{a}$};
      \node (a2) [above  = 0.1cm] at (r1.north) {\small $\textcolor{red}{a}$};
      \node (a3)[above = 0.1cm] at (r2.north) {\small $\a \lor c$};
      \node (a4) [above = 0.1cm] at (r3.north) {\small $c_1$};
      \node (a4)[above = 0.1cm] at (r4.north) {\small $c_1$};
      \node (a5) [above = 0.1cm] at (r5.north) {\small $b \land c$};
      \node (bot3) [above = 0.14cm] at (v6.north) {\small $\bot$};
      \node (bot4) [above = 0.14cm] at (v7.north) {\small $\bot$};
      \node (bot5) [above = 0.14cm] at (v8.north) {\small $\bot$};
    \end{tikzpicture} \qquad \qed
  \end{center}
\end{example}

  \section{Related Work}
  \label{sec:rw}
  Abstract interpretation and model checking have been applied to and
  combined with each other in several ways~\cite{DBLP:conf/tacas/ShohamG04,DBLP:reference/mc/JhalaPR18,DBLP:conf/cav/GuptaS05,DBLP:conf/cav/GrafS97,DBLP:journals/toplas/ClarkeGL94,DBLP:journals/tocl/BustanG03,DBLP:conf/tacas/BallPR01,DBLP:journals/logcom/RanzatoT07,DBLP:conf/sas/CousotGR07,DBLP:conf/cav/ClarkeGJLV00,DBLP:journals/jacm/ClarkeGJLV03,DBLP:conf/sas/SchmidtS98,DBLP:conf/sas/GiacobazziQ01,DBLP:journals/iandc/GiacobazziR06,DBLP:conf/esop/Ranzato01,DBLP:journals/toplas/DamsGG97,DBLP:journals/ase/CousotC99,DBLP:conf/sas/Masse02,DBLP:conf/sas/RanzatoT02,DBLP:conf/popl/CousotC00,DBLP:journals/iandc/GrumbergLLS07,DBLP:conf/concur/Baldan0P20,DBLP:journals/fmsd/LoiseauxGSBB95,schmidt,DBLP:conf/lics/LarsenT88,DBLP:conf/lpar/BandaG10}
  (see the survey~\cite{DBLP:reference/mc/DamsG18} for further references).
  Abstract model checking broadly refers to checking a (temporal)
  specification in
  an abstract rather than a concrete model. On the one hand, in state
  partition-based approaches, the abstract model is itself a
  transition system, possibly induced by a behavioural state
  equivalence such as (bi)simulation, so that the verification method
  for the abstract model remains unaltered, e.g.,~\cite{DBLP:conf/cav/ClarkeGJLV00,DBLP:journals/jacm/ClarkeGJLV03,DBLP:journals/toplas/ClarkeGL94,DBLP:journals/fmsd/LoiseauxGSBB95,DBLP:journals/tocl/BustanG03}.
  On the other hand, if the abstract model derives from
  any, possibly nonpartitioning, state abstraction then the verification
  on this abstract model
  relies on abstract interpretation, e.g.,~\cite{DBLP:conf/sas/GiacobazziQ01,DBLP:conf/lpar/BandaG10,DBLP:conf/sas/RanzatoT02,DBLP:journals/logcom/RanzatoT07,DBLP:conf/sas/CousotGR07,DBLP:conf/sas/Masse02}.
  Hybrid abstraction techniques, in between these two approaches,
  have also been studied, e.g., predicate abstraction~\cite{DBLP:reference/mc/JhalaPR18,DBLP:conf/cav/GrafS97}, and the
  mixed transition systems with
  several universal and/or existential abstract transition relations in~\cite{DBLP:journals/toplas/DamsGG97,BallKY05}.
  Moreover, several works investigated the role of
  complete/exact/strongly preserving abstractions  in model checking, e.g.,~\cite{DBLP:journals/fmsd/LoiseauxGSBB95,DBLP:conf/popl/CousotC00,DBLP:journals/iandc/GiacobazziR06,DBLP:conf/esop/Ranzato01},
  and how to refine abstract models, e.g.,~\cite{DBLP:conf/cav/GuptaS05,DBLP:journals/iandc/GrumbergLLS07,DBLP:conf/cav/ClarkeGJLV00,DBLP:journals/jacm/ClarkeGJLV03,DBLP:journals/ase/CousotC99,DBLP:conf/sas/CousotGR07,DBLP:conf/sas/GiacobazziQ01,DBLP:conf/sas/RanzatoT02,schmidt,DBLP:conf/tacas/ShohamG04}.
  Let us also mention that~\cite{DBLP:conf/concur/Baldan0P20} develops a
  theory of approximation for systems of fixpoint equations in the
  style of abstract interpretation, showing that up-to techniques can
  be interpreted as abstractions.

  Our refinement technique somehow resembles CounterExample-Guided
  Abstraction Refinement
  (CEGAR)~\cite{DBLP:journals/jacm/ClarkeGJLV03},
  a popular method for
  automatising the abstraction generation in model
  checking. CEGAR deals with state partition abstractions, thus
  merging sets of equivalent states: one starts from a
  rough abstraction which is iteratively refined on the basis of
  spurious counterexample traces arising due to
  over-approximation. The approach is sound for safety properties, i.e., no
  false positives can be found, and complete
  for a significant fragment
  of ACTL$^{*}$.
  It turns out that state partition abstractions are a specific
  instance in our approach. The
  attempt of proving the absence of counterexamples in LCL using an
  initial coarse abstraction will yield a computation by some means similar to
  CEGAR: failing LCL proof obligations lead to abstraction refinements
  which can be more general than partitions.

The general idea that model checking can be expressed as static analysis has been first investigated 
  in~\cite{NielsonN10}. 
  This work %
  shows that model checking of ACTL can be 
reduced to a logic-based static analysis formulated within the flow logic approach~\cite{NielsonN02},
which is then computed through a solver for the alternation-free least fixed point logic designed in~\cite{NielsonSN02}. 
This reduction technique has been later extended to the $\mu$-calculus in~\cite{ZhangNN12}.
One major goal of~\cite{NielsonN10} was to show the
close relationship and interplay between model checking and static analysis, 
coupled with early work in~\cite{Steffen91} %
and, later, in~\cite{Schmidt98,SchmidtS98,Steffen93},
proving that static program analysis can be reduced to model checking
of modal formulae.
Let us remark that the reduction of~\cite{NielsonN10,ZhangNN12} to a flow logic-based static analysis 
is given for concrete model checking only and 
does not encompass the chance of dealing with abstract model checking and related abstraction 
refinement techniques such as CEGAR, that can be instead achieved by-design in our approach.

\section{Conclusion and future work}
  \label{sec:conc}

  We have introduced a framework where model checking of temporal formulae
  in ACTL or in the universal fragment of the modal $\mu$-calculus can
  be reduced to program verification, paving the way to reuse 
  the full range of abstract interpretation techniques. 
  Formulae are mapped to programs
  of the $\MOKA$ language, that are then analysed through a
  sound-by-construction abstract interpretation. This exposes all the
  possible counterexamples, although false alarms can arise. We show
  how false alarms can be removed by inspecting the derivability of
  suitable judgements in LCL, a program logic exploiting under- and
  over-approximations leveraging the notion of locally
  complete abstract interpretation.
  We expect that our approach, relying on $\kaf$,
   naturally applies also to logics including operators based on
  regular expressions (see, e.g.,~\cite{Book:DynLog,BBL:RCTL}).  A first candidate is
  Propositional Dynamic Logic (PDL)~\cite{Book:DynLog}, a modal logic
  closely connected to $\kat$. Its distinctive feature is a modal
  operator $[\r] \varphi$ where $\r$ is a $\kat$ expression, which 
  is satisfied by a state $\sigma$ when all the computations of
  $\r$ starting from $\sigma$ end up in a state satisfying
  $\varphi$. For instance, the property expressed in the
  $\mu_\Box$-calculus in Example~\ref{ex:mu} can be equivalently
  written in PDL as $[\nexte]^3 (n=3 \to ([\nexte]^4)^* (n=3))$, where,
  since we work in an unlabelled setting, ``$\nexte$'' stands
  for the only action in the system. Indeed,  PDL smoothly fits in our
  setting (for space limitations, details are in Appendix~\ref{sec:pdl}).

  \subparagraph{Future Work.}  A number of interesting avenues of
  future research remain open. Our results are limited to universal
  fragments of temporal logics and finite state domains. A dual theory
  can be easily developed for existential fragments, focusing on the
  generation of witnesses rather than counterexamples. It would be
  interesting to combine the two
  approaches for dealing with universal and existential operators at
  the same time. Some ideas could come
  from~\cite{DBLP:journals/toplas/DamsGG97} that, for solving the
  problem, works with two different abstract transition relations.
  The restriction to finite state domains is due to the fact that
  our encoding of logical formulae into $\MOKA$ programs relies on the
  $\loops$ operator for detecting infinite traces which are
  identified with looping traces.
  Further work could overcome this restriction by considering an
  encoding that captures non-looping infinite traces in the concrete
  domain and by exploiting ACC domains for the abstraction.

  The use of LCL allows us to track the presence of false alarms back to
  the failure of local completeness proof obligations, which can be resolved by
  refining the abstract domain. This can be done at different levels: either
  refining the abstraction over states or refining the equivalence on
  abstract frames. A proper theory of refinements, possibly
  identifying optimal ways of patching the domain,
  is a matter of future investigations.

 We point out that providing a general bound for the complexity
    of the abstract model checking procedure is not straightforward,
    as it crucially depends on the choice of the abstract domain.
    Different domains may induce significantly different behaviors,
    especially for non-partitioning abstractions or when local
    completeness and domain refinement techniques are applied.  A
    precise complexity analysis tailored to specific domains is an
    interesting subject for future work.
  
  Concerning the automatisation, the abstract interpreter could be
  easily implementable leveraging the standard toolset of abstract
  interpretation. %
  The abstract interpreter should be instrumented to report, when the
  result is not $\bot$, an abstract counterexample trace %
  to check whether the counterexample is a false or true alarm.  Making
  refinements effective requires working in a class of domains where
  local refinements are representable, e.g., by predicate abstractions
  or state partition abstractions. Developing a theory of refinements within
  specific subclasses of domains is a direction of %
  future work.

  \begin{toappendix}
    \section{Propositional Dynamic Logic}
    \label{sec:pdl}
Propositional Dynamic Logic (PDL) is a modal logic that allows 
one to reason about propositional properties of programs~\cite{Book:DynLog}. 
As we did for the other logics, we restrict here to universal fragment, and we thus define 
three syntactic categories: the formulae $\varphi$ that we verify (universal formulae), 
the programs $\r$ that act on a transition system 
$(\Sigma, \operatorname{I}, \mathbf{P}, \trel, \vdash)$, 
and the formulae $\psi$ that can be nested inside programs (existential formulae). 
Thus,
given a set of atomic propositions $p\in \mathbf{P}$:
\begin{center}
    \begin{tabular}{r@{\hspace{2pt}}c@{\hspace{2pt}}l}
  $\varphi $ & $::=$ & $\true \mid  \false \mid p \mid \neg p\mid
  \varphi_1 \lor \varphi_2 \mid
  \varphi_1 \land \varphi_2 \mid
  [\r] \varphi_1$ \\
  $\r  $ & $::=$ & $  \nexte \mid \r_1; \r_2 \mid \r_1 \oplus \r_2 \ | 
  \ \r_1^* \mid \psi? $ \\
  $\psi$ & $ ::= $ & $  \true \mid \false \mid p \mid \neg p \mid 
  \psi_1 \lor \psi_2 \mid
  \psi_1 \land \psi_2 \mid 
  \langle \r \rangle \psi_1 $ 
    \end{tabular}
\end{center}
As usual, $p \to \varphi'$ is syntactic sugar for
$\neg p \lor \varphi'$.

\begin{definition}[PDL semantics]\label{PDL-semantics}
    Given a transition system
    $T = (\Sigma, \operatorname{I}, \mathbf{P}, \trel, \vdash)$, the
    semantics %
    $\sem[]{\zeta} \subseteq \Sigma$ of
    {\rm \text{PDL}} formulae, for $\zeta \in \set{\varphi, \psi}$ 
    and of $\sem[]{\r}: \Sigma \to \pow{\Sigma}$ 
    programs over $T$ is as follows:
    \begin{center}
      \begin{tabular}{r@{\hspace{2pt}}c@{\hspace{2pt}}l@{\hspace{40pt}}r@{\hspace{2pt}}c@{\hspace{2pt}}l}
        $\sem[]{\true}$ & $\triangleq$ & $\Sigma$ &
        $\sem[]{\false}$ & $\triangleq$ & $\varnothing$ \\
        $\sem[]{p}$ & $\triangleq$ & $\set{\sigma \in \Sigma \mid
        \sigma \vdash p}$ & $\sem[]{\neg p}$ & $\triangleq$ & $\set{\sigma \in \Sigma \mid
        \sigma \vdash \neg p}$\\ 
        $\sem[]{\zeta_1 \lor \zeta_2}$ & $\triangleq$ &
      $\sem[]{\zeta_1} \cup \sem[]{\zeta_2}$ &
      $\sem[]{\zeta_1 \land \zeta_2}$  & $\triangleq$ &
      $\sem[]{\zeta_1} \cap \sem[]{\zeta_2}$ \\
      $\sem[]{[\r_1]\varphi_1}$ & $\triangleq$ & $\set{\sigma \mid \sem[]{\r}\sigma \subseteq \sem[]{\varphi_1}}$ &
      $\sem[]{\langle\r_1\rangle\varphi_1}$ & $\triangleq$ & $\set{\sigma \mid \sem[]{\r}\sigma \cap \sem[]{\varphi_1} \neq \varnothing}$ \\
      $\sem[]{\nexte}(\sigma)$ & $\triangleq$ & $\set{\sigma' \mid \sigma \trel \sigma'}$ \\
      $\sem[]{\r_1 ; \r_2}(\sigma)$ & $\triangleq$ & $\sem[]{\r_2}\left ( \sem[]{\r_1}(\sigma)\right )$ &
      $\sem[]{\r_1 \oplus \r_2}(\sigma)$ & $\triangleq$ & $\sem[]{\r_1}(\sigma) \cup \sem[]{\r_2}(\sigma)$ \\
      $\sem[]{\mathsf{r_1^*}}(\sigma)$ & $\triangleq$ & $ \bigvee \set{\sem[]{\mathsf{r_1}}^k(\sigma) \mid k \in \N}$ &
      $\sem[]{\psi?}(\sigma)$ & $\triangleq$ & $ \set{\sigma} \cap \sem[]{\psi}$
      \end{tabular}
    \end{center}
  \end{definition}

\subparagraph{PDL as $\MOKA$.}
  To each ACTL formula $\varphi$ we assign a $\MOKA$ program
  as follows: to each formula $\varphi$ we associate a 
  program $\PtoR{\varphi}$ which computes \textit{counterexamples} to $\varphi$, while to 
  each formula $\psi$ we associate a program $\PtoE{\psi}$ that computes \textit{examples} to 
  $\psi$. Since the logic is based on $\kat$, the encoding of programs $\r$ is just the identity\footnote{In the sense that 
  it is the program on our domain of stacks with the same name, and we overload the symbols.}, 
  with $\PtoE{\psi?} = \PtoE{\psi}$. 
  \begin{center}
    \begin{tabular}{r@{\hspace{2pt}}c@{\hspace{2pt}}l@{\hspace{50pt}}r@{\hspace{2pt}}c@{\hspace{2pt}}l}
       $\PtoR{\,\true}$ & $\triangleq$ & $\zero$ &  $\PtoE{\,\true}$ & $\triangleq$ & $\one$ \\ 
       $\PtoR{\,\false}$ & $\triangleq$ & $\one$ & $\PtoE{\,\false}$ & $\triangleq$ & $\zero$ \\ 
      $\PtoR{\,p}$ & $\triangleq$ & $\np$ & $\PtoE{\,p}$ & $\triangleq$ & $\p$ \\
      $\PtoR{\,\neg p}$ & $\triangleq$ & $\p$ & $\PtoE{\,\neg p}$ & $\triangleq$ & $\np$ \\
      $\PtoR{\varphi_1\land \varphi_2}$ & $\triangleq$ &
      $\PtoR{\varphi_1} \oplus \PtoR{\varphi_2}$
      &  $\PtoE{\psi_1\land \psi_2}$ & $\triangleq$ &
      $\PtoE{\psi_1}; \PtoE{\psi_2}$
      \\
      $\PtoR{\varphi_1\lor \varphi_2}$ & $\triangleq$ & 
      $\PtoR{\varphi_1} ; \PtoR{\varphi_2}$ 
      &  $\PtoE{\psi_1\lor \psi_2}$ & $\triangleq$ & 
      $\PtoE{\psi_1} \oplus \PtoE{\psi_2}$ \\
      $\PtoR{[\r]\varphi_1}$ & $\triangleq$ & $\push; \PtoE{\r}; 
      \PtoR{\varphi_1}; \pop$ 
      & $\PtoE{\langle\r\rangle\psi_1}$ & $\triangleq$ & $\push; \PtoE{\r}; 
       \PtoE{\psi_1}; \pop$ \\
    \end{tabular}
  \end{center}

  \begin{theorem}[Formula satisfaction as program verification - PDL version]
    \label{th:counterexamples-pdl}
    Given 
      $T = (\Sigma, \operatorname{I}, \mathbf{P}, \trel, \vdash)$, for all frames $\frame{\sigma, \Delta}\concat S \in
    \str{\operatorname{F}_{\Sigma}}$,
    for any \emph{PDL} formula $\varphi$
    \begin{equation*}
      \sem[]{\PtoR{\varphi}} \set{\frame{\sigma, \Delta}\concat S} =
      \set{\frame{\sigma, \Delta}\concat S \mid \sigma \notin
        \sem[]{\varphi}}
    \end{equation*}
  \end{theorem}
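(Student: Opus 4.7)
The plan is to prove this by simultaneous structural induction over the three mutually recursive syntactic categories of PDL (universal formulae $\varphi$, programs $\r$, and existential formulae $\psi$). Since $\PtoR{\cdot}$ for universal formulae is defined in terms of $\PtoE{\cdot}$ for programs and existential formulae (via the $[\r]\varphi_1$ clause), and $\PtoE{\cdot}$ for programs uses $\PtoE{\cdot}$ for existential tests $\psi?$ (which in turn uses $\PtoE{\cdot}$ for existential formulae, whose modal clause $\langle\r\rangle\psi_1$ reintroduces programs), the theorem statement must be strengthened with two companion lemmas that are proved jointly:
\begin{align*}
\sem[]{\PtoE{\psi}}\set{\frame{\sigma,\Delta}\concat S}
  &= \set{\frame{\sigma,\Delta}\concat S \mid \sigma \in \sem[]{\psi}}, \\
\sem[]{\PtoE{\r}}\set{\frame{\sigma,\Delta}\concat S}
  &= \set{\frame{\sigma',\Delta}\concat S \mid \sigma' \in \sem[]{\r}(\sigma)}.
\end{align*}
Note the different shape of these two invariants: universal/existential formula encodings act as \emph{filters} on the current state, whereas the program encoding \emph{rewrites} the current state to the set of $\r$-reachable ones, leaving the trace component $\Delta$ and the tail $S$ untouched (this follows from the fact that the only basic expression occurring in $\PtoE{\r}$ besides $\nexte$ comes from tests $\psi?$, which filter).

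The base cases $\true,\false,p,\neg p$ are immediate from $\bsem{\one},\bsem{\zero},\bsem{\p},\bsem{\np}$, and the Boolean cases follow from the duality between counterexample and example encodings: for $\PtoR{\cdot}$, disjunction becomes sequencing (one has to fail both disjuncts) and conjunction becomes $\oplus$ (failing either conjunct suffices), while for $\PtoE{\cdot}$ these roles are swapped. For the program encoding, I would unfold the $\kaf$ semantics: $\nexte$ is by definition, $\r_1;\r_2$ and $\r_1\oplus\r_2$ follow by induction using additivity of $\sem[]{\cdot}$, and the Kleene star $\sem[]{\PtoE{\r_1^*}} = \bigvee_k \sem[]{\PtoE{\r_1}}^k$ lines up with $\sem[]{\r_1^*}(\sigma)=\bigcup_k \sem[]{\r_1}^k(\sigma)$. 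The test case $\PtoE{\psi?}=\PtoE{\psi}$ matches $\sem[]{\psi?}(\sigma)=\set{\sigma}\cap\sem[]{\psi}$ directly by the induction hypothesis on $\psi$, which explains why $\PtoE{\psi?}$ produces the current state if it satisfies $\psi$ and the empty set otherwise---precisely the filter behaviour.

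The two modal cases are structurally identical up to swapping filters. For $[\r]\varphi_1$, the program $\push;\PtoE{\r};\PtoR{\varphi_1};\pop$ on input $\set{\frame{\sigma,\Delta}\concat S}$ first duplicates the top frame to $\set{\frame{\sigma,\Delta}\concat\frame{\sigma,\Delta}\concat S}$; by the program lemma, $\PtoE{\r}$ rewrites the topmost state yielding $\set{\frame{\sigma',\Delta}\concat\frame{\sigma,\Delta}\concat S\mid \sigma'\in\sem[]{\r}(\sigma)}$; by the induction hypothesis on $\varphi_1$, the filter $\PtoR{\varphi_1}$ keeps exactly those frames whose top state violates $\varphi_1$; finally $\pop$ discards the (possibly multiple) surviving top frames, returning the original frame $\frame{\sigma,\Delta}\concat S$ if and only if some $\sigma'\in\sem[]{\r}(\sigma)$ satisfies $\sigma'\notin\sem[]{\varphi_1}$, i.e.\ iff $\sigma\notin\sem[]{[\r]\varphi_1}$. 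An analogous argument works for $\langle\r\rangle\psi_1$.

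The main obstacle will be the bookkeeping of the well-founded ordering that justifies the mutual induction: since $\r$ can contain $\psi?$, $\psi$ can contain $\langle\r\rangle\psi_1$, and $\varphi$ can contain $[\r]\varphi_1$, I need a measure (e.g.\ total syntactic size over the combined grammar) on which each inductive appeal is strictly smaller. A secondary subtlety is ensuring the program lemma is correctly phrased so that $\PtoE{\r}$ does not touch the trace component $\Delta$: this requires checking that no $\add,\reset,\push,\pop$ occurs in $\PtoE{\r}$ except for those generated inside tests $\psi?$ via $\PtoE{\psi}$, and that the inductive assumption on $\psi$ preserves $\Delta$ since it is a pure filter---both properties are maintained by the invariants above.
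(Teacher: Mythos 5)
Your proposal is correct and follows essentially the same route as the paper's proof: the paper likewise first establishes exactly your two companion lemmas for $\PtoE{\psi}$ (filter) and $\PtoE{\r}$ (state rewriting preserving $\Delta$ and $S$) by mutual induction on $\psi$ and $\r$, and then proves the main claim by induction on $\varphi$, with the $[\r]\varphi_1$ case unfolding $\push;\PtoE{\r};\PtoR{\varphi_1};\pop$ precisely as you describe.
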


  \begin{proof}
    In order to prove the previous theorem we first need to prove that: 
    for any existential formula $\psi$ and command $\r$, as defined above, and for $\set{\frame{\sigma, \Delta}\concat S}
    \in  \pow{\operatorname{F}_{\Sigma}}$, then
  \begin{align}
    \sem[]{\PtoE{\psi}}\set{\frame{\sigma, \Delta}\concat S} & = 
      \set{\frame{\sigma, \Delta}\concat S \mid \sigma \in \sem[]{\psi}} \label{eq:example1} \\
      \sem[]{\PtoE{\r}}\set{\frame{\sigma, \Delta}\concat S} & = 
      \set{\frame{\sigma', \Delta}\concat S \mid \sigma' \in \sem[]{\r}(\sigma)} \label{eq:examples}
  \end{align}

  We proceed by induction on the structure of $\psi$ and $\r$. For $\psi$, 
all cases are straightforward with the exception of $\psi = \langle\r\rangle\psi_1$. 
We assume the inductive hypotheses:
\begin{align*}
  \forall \sigma,\Delta,S~\sem[]{\PtoE{\r_i}}
  \set{\frame{\sigma,
  \Delta}\concat S} & = \set{\frame{\sigma', \Delta}\concat S \mid
  \sigma'\in \sem[]{\r_i}(\sigma)} \\ 
  \sem[]{\PtoE{\psi_i}}
  \set{\frame{\sigma,
  \Delta}\concat S} & = \set{\frame{\sigma, \Delta}\concat S \mid
  \sigma\in \sem[]{\psi_i}} %
\end{align*} 
\begin{align*}
  \sem{\PtoE{\langle\r\rangle\psi_1}}\set{\frame{\sigma, \Delta}\concat S}
  & =
  \sem{\push;\PtoE{\r};\PtoE{\psi_1};\pop}\set{\frame{\sigma,
  \Delta}\concat S} \\
  & =
  \sem{\PtoE{\r};\PtoE{\psi_1};\pop}\set{
    \frame{\sigma,\Delta}\concat\,
    \frame{\sigma,\Delta}\concat S} \\
  & = \sem[]{\PtoE{\psi_1};\pop}\left ( \sem[]{\PtoE{\r}}\set{
    \frame{\sigma,\Delta}\concat\,
    \frame{\sigma,\Delta}\concat S} \right ) \\
  & \overset{\heartsuit}{=} \sem[]{\PtoE{\psi_1};\pop}\set{
    \frame{\sigma',\Delta}\concat\,
    \frame{\sigma,\Delta}\concat S \mid \sigma' \in \sem[]{\r}(\sigma)}\\
  & \overset{\clubsuit}{=} \sem[]{\pop}\set{
    \frame{\sigma',\Delta}\concat\,
    \frame{\sigma,\Delta}\concat S \mid \sigma' \in \sem[]{\r}(\sigma), \sigma' \in \sem[]{\psi_1}} \\ 
  & = \sem[]{\pop}\set{
    \frame{\sigma',\Delta}\concat\,
    \frame{\sigma,\Delta}\concat S \mid \sigma' \in \sem[]{\r}(\sigma) \cap \sem[]{\psi_1}} \\ 
    & = \set{\frame{\sigma,\Delta}\concat S \mid \exists \sigma' \in \sem[]{\r}(\sigma) \cap \sem[]{\psi_1}} %
\end{align*}
where at $(\heartsuit)$ we used the inductive hypothesis for $\r$ and at $(\clubsuit)$ we used the inductive hypothesis for $\psi_1$. 

Consider now the $\r$ cases: 
\subparagraph*{($\r=\nexte$)} By definition: \begin{align*}
  \sem[]{\PtoE{\nexte}}\set{\frame{\sigma, \Delta}\concat S} 
  & = \sem[]{\nexte}\set{\frame{\sigma, \Delta}\concat S} \\ 
  & = \set{\frame{\sigma', \Delta}\concat S \mid \sigma \trel \sigma'} \\
  & = \set{\frame{\sigma', \Delta}\concat S \mid \sigma' \in \sem[]{\nexte}(\sigma)}
\end{align*}
\subparagraph*{($\r=\r_1; \r_2$)} 
\begin{align*}
  \sem[]{\PtoE{\r_1; \r_2}}\set{\frame{\sigma, \Delta}\concat S} 
  & = \sem[]{\r_1; \r_2}\set{\frame{\sigma, \Delta}\concat S} \\ 
  & = \sem[]{\r_2} \left ( \sem[]{\r_1}\set{\frame{\sigma, \Delta}\concat S} \right ) \\
  & \overset{\heartsuit}{=} \sem[]{\r_2}\set{\frame{\sigma', \Delta}\concat S \mid \sigma'\in \sem[]{\r_1}(\sigma)} \\
  & \overset{\clubsuit}{=} \set{\frame{\sigma'', \Delta}\concat S \mid \sigma''\in \sem[]{\r_2}(\sem[]{\r_1}(\sigma))} \\
  & = \set{\frame{\sigma'', \Delta}\concat S \mid \sigma''\in \sem[]{\r_1; \r_2}(\sigma)} %
\end{align*}
where at $(\heartsuit)$ we used the inductive hypothesis for $\r_1$ and
at $(\clubsuit)$ we used the inductive hypothesis for $\r_2$.
\subparagraph*{($\r=\r_1 \oplus \r_2$)} Analogous to the case $\r=\r_1; \r_2$. 
\subparagraph{($\r = \r_1^*$)} Follows from the fact that $\r_1^{k+1} = \r_1; \r_1^{k}$. 
\subparagraph{($\r = \psi?$)} 
\begin{align*}
  \sem[]{\PtoE{\psi?}}\set{\frame{\sigma, \Delta}\concat S} 
  &\overset{\clubsuit}{=} \sem[]{\PtoE{\psi}}\set{\frame{\sigma, \Delta}\concat S} \\ 
  & = \set{\frame{\sigma, \Delta}\concat S \mid \sigma\in \sem[]{\psi}} \\
  & = \set{\frame{\sigma, \Delta}\concat S \mid \sigma\in \set{\sigma} \cap \sem[]{\psi}} \\
  & = \set{\frame{\sigma, \Delta}\concat S \mid \sigma\in \sem[]{\psi?}(\sigma)} %
\end{align*}
where at $(\clubsuit)$ we used the inductive hypothesis.

\bigskip
  We can now get back to the proof of the theorem:
  \begin{equation*}
    \sem[]{\PtoR{\varphi}} \set{\frame{\sigma, \Delta}\concat S} =
    \set{\frame{\sigma, \Delta}\concat S \mid \sigma \notin
    \sem[]{\varphi}}
  \end{equation*}
  We proceed by structural induction on $\varphi$. All cases are straightforward with 
  the exception of $\varphi = [\r]\varphi_1$.  
  We assume the inductive hypothesis:
      \begin{equation*}
        \forall \sigma,\Delta,S.~\sem{\PtoR{\varphi_1}}
        \set{\frame{\sigma,
        \Delta}\concat S} = \set{\frame{\sigma, \Delta}\concat S \mid
        \sigma \notin \sem[]{\varphi_1}}
      \end{equation*}
      \begin{align*}
        \sem{\PtoR{[\r]\varphi_1}}\set{\frame{\sigma, \Delta}\concat S}
        & =
        \sem{\push;\PtoE{\r};\PtoR{\varphi_1};\pop}\set{\frame{\sigma,
        \Delta}\concat S} \\
        & =
        \sem{\PtoE{\r};\PtoR{\varphi_1};\pop}\set{
          \frame{\sigma,\Delta}\concat\,
          \frame{\sigma,\Delta}\concat S} \\
        & = \sem[]{\PtoR{\varphi_1};\pop}\left ( \sem[]{\PtoE{\r}}\set{
          \frame{\sigma,\Delta}\concat\,
          \frame{\sigma,\Delta}\concat S} \right ) \\
        & \overset{\heartsuit}{=} \sem[]{\PtoR{\varphi_1};\pop}\set{
          \frame{\sigma',\Delta}\concat\,
          \frame{\sigma,\Delta}\concat S \mid \sigma' \in \sem[]{\r}(\sigma)}\\
        & \overset{\clubsuit}{=} \sem[]{\pop}\set{
          \frame{\sigma',\Delta}\concat\,
          \frame{\sigma,\Delta}\concat S \mid \sigma' \in \sem[]{\r}(\sigma), \sigma' \notin \sem[]{\varphi_1}} \\ 
          & = \set{\frame{\sigma,\Delta}\concat S \mid \exists \sigma' \in \sem[]{\r}(\sigma), \sigma' \notin \sem[]{\varphi_1}} \\ 
          & = \set{\frame{\sigma,\Delta}\concat S \mid \sem[]{\r}(\sigma) \centernot\subseteq \sem[]{\varphi_1}} 
      \end{align*}    
      where $(\heartsuit)$ is justified by~\eqref{eq:examples} and at $(\clubsuit)$ we used the inductive hypothesis.
\end{proof}
\end{toappendix}

  \bibliography{biblio}
  
  \end{document}